  \newtheorem{theorem}{Theorem}
\newtheorem{lem}{Lemma}
\newtheorem{corollary}{Corollary}
\newtheorem{proposition}{Proposition}
\newtheorem{definition}{Definition}
\newtheorem{example}{Example}
\def\etal{\textit{et al.}}
\newtheorem{remark}{Remark}
\providecommand{\customgenericname}{}
\newcommand{\newcustomtheorem}[2]{%
  \newenvironment{#1}[1]
  {%
   \renewcommand\customgenericname{#2}%
   \renewcommand\theinnercustomgeneric{##1}%
   \innercustomgeneric
  }
  {\endinnercustomgeneric}
}
\def\old@comma{,}
    \old@comma\discretionary{}{}{}%
\newcommand\numberthis{\addtocounter{equation}{1}\tag{\theequation}}
\definecolor{darkblue}{rgb}{0.1,0.1,0.8}
\definecolor{DarkGreen}{rgb}{0,0.6,0}
\definecolor{brickred}{rgb}{0.8, 0.25, 0.33}
\definecolor{britishracinggreen}{rgb}{0.0, 0.26, 0.15}
\definecolor{calpolypomonagreen}{rgb}{0.12, 0.3, 0.17}
\definecolor{ao(english)}{rgb}{0.0, 0.5, 0.0}
	\definecolor{cadmiumgreen}{rgb}{0.0, 0.42, 0.24}
\definecolor{burgundy}{rgb}{0.5, 0.0, 0.13}
\newcommand{\addv}[3]{%
	\iftoggle{Track}{%
    	\IfEqCase{#1}{%
       	 	{a}{\ifthenelse{\equal{#2}{ON}}{{\color{cadmiumgreen}#3}}{#3}}%
        	{b}{\ifthenelse{\equal{#2}{ON}}{{\color{brickred}#3}}{#3}}%
       		{c}{\ifthenelse{\equal{#2}{ON}}{{\color{burgundy}#3}}{#3}}%
    	}[\PackageError{tree}{Undefined option to tree: #1}{}]%
	}{#3}%
}
\newcounter{relctr} 
\everydisplay\expandafter{\the\everydisplay\setcounter{relctr}{0}} 
\global\long\def\RR{\mathbb{R}}
\global\long\def\NN{\mathbb{N}}
\global\long\def\ZZ{\mathbb{Z}}
\global\long\def\EE{\mathbb{E}}
\global\long\def\PP{\mathbb{P}}
\global\long\def\11{\mathbbm{1}}
\newcommand{\bfa}{\mathbf{a}}
\newcommand{\bfy}{\mathbf{y}}
\newcommand{\bfz}{\mathbf{z}}
\newcommand{\bfX}{\mathbf{X}}
\newcommand{\bfZ}{\mathbf{Z}}
\newcommand{\CAL}[1]{\mathcal{#1}} 
\newcommand{\CF}{\CAL{F}}
\newcommand{\CY}{\CAL{Y}}
\global\long\def\+{\oplus}
\newcommand{\prob}[1]{\PP\qty(#1) }
\def\<{\langle}
\def\>{\rangle}
  \renewcommand{\var}{\mathsf{var}}
  \newcommand{\var}{\mathsf{var}}
 \newcommand{\abs}[1]{\lvert#1\rvert}
 \newcommand{\norm}[1]{\lVert#1\rVert}
  \renewcommand{\set}[1]{\left\{#1\right\}}
  \newcommand{\set}[1]{\left\{#1\right\}}
\newcommand*{\medcap}{\mathbin{\scalebox{1}{\ensuremath{\bigcap}}}}%
\newcommand*{\medcup}{\mathbin{\scalebox{1}{\ensuremath{\bigcup}}}}%
\DeclareMathOperator*{\argmax}{arg\,max}
\def\deq{\mathrel{\ensurestackMath{\stackon[1pt]{=}{\scriptstyle\Delta}}}}
\def\PMAC{\mathcal{P}_{MAC}}
\def\CMAC{\mathcal{C}_{FB}^{VLC}}
\def\Tleps{{\tau_{\epsilon}}}
\def\Tueps{{\tau^{\epsilon}}}
\def\Tlepstld{{\tilde{\tau}_{\epsilon}}}
\def\Tuepstld{{\tilde{\tau}^{\epsilon}}}
\def\Tuepstldi{{\tilde{\tau}^{\epsilon,i}}}
\def\Tlepstldi{{\tilde{\tau}_{\epsilon}^i}}
\def\Zprune{L_n}
\def\Hbar{{\bar{H}}}
\def\Htld{{\tilde{H}}}
\def\J{{J}}
\def\D{{D}}
\def\Pdata{P}
\def\Eo{E_{o,\Pdata}}
\def\Elb{E_{l, \text{3-phase}}}
\newcommand{\xHT}[1]{\underline{x}_{#1, 3}}
\newcommand{\xDHT}[1]{x_{#1,2}}
\newcommand{\Qbar}[1]{\bar{Q}^{#1}}
\newcommand{\Qbarp}[1]{\bar{Q'}^{#1}}
\newcommand{\DQbar}[1]{D_{\Qbar{#1}}}
\newcommand{\DHT}[1]{\bar{D}_{\mathsf{P}_3}(00 \| #1)}
\newcommand{\ILP}[1]{I_L^{#1}(\Pdata)}
\def\xstar{x^*_{1,r}}
\def\Xstar{\xstar} 
\def\wstar{w^*_{1,r}}
\def\Wstar{\wstar}
\newacro{ptp}[PtP]{Point-to-Point}
\newacro{iid}[i.i.d.]{independent and identically distributed} 
\newacro{IID}[i.i.d.]{independent and identically distributed} 
\newacro{DMC}[DMC]{discrete memoryless channel}
\newacro{MAC}[MAC]{multiple access channel}
\newacro{VLC}[VLC]{variable length code}
\newacro{wrt}[w.r.t]{with respect to}
\newacro{AVC}[AVC]{arbitrary varying channel}
\newcommand{\addva}[1]{\addv{a}{off}{#1}}
\newcommand{\addvb}[1]{\addv{b}{off}{#1}}
\newenvironment{IEEEproof}[1][\proofname]{%
  \proof[\bfseries #1]%
}{\endproof}
\begin{document}

\begin{frontmatter}
\title{On The Reliability Function of Discrete Memoryless Multiple-Access Channel with Feedback}
\runtitle{Reliability Function MAC-FB}

\begin{aug}
\author[A]{\fnms{Mohsen}~\snm{Heidari}\ead[label=e1]{mheidar@iu.edu}},
\author[B]{\fnms{Achilleas}~\snm{Anastasopoulos}\ead[label=e2]{anastas@umich.edu}}\\
\and
\author[B]{\fnms{S. Sandeep}~\snm{Pradhan}\ead[label=e3]{pradhanv@umich.edu}}
\address[A]{CS Department, Indiana University, Bloomington\printead[presep={,\ }]{e1}}
\address[B]{Department of EECS, University of Michigan, Ann Arbor\printead[presep={,\ }]{e2}\printead[presep={,\ }]{e3}}

\end{aug}

\begin{abstract}
 The reliability function of a channel is the maximum achievable exponential rate of decay  of the error probability as a function of the transmission rate.   In this work, we derive bounds on  the reliability function of discrete memoryless multiple-access channels (MAC) with noiseless feedback.    We show that our bounds are tight for a variety of MACs, such as $m$-ary additive and two independent point-to-point channels. The bounds are expressed in terms of a new information measure called ``variable-length directed information". The upper bound is proved by analyzing  stochastic processes defined based on the entropy of
the message, given the past channel's outputs. Our method relies on tools from the theory of martingales, variable-length information measures, and a new technique called time pruning. 
We further propose a  variable-length achievable scheme consisting of three phases: (i) data transmission, (ii) hybrid data-confirmation, and (iii) full confirmation. We show that  two-phase-type schemes are strictly suboptimal in achieving the  MAC's reliability function. Moreover, we study the shape of the lower-bound and show that it increases linearly with respect to a specific Euclidean distance measure defined between the transmission rate pair and the capacity boundary. As side results, we derive an upper bound on the capacity of MAC with noiseless feedback and study a new problem involving a hybrid of hypothesis testing and data transmission. 
\end{abstract}

\begin{keyword}[class=MSC]
\kwd[Primary ]{Information Theory}
\kwd{Reliability Function, Multiple Access Channel}
\kwd[; secondary ]{Martingales, Drift Analysis}
\end{keyword}


\end{frontmatter}


\section{Introduction}
Channel capacity determines the maximum transmission rate for communications with vanishing error probability. The reliability function determines the maximum exponential decay rate of error probability as a function of the transmission rate. It measures the trade-off between  reliability and delay in communication. The reliability function has been studied extensively since the dawn of information theory \cite{Gallager,berlekamp1964block,Shannon1967,Burnashev,Csiszar2011}. However, its single-letter characterization for elementary setups such as \acp{DMC} is still an open problem. At this point, several bounds exist, such as the random-coding lower and sphere-packing upper bound. The results are promising for other settings where the channel output is available as feedback to the encoder. 
In a seminal work, Burnashev \cite{Burnashev} completely characterized the error exponent of \acp{DMC}  with noiseless and casual feedback. This characterization has a simple yet intuitive form:
\begin{align}\label{eq: E(R) error exponent }
E(R)=C_1\Big(1-\frac{R}{C}\Big) ,
\end{align}
where $R$ is the (average) transmission rate, $C$ is the channel's capacity, and $C_1$ is  the maximal relative entropy between conditional output distributions. Intuitively, $C_1$ is the maximum exponent for binary hypothesis testing over the channel. The Burnashev's exponent can significantly exceed the sphere-packing exponent for no-feedback communications as it approaches capacity with a nonzero slope. These improvements are obtained using \acp{VLC}, where the communication length depends on the channel's realizations.  

Yamamoto and Itoh \cite{Yamamoto} introduced a \ac{VLC} scheme achieving the error exponent in \eqref{eq: E(R) error exponent }. Their scheme consists of two distinct transmission phases, the data and the confirmation phase. In the data phase, the message is encoded using a capacity-achieving fixed block length code.  During the confirmation phase, the transmitter (who knows if the codeword was correctly decoded at the receiver through noiseless feedback)
sends one bit of information to the receiver indicating if the decoded message is correct. The decoder performs a binary hypothesis test to decide if $0$ or $1$ was transmitted.

The reliability function has been studied for various communication setups such as channels with states and/or memory \cite{Como2009,Tchamkerten2005,Tchamkerten2006,Tatikonda2009,Heidari2022}, multi-user channels with or without feedback \cite{Gallager,Nazari2014,Nazari2015,Heidari2018,Farkas2014}, and other setups \cite{Tamir2021}.  For a comprehensive survey of the reliability function, see \cite{Haroutunian2007} and references therein.

Among such setups, this paper studies the reliability function of \acp{MAC} with noiseless feedback. There are several bounds on the reliability function of \ac{MAC} without feedback \cite{Nazari2015,Csiszar2021}. Given the promising derivation of Burnashev for \ac{DMC}, the reliability function of \ac{MAC} with feedback is an important problem \cite{Heidari2018}.

In the context of communications over \ac{MAC}, the benefits of feedback are more prominent than \ac{DMC}. For instance, Gaarder and Wolf \cite{Gaarder-Wolf} showed that feedback  expands the capacity region even using fixed-length codes. Willems \cite{Willems-FB} derived the feedback-capacity region for a class of MACs, and Kramer \cite{Kramer-thesis} derived a multi-letter characterization of the feedback-capacity region of two-user MAC with feedback using fixed-length codes. The expressions are given in terms of  \textit{directed information}, introduced by Massy \cite{Massey1990} and Marko \cite{Marko1973}. Single-letter characterizations of the capacity region and the error exponent for general MACs are still open problems. 

\subsection{Summary of the Contributions}
While the focus in the literature  has been on fixed-length coding schemes for communications over MAC, in this paper, we study capacity and reliability function with \acp{VLC} for communications over MAC. We derive lower and upper bounds on the reliability function of \ac{MAC} with feedback and provide an upper bound on its feedback capacity when \acp{VLC} are permitted. Below we highlight some of the main contributions of this work.  

\paragraph*{MAC Feedback Capacity} We introduce an outer bound on the feedback capacity of \acp{VLC} for communications over MAC. We formulate a new information measure for variable-length sequences of random variables called ``variable-length directed information." Using this information measure, we derive our outer bound on the capacity region  and show that it subsumes existing bounds of Kramer \cite{Kramer-thesis}, Tatikonda and Mitter \cite{Tatikonda2009}, and Permuter \etal \cite{Permuter2009}  where fixed-length codes are studied.

\paragraph*{Upper bound on the reliability function}
We derive an upper bound on the reliability function of two-user \ac{MAC} with noiseless feedback. The bound is expressed in terms of our ``variable-length" mutual information and relative entropy. 
As a corollary, we show that this bound reduces to Burnashev's expression when the channel consists of two parallel DMCs.  Our approach relies on analyzing the entropy of the stochastic process defined based on the entropy of the message given the past channel's output. This method has been studied in our earlier works  \cite{Heidari2018,Heidari2022}. In this work, we take a more comprehensive approach by considering five different entropies, involving individual, conditional, and joint entropy of the messages for the two-user MAC problem. 

The entropies at time zero are equal to the logarithm of the message sizes. As time goes on with more channel outputs, the entropies decrease linearly (in expectation) to a sufficiently small threshold $\epsilon>0$. After a transient phase, the entropies drift  logarithmically. We derive the upper bound by analyzing the rate of the drift of these entropies. We bound the slope of the linear and logarithmic drifts in terms of variable-length mutual information and relative entropy, respectively. For that, we use tools from the theory of martingales and variable-length information measures. 

One of the major technical challenges in our proofs is the transient period from linear to logarithmic drifts. We address the transient period by proposing a new technique called ``time pruning". In the conventional stochastic process $\{X_t\}_{t>0}$, time $t$ increases linearly by one unit at each step. With the time-pruning technique, we allow the time to increase randomly as a stopping time with respect to the process itself.  As a result, we obtain a random process where multiple samples $X_t$ can appear at each step. This novel technique is critical in our analysis as it allows for jumping to the future. 

\paragraph*{A three-phase achievable scheme}
Lastly, we propose an achievable scheme, thus establishing a lower bound on the error exponent. Our scheme is a  variable-length achievable scheme consisting of three phases: (i) data transmission, where both encoders  send the messages; (ii) hybrid data confirmation, where one encoder starts the confirmation while the other is still in the data transmission
stage; and (iii) full confirmation, where both encoders send one bit of confirmation message to the receiver. As a side result, we  study a new problem involving a hybrid of hypothesis testing and data transmission. 

We show that while the two-phase scheme of  Yamamoto and Itoh \cite{Yamamoto} is optimal for \ac{DMC}, its natural extension to MAC is sub-optimal in achieving the  MAC's reliability function. 
By studying a concrete example, we show the necessity of a three-phase scheme allowing the encoders to stop transmission at different times.

\paragraph*{Geometric study of the lower bound} Finally, we study the geometry of the lower bound. We show that it increases linearly with respect to a specific Euclidean distance measure defined between the transmission rate pair and the capacity boundary. For a more intuitive visualization of the reliability function, we derive a simplified version of our lower bound that takes the form 
\begin{align*}
   E(R_1, R_2)\geq  D_{lb} \Big(1-\frac{\|R\|}{\mathcal{C}(\theta_{R})} \Big),  
\end{align*}
where $(\|R\|, \theta_R)$ represents the polar coordinates of the rate pair  $(R_1, R_2)$ in $\RR^2$
plane and $\mathcal{C}(\theta_{R})$ is the point at the intersection of the capacity boundary and the line crossing the origin with angle $\theta_{R}$.

\subsection{Organization of the paper}
 The paper is organized as follows: Section \ref{sec: problem formulation} provides basic definitions and the problem formulation. In Section \ref{sec:main results} we present the main results on the feedback capacity of MAC using VLCs and our bounds on the reliability function. In Section \ref{sec: the shape of the bounds}, we study the tightness of the bounds and present a geometric study of the lower bound. Section \ref{sec:analysis} presents the proofs and analysis of the upper bound. Section \ref{sec:achievability} presents our coding scheme and its analysis for the lower bound. Lastly, Section \ref{sec: conclution} concludes the paper.

\section{Definitions and Problem Formulation}\label{sec: problem formulation}



\subsection{Casual conditioning}
Consider a sequence of random variables $(X_t, Y_t, Z_t), t \in [n]$. The casually conditional entropy of $X^n$ given $Y^n$ is defined as \cite{Kramer-thesis} 
\begin{align*}
    H(X^n || Y^n) \deq \sum_{t = 1}^n H(X_t | X^{t-1},  Y^t),
\end{align*}
where $H$ denotes Shannon entropy. 
We use \textit{directed information} and \textit{conditional directed information} as defined in  \cite{Kramer-thesis}. The directed information from a sequence ${X}^n$ to a sequence ${Y}^n$ when causally conditioned on ${Z}^n$ is defined as 
\begin{align}\label{eq:directed MI}
I({X}^n \rightarrow {Y}^n|| {Z}^n) &\deq H(X^n||Z^n) - H(X^n || Y^n, Z^n) =  \sum_{i=1}^n I(X^{i}; Y_i| Y^{i-1}, Z^i).
\end{align}

\subsection{Variable-length information measures}
In this paper, we consider a set of information measures suitable for variable-length sequences of random variables. Consider the stochastic process $\set{Y_t}_{t>0}$ and let $T$ be a stopping time with respect to the filtration $\mathcal{F}_t, t>0$ generated by $Y^t$. 
We study the random-variable $Y^T$ that takes values from the set $\medcup_{t>0}\CY^t$.  To picture $Y^T$, let us extend $\CY$ by adding a dummy symbol as $\CY\medcup\{\zeta\}.$ Then, $Y^T$ is equivalent to a sequence that equals to $Y_t$ for all $t\leq T$ and $\zeta$ for $t>T$. 
The entropy of the (variable-length) random vector $Y^T$ equals
\begin{align}\label{eq:variable entropy}
    H(Y^T) =  \EE \Big[\sum_{t=1}^T H(Y_t | \CF_{Y, t-1})\Big],
\end{align}
where the expectation is taken over the filtration $\CF_{Y, t}, t\in \NN$, the $\sigma$-algebra generated from $Y^{t}$. Moreover,
$H(Y^T)=H(Y^T,T)=H(T)+H(Y^T|T)$. See an illustrative example in Appendix \ref{app:VL example}.

Consider a sequence of random variables $\set{(X_t, Y_t, Z_t)}_{t>0}$ and let $T$ be a stopping time with respect to the filtration $\mathcal{F}_t, t>0$ generated by $Y^t$ with $\EE[T]$ being finite. Suppose $(X_t, Y_t, Z_t)$ takes values from the finite set $\mathcal{X}\times \mathcal{Y}\times \mathcal{Z}$ for all $t>0$. 
 The entropy of $X^T$ conditioned on $Y^T$ is given by  
\begin{align}\label{eq:var cond entropy}
 H( X^T | Y^T) = \sum_{t\geq 0} \prob{T=t} H(X^{t} | Y^{t}, T=t ).
\end{align}
Using the entropy,  mutual information is defined as  
    $I(X^T; Y^T)  =  H(X^T) - H(X^T | Y^T)$.


We next define the casual conditioning and the variable-length directed information between sequences of random variables with random lengths.

\begin{definition} \label{def:VdI}
Given the pair $(X^T, Y^T)$, the variable-length  entropy of $X^T$ casually conditioned on $Y^T$ is defined as 
\begin{align*}
H(X^T \| Y^T) &\deq \EE\Big[\sum_{t=1}^T H(X_t |X^{t-1}, \CF_{Y, t}) \Big].
\end{align*}
Also, the variable-length directed information from $X^T$ to $Y^T$ is defined as 
\begin{align}\label{eq:directed MI T}
I({X}^T \rightarrow {Y}^T \| {Z}^T) \deq H(X^T) - H(X^T\| Y^T) = \EE\Big[ \sum_{t=1}^T I({X}^t; {Y}_t~ |\mathcal{F}_{Y, t-1})\Big].  
\end{align}
See an illustrative example in Appendix \ref{app:VL example}.

\end{definition}


%

\subsection{Problem formulation}
Consider a discrete memoryless MAC with input alphabets $\mathcal{X}_1,\mathcal{X}_2$, and output alphabet $\mathcal{Y}$. The channel conditional probability distribution is denoted by $Q(y|x_1, x_2)$ for all $(y, x_1, x_2)\in \mathcal{Y}\times \mathcal{X}_1\times \mathcal{X}_2$. Such setup is denoted by $(\mathcal{X}_1,\mathcal{X}_2, \mathcal{Y}, Q)$.  Let $y^t$ and $x_{i}^t$, $i=1,2,$ be the channel output and the input sequences after $t$ uses of the channel, respectively. Then, the following condition is satisfied:
\begin{align}\label{eq: chann probabilities}
P(y_t|y^{t-1}, x_1^{t-1},x_2^{t-1})=Q(y_t|x_{1t}, x_{2t}).
\end{align}
We assume the channel output is available at the encoders with one delay unit through noiseless feedback. 
\begin{definition}
An $(M_1, M_2, N)$-variable-length code (VLC) for a MAC $(\mathcal{X}_1,\mathcal{X}_2, \mathcal{Y}, Q)$ with feedback is defined  by 
\begin{itemize}
 \item A pair of messages $W_1, W_2$ selected randomly, independently and with uniform distribution from  $ \{1,2, \dots, M_i\}, i=1,2$.


\item Two sequences of encoding functions 
$e_{i,t}: \{1,2, \dots, M_i\} \times \mathcal{Y}^{t-1} \rightarrow \mathcal{X}_i$, 
$t\in \NN, ~ i=1,2,$
one for each transmitter. 
\item A sequence of decoding functions 
$d_t: \mathcal{Y}^t \rightarrow \{1,2, ... , M_1\} \times \{1,2, ... , M_2\}$, $t\in \NN$.
\item  A stopping time $T$ with respect to (w.r.t.) the filtration $\mathcal{F}_t$, where $\mathcal{F}_t$ is the sigma-algebra generated by the channel output sequence at time $Y^t$ for $t\in \NN$. Furthermore, it is assumed that $T$ is almost surely bounded as $T \leq N$.
\end{itemize}

\end{definition}
For each $i=1,2$, given a message $W_i$, the $t$th output of Transmitter $i$ is denoted by $X_{i,t}=e_{i,t}(W_i, Y^{t-1})$. Let $(\hat{W}_{1,t}, \hat{W}_{2,t})=d_t(Y^t)$ represent the ``provisional'' decoded messages at time $t$. Then, the decoded messages at the decoder are denoted by $\hat{W}_1=\hat{W}_{1,T}$, and $\hat{W}_2=\hat{W}_{2,T}$. In what follows, for any $(M_1, M_2, N)$ VLC, we define the average rate-pair, error probability, and error exponent. 
Average rates for an $(M_1, M_2, N)$ VLC are defined as  
$$R_i \triangleq \frac{\log_2 M_i}{\EE [T]},\quad  i=1,2.$$ 
The probability of error is defined as 
$P_e \triangleq P\left((\hat{W}_1,\hat{W_2} )\neq (W_1,W_2) \right).$  
The error exponent of a VLC with the probability of error $P_e$ and stopping time $T$ is defined as 
$$E\triangleq-\frac{\log_2 P_e}{\EE[T]}.$$

In this paper, we study a class of $(M_1, M_2, N)$-VLCs for which the parameter $N$ grows polynomially with $\log M_1M_2,$ that is $N\leq (\log M_1M_2)^m$, where $m$ is a fixed integer. For example, a sequence of $(M_1^{(n)}, M_2^{(n)}, N^{(n)})$-VLCs, $n\geq 1,$ where $$M_1^{(n)} = 2^{nr_1}, \quad M_2^{(n)}= 2^{nr_2}, \quad  N^{(n)} \leq n^m,$$
and  with fixed parameters $r_1,r_2, m>0$, satisfies this condition.

\addva{\begin{definition}\label{def:ErrExponent}
A triplet of reliability and rates $(E,R_1,R_2)$ is said to be achievable for a given MAC if for all $\varepsilon\in (0,1)$ and all $M\in \NN$  there exists an $(M_1^{(n)},M_2^{(n)}, N^{(n)})$-VLC for sufficiently large $n$,
such that
 \begin{align*}
 -\frac{\log P^{(n)}_e}{\EE[T^{(n)}]} \geq E-\varepsilon, \qquad \frac{\log M^{(n)}_i}{\EE[T^{(n)}]} \geq R_i-\varepsilon,\qquad  M_i^{(n)}\geq M, \qquad  ~i=1,2, 
 \end{align*}
 and $N^{(n)}\leq n^m$,  where $m$ is fixed. The reliability function 
 $E(R_1,R_2)$ of a MAC with feedback is defined as the supremum of all reliability $E$ such that  $(E,R_1,R_2)$ is achievable. 
\end{definition}}

Given a MAC, and any integer $N$, let $\PMAC^N$ be the set of all $N$-letter distributions on $\mathcal{X}_1^N\times \mathcal{X}_2^N\times \mathcal{Y}^N$ that factor as  
\begin{align}\label{eq: factorization of P}
P_{X_1^N, X_2^N, Y^N} = \prod_{\ell=1}^N P_{1,\ell}(x_{1,\ell}| x_1^{\ell-1} y^{\ell-1})P_{2,\ell}(x_{2,\ell}| x_2^{\ell-1} y^{\ell-1})Q(y_\ell|x_{1, \ell}x_{2,\ell}),
\end{align}
where $Q$ is the transition matrix of the channel. For a more compact representation, define $\PMAC$, without the superscript, as the union $\medcup_N\PMAC^N$. For any $P\in \PMAC$, denote by $N_P$ as the letter size of $P$.

\section{Main Results}\label{sec:main results}

\subsection{ Outer bound on the capacity region of 
MAC}

In this section, we study the feedback capacity for communications over a MAC using VLCs, and provide an upper bound on the capacity region. We start with the definition of achievable rates.
\addva{\begin{definition}\label{def:capcity}
A rate pair $(R_1,R_2)$ is said to be achievable for a given MAC if for all $\varepsilon\in (0,1)$, there exists an $(M_1^{(n)},M_2^{(n)}, N^{(n)})$-VLC for sufficiently large $n$ such that
 \begin{align*}
P^{(n)}_e \leq \varepsilon, \qquad \frac{\log M^{(n)}_i}{\EE[T^{(n)}]} \geq R_i-\varepsilon,  \qquad ~i=1,2, 
 \end{align*}
 and $N^{(n)}\leq n^m$, where $m$ is fixed. The feedback capacity of VLCs, denoted by $\mathcal{C}_{FB}$, is the convex closure of all achievable rates.
\end{definition}}


Next, we define a set of rate-pairs denoted by $\CMAC$, and show that it contains the capacity region, thus forming an outer bound. For any $P^N\in \PMAC^N$, let $T$ be a stopping time  with respect to the filtration $\mathcal{F}_t$ with $T\leq N$ almost surely. 
Using this notation, we define $\CMAC$ as follows

\begin{align*}
\CMAC : = cl\left[\bigcup_{P \in \PMAC} \bigcup_{T: T\leq N_P} \left\{ (R_1,R_2)\in [0, \infty)^2 : \begin{array}{rl}
  R_1&\leq \frac{1}{\EE[T]} I(X_{1}^T \rightarrow Y^T \|~ X_2^T)\\
  R_2&\leq  \frac{1}{\EE[T]} I(X_{2}^T \rightarrow Y^T \|~ X_1^T)\\
   R_1+R_2&\leq   \frac{1}{\EE[T]} I(X_{1}^T, X_2^T \rightarrow Y^T )
  \end{array}\right\}\right],
\end{align*}
where $cl$ denotes the \textit{convex closure} operation. 

\begin{theorem}\label{thm:VLC Capacity}
$\CMAC$ is an outer bound on the capacity region of VLCs for communications over the discrete memoryless MAC with noiseless feedback, i.e., $\mathcal{C}_{FB} \subseteq \CMAC$.
\end{theorem}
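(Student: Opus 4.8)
The plan is a converse argument. Fix an achievable pair $(R_1,R_2)$ and, by Definition \ref{def:capcity}, a sequence of $(M_1^{(n)},M_2^{(n)},N^{(n)})$-VLCs with $P_e^{(n)}\to 0$ and $\log M_i^{(n)}/\EE[T^{(n)}]\to R_i$. I would show that each such code satisfies the three normalized directed-information bounds defining $\CMAC$ up to vanishing terms, so that closedness of $\CMAC$ forces $(R_1,R_2)\in\CMAC$. The first step is structural: any VLC induces a joint law on $(X_1^N,X_2^N,Y^N)$ that factors exactly as in \eqref{eq: factorization of P}, since $W_1,W_2$ are independent and uniform, each input $X_{i,t}=e_{i,t}(W_i,Y^{t-1})$ depends causally only on its own message and the common feedback, and the channel obeys \eqref{eq: chann probabilities}. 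Hence the induced distribution lies in $\PMAC^{N}$ and $T$ is an admissible stopping time, so the code is a legitimate competitor in the union defining $\CMAC$.

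Next I would reduce the rate bounds to mutual-information estimates. Since the messages are uniform, $\log M_i=H(W_i)$ and $\log M_1M_2=H(W_1,W_2)$. A variable-length Fano inequality applied to the decoder $(\hat W_1,\hat W_2)=d_T(Y^T)$ gives $H(W_1,W_2\mid Y^T)\le 1+P_e\log M_1M_2$ and $H(W_i\mid Y^T)\le 1+P_e\log M_i$. Writing $H(W_1,W_2)=I(W_1,W_2;Y^T)+H(W_1,W_2\mid Y^T)$, and for the single-user bound using independence $H(W_1)=H(W_1\mid W_2)=I(W_1;Y^T\mid W_2)+H(W_1\mid Y^T,W_2)$ together with $H(W_1\mid Y^T,W_2)\le H(W_1\mid Y^T)$, all three bounds reduce to lower-bounding the message-to-output (conditional) mutual informations.

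The crux is a variable-length, stopping-time version of Massey's \cite{Massey1990} directed-information inequality, namely
\[
I(W_1,W_2;Y^T)\le I(X_1^T,X_2^T\rightarrow Y^T),\qquad I(W_1;Y^T\mid W_2)\le I(X_1^T\rightarrow Y^T\,\|\,X_2^T),
\]
and symmetrically for user $2$. The mechanism is per-letter: expanding each mutual information as $\EE[\sum_{t=1}^T I(\cdot;Y_t\mid\mathcal{F}_{Y,t-1})]$, at each fixed $t$ the feedback structure makes $X^t$ a deterministic function of the messages given $Y^{t-1}$, while \eqref{eq: chann probabilities} makes $Y_t$ depend on the messages only through the current inputs. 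This yields the conditional Markov chains $(W_1,W_2)-X^t-Y_t$ given $Y^{t-1}$ and $W_1-X_1^t-Y_t$ given $(Y^{t-1},X_2^t)$, and the data-processing inequality applied conditionally on each realized past gives the per-letter bounds. Because $T$ is a stopping time with respect to the output filtration, $\{T\ge t\}$ is $\mathcal{F}_{Y,t-1}$-measurable, so each per-letter inequality can be weighted by $\11\{T\ge t\}$ and summed, matching the definitions \eqref{eq:directed MI T} of the variable-length directed informations. Dividing by $\EE[T^{(n)}]$ and letting $n\to\infty$, the fact that $M_i^{(n)}\to\infty$ forces $\EE[T^{(n)}]\to\infty$ while $P_e^{(n)}\to 0$, so the additive constant and the factor $(1-P_e^{(n)})^{-1}$ vanish and each $R_i$ is dominated by a limit of points of $\CMAC$.

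I expect the main obstacle to be the rigorous handling of the random horizon in the directed-information manipulations: justifying the chain-rule expansion $I(W;Y^T)=\EE[\sum_{t=1}^T I(W;Y_t\mid\mathcal{F}_{Y,t-1})]$ and the conditional data-processing step uniformly over the stopping time, together with the precise statement of the variable-length Fano inequality. The per-letter information-theoretic content is classical; it is the measure-theoretic bookkeeping around $\{T\ge t\}\in\mathcal{F}_{Y,t-1}$ and the compatibility of the conditioning on the messages with the $Y$-measurable stopping rule that require the most care.
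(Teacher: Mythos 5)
Your proposal is correct and follows essentially the same route as the paper's converse: verify that the code-induced law lies in $\PMAC^N$ with $T$ an admissible bounded stopping time, apply a variable-length Fano inequality, and bound the message-to-output (conditional) mutual informations by the variable-length directed informations through per-letter data-processing steps weighted by $\11\{T\ge t\}\in\mathcal{F}_{Y,t-1}$ and summed via optional stopping (the same per-letter inequalities appear in the paper as the entropy-drift bounds of Lemma \ref{lem: linear drift MAC FB}), before normalizing by $\EE[T]$ and invoking closedness of $\CMAC$. The technical points you flag (the telescoping expansion over the random horizon and the vanishing of the Fano residuals, which uses $\EE[T^{(n)}]\to\infty$ via the bounded per-letter information $\log|\mathcal{Y}|$) are exactly the bookkeeping the paper carries out, so no genuine gap remains.
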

A proof is provided in Appendix \ref{proof:thm: VLC Capacity}. This theorem generalizes the results on the feedback capacity of MAC with fixed-length codes \cite{Kramer-thesis} to that with VLCs.
%
%
%
In what follows, we present another characterization of the outer bound on the capacity region. This characterization is via supporting hyperplanes, as in \cite{Salehi1978}, and is useful to characterize our outer bound on the error exponent. 


\begin{theorem}\label{thm:capacity C lambda}
$\CMAC$ equals to the set of rate pairs $(R_1, R_2)$ such that for all $\lambda_1, \lambda_2, \lambda_3 \in [0,1]$ with $\lambda_1+\lambda_2+\lambda_3=1$,   the following inequality holds
\begin{align*}
\lambda_1 R_1 + \lambda_2 R_2+\lambda_3(R_1+R_2) \leq \mathcal{C}_{\underline{\lambda}},
\end{align*}
where $C_{\underline{\lambda}} $ is the supremum of
\begin{align*}
 \frac{1}{\EE[T]}\Big(\lambda_1 I(X^T_1 \rightarrow Y^T|| X^T_2~| Q)+\lambda_2 I(X^T_2 \rightarrow Y^T|| X^T_1~| Q)+\lambda_3 I(X^T_1 X^T_2 \rightarrow Y^T | Q)\Big),
\end{align*}
over all distributions $P_{QX_1^NX_2^NY^N}$ for some $N\in \NN$ such that $P_{X_1^NX_2^NY^N|Q=q} \in \PMAC^N, \forall q\in\mathcal{Q}$, where $|\mathcal{Q}|=3$, and $T$ is a stopping time with respect to the filtration $\mathcal{F}_t$ with $T\leq N$ almost surely.
\end{theorem}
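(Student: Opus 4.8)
The plan is to recognize this as the standard ``supporting hyperplane'' characterization of a comprehensive convex rate region whose generators are polymatroids, in the spirit of \cite{Salehi1978}. First I would record the structural features of $\CMAC$. For each admissible $(P,T)$ write $\mathcal{R}(P,T)$ for the polytope cut out by the three directed-information constraints, with corner values $I_1=I(X_1^T\rightarrow Y^T\|X_2^T)$, $I_2=I(X_2^T\rightarrow Y^T\|X_1^T)$ and $I_{12}=I(X_1^T X_2^T\rightarrow Y^T)$, so that $\CMAC=cl\,\mathrm{conv}\big(\bigcup_{P,T}\mathcal{R}(P,T)\big)$. Each $\mathcal{R}(P,T)$ is closed, convex and comprehensive (lowering a nonnegative rate keeps all three upper-bound constraints valid), hence so is $\CMAC$. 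Consequently $\CMAC$ is completely determined by its support function $h(\mu_1,\mu_2)=\sup_{(R_1,R_2)\in\CMAC}(\mu_1 R_1+\mu_2 R_2)$ restricted to $\mu\ge 0$, and $\CMAC=\{R\ge 0:\langle\mu,R\rangle\le h(\mu)\ \forall\mu\ge 0\}$. I would also verify the polymatroid inequalities $I_1,I_2\le I_{12}\le I_1+I_2$, which follow from the chain rule for directed information together with the independence of $X_1^T$ and $X_2^T$ given the past outputs (and given $Q$) built into the factorization \eqref{eq: factorization of P}; these guarantee $\mathcal{R}(P,T)$ is a (scaled) polymatroid with the familiar vertices.

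The reparametrization $(\mu_1,\mu_2)=(\lambda_1+\lambda_3,\lambda_2+\lambda_3)$ with $\lambda$ on the simplex sweeps out, up to positive scaling, every direction of the nonnegative quadrant; since both $h$ and $C_{\underline{\lambda}}$ are positively homogeneous of degree one, I can work direction by direction. For the easy inclusion $\CMAC\subseteq$ (hyperplane region), fix $\lambda$ and a generator $(R_1,R_2)\in\mathcal{R}(P,T)$; adding the three defining inequalities weighted by $\lambda_1,\lambda_2,\lambda_3\ge 0$ gives $\lambda_1 R_1+\lambda_2 R_2+\lambda_3(R_1+R_2)\le \frac{1}{\EE[T]}(\lambda_1 I_1+\lambda_2 I_2+\lambda_3 I_{12})\le C_{\underline{\lambda}}$, the last step using only that $C_{\underline{\lambda}}$ is a supremum that includes this $(P,T)$ with a degenerate $Q$. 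The corresponding half-space being closed and convex, the inequality passes to $cl\,\mathrm{conv}$, so every point of $\CMAC$ satisfies the constraint for every $\lambda$. In LP terms this is just weak duality: the chosen $\lambda$ is dual-feasible for $\max\langle\mu,R\rangle$ over $\mathcal{R}(P,T)$.

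For the reverse inclusion I would argue by contradiction using separation. If $R^*\notin\CMAC$, comprehensiveness forces a separating hyperplane with a nonnegative normal $\mu$, so $\langle\mu,R^*\rangle>h(\mu)$. The crux is to choose the right $\lambda$: order-matching $\mu$ (for $\mu_1\ge\mu_2$ take $\lambda=(\mu_1-\mu_2,0,\mu_2)$, symmetrically otherwise, then normalize) makes $(\lambda_1,\lambda_2,\lambda_3)$ \emph{simultaneously dual-optimal} for the polymatroid LP over every $\mathcal{R}(P,T)$, because for a polymatroid the maximizing vertex in a given direction depends only on the ordering of the weights, not on the rank values. Hence $h_{\mathcal{R}(P,T)}(\mu)=\frac{1}{\EE[T]}(\lambda_1 I_1+\lambda_2 I_2+\lambda_3 I_{12})$ with equality for every $(P,T)$, and taking the supremum gives $h(\mu)=\sup_{P,T}\frac{1}{\EE[T]}(\lambda_1 I_1+\lambda_2 I_2+\lambda_3 I_{12})$. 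It remains to see that the time-sharing variable $Q$ does not push $C_{\underline{\lambda}}$ above $h(\mu)$: writing $\EE[T]=\sum_q P(Q{=}q)\EE[T\mid Q{=}q]$ and using that the conditional directed informations average over $Q$, the time-shared ratio is a convex combination of the per-$q$ ratios $\tfrac{1}{\EE[T\mid Q=q]}(\lambda_1 I_1^{(q)}+\lambda_2 I_2^{(q)}+\lambda_3 I_{12}^{(q)})=h_{\mathcal{R}(P^{(q)},T_q)}(\mu)\le h(\mu)$, so $C_{\underline{\lambda}}=h(\mu)$ for this $\lambda$. Then $\langle\mu,R^*\rangle>h(\mu)=C_{\underline{\lambda}}$ contradicts membership of $R^*$ in the hyperplane region.

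I expect the main obstacle to be this last step: pinning down $C_{\underline{\lambda}}=h(\mu)$ for the order-matched $\lambda$ in the presence of the auxiliary $Q$ and the per-realization normalization by $\EE[T\mid Q]$. One must check that the directed-information quantities decompose additively over $Q$ and that $|\mathcal{Q}|=3$ suffices (by Carath\'eodory, since only three rank constraints are involved), so that the time-shared value is genuinely a convex average of admissible single-$Q$ values and therefore cannot exceed the support function of the already-convexified region $\CMAC$.
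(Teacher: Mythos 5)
Your architecture is sound in outline and genuinely different from the paper's: you work directly in the plane, using weak duality for the forward inclusion and, for the reverse one, a separating hyperplane with nonnegative normal $\mu$, an order-matched $\underline{\lambda}$ that is simultaneously dual-optimal for every generator, and the ratio-of-averages decomposition to show the time-sharing variable $Q$ cannot push $C_{\underline{\lambda}}$ above the support function $h(\mu)$ of $\CMAC$. The paper instead lifts to $\RR^3$ with a free sum-rate coordinate $R_3$, so that every generator becomes a box whose support function in any nonnegative direction is exactly the weighted corner sum; it convexifies via $Q$ and Carath\'eodory, applies the supporting-hyperplane characterization of the convex set $E$ as in \cite{Salehi1978}, and then slices along $R_3=R_1+R_2$ using Lemma \ref{lem:C_FB prime}. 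However, your proof has a genuine gap at its load-bearing step: the submodularity inequality $I_{12}\le I_1+I_2$, with $I_1=I(X_1^T\rightarrow Y^T\| X_2^T)$, $I_2=I(X_2^T\rightarrow Y^T\| X_1^T)$, $I_{12}=I(X_1^TX_2^T\rightarrow Y^T)$. You justify it by ``the independence of $X_1^T$ and $X_2^T$ given the past outputs built into the factorization \eqref{eq: factorization of P}.'' That conditional independence is false in the presence of feedback: conditioning on $y^{t-1}$ leaves the factors $Q(y_\ell|x_{1,\ell},x_{2,\ell})$, $\ell\le t-1$, inside $P(x_1^t,x_2^t|y^{t-1})$, and these couple the two input sequences --- indeed this induced correlation is exactly the mechanism by which feedback enlarges the MAC region. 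If submodularity failed for even one generator, the order-matched weighted sum would strictly exceed that generator's support value, $C_{\underline{\lambda}}$ could then exceed $h(\mu)$, and your contradiction would evaporate; so as written the reverse inclusion is open.

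The inequality is nevertheless true, but it requires a different argument. The chain rule gives
\begin{align*}
I_1+I_2-I_{12} \;=\; \EE\Big[\sum_{t=1}^{T}\Big(I\big(X_1^t;X_2^t\,\big|\,\mathcal{F}_{t}\big)-I\big(X_1^t;X_2^t\,\big|\,\mathcal{F}_{t-1}\big)\Big)\Big],
\end{align*}
where $\mathcal{F}_t$ is the filtration of $Y^t$, and the factorization \eqref{eq: factorization of P} supplies two facts: $I(X_{1,1};X_{2,1})=0$, and, conditioned on $Y^t=y^t$, the vectors $X_1^{t+1}$ and $X_2^{t+1}$ are stochastic functions of $X_1^t$ and of $X_2^t$ respectively with independent randomization, so $I(X_1^{t+1};X_2^{t+1}|\mathcal{F}_t)\le I(X_1^t;X_2^t|\mathcal{F}_t)$ by data processing. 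Telescoping these two facts (using that $\{T\ge t\}\in\mathcal{F}_{t-1}$ and that each conditional mutual information is nonnegative) yields nonnegativity of the right-hand side. With this lemma supplied, your two-dimensional argument goes through, and it buys something the paper's proof does not: an explicit identification of the optimal $\underline{\lambda}$ for each direction $\mu$. It is worth noting that the paper's own route quietly relies on the same fact: the proof of Lemma \ref{lem:C_FB prime} asserts without justification that any pair satisfying the $Q$-averaged constraints decomposes as a $Q$-mixture of pairs satisfying the per-$q$ constraints, and this decomposition fails for general corner vectors with $I_{12}>I_1+I_2$ (a mixture of a square and a triangle cannot reach the corner of the averaged square); it is valid precisely because of the submodularity above. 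So the missing lemma must be filled in on either route.
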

A proof of the theorem and detailed discussion are provided in Appendix \ref{app:CFB supporting HYP}.

\subsection{Upper bounds on the error exponents}\label{subsec:threephase up}
Before proceeding with the results we establish some additional notation. For  any $P \in \PMAC$, let $\Qbar{1}_r\deq P_{Y_r|X_{1, r},Y^{r-1}}$ be the effective channel from the first user's perspective at time $t=r$. Then, for  any $y^{r-1}\in \mathcal{Y}^{r-1}$, define
\begin{align*}
    \D_r^1(y^{r-1})  \deq \max_{x_1, x'_1}\sup_{P_{X'_2|Y^{r-1}}} D_{KL}\Big( \Qbar{1}_r(\cdot | x_1, y^{r-1})~\|~ \Qbarp{1}_r(\cdot | x'_1, y^{r-1}) \Big),
\end{align*}
where $\Qbarp{1}_r$ is an effective channel of the first user  with the distribution of the second user chosen as $P_{X'_2|Y^{r-1}}$. Similarly, let $\Qbar{2}_r\deq P_{Y_r|X_{2, r},Y^{r-1}}$ be the effective channel from the second user's perspective at time $t=r$, and define 
\begin{align*}
    \D_r^2(y^{r-1}) \deq \max_{x_2, x'_2}\sup_{P_{X'_1|Y^{r-1}}} D_{KL}\Big( \Qbar{2}_r(\cdot | x_2, y^{r-1})~\|~ \Qbarp{2}_r(\cdot | x'_2, y^{r-1}) \Big).
\end{align*}
Moreover, define 
\begin{align*}
    \D^3 \deq \max_{x_1, x'_1}\max_{x_2, x'_2} D_{KL}\Big( Q(\cdot | x_1, x_2)~\|~ Q(\cdot | x'_1, x'_2) \Big).
\end{align*}
With this notation, we are ready to present the main result of this section with the detailed proofs given in Section \ref{sec:analysis}. 

\begin{theorem}\label{thm:upper bound three phase}
The reliability function of a MAC channel is bounded from above as 
\begin{align*}
E(R_1,R_2) \leq 
E^{ub}(R_1,R_2) :=
\sup_{P \in \PMAC} \sup_{\substack{ T:\\  T\leq N_P}} \sup_{\substack{ T_1,T_2, T_3 :\\ T_i\leq T} }  \min_{i\in \{1,2,3\}} \left\{D^i(P) \Big(1 - \frac{R_i}{I^i(P)}\Big)\right\},
\end{align*}
where $R_3=R_1+R_2$, and $I^i(P)$ and $D^i(P)$ are defined as
\begin{align*}
I^1(P) &= \frac{1}{\EE[T_1]}    I(X_1^{T_1} \rightarrow Y^{T_1}|| X_2^{T_1}), \qquad    &     D^1(P) &=\frac{1}{\EE[T-T_1]} \EE\Big[ \sum_{r=T_1}^T   \D_r^1(Y^{r-1})   \Big] \\
I^2(P) &= \frac{1}{\EE[T_2]}    I(X_2^{T_2} \rightarrow Y^{T_2}|| X_1^{T_2} ), \qquad    &     D^2(P) &=\frac{1}{\EE[T-T_2]} \EE\Big[ \sum_{r=T_2}^T   \D_r^2(Y^{r-1})   \Big] \\
I^3(P) &= \frac{1}{\EE[T_3]}    I(X_1^{T_3}X_2^{T_3} \rightarrow Y^{T_3}),    \qquad    &     D^3(P) &=D^3.  
\end{align*}
\end{theorem}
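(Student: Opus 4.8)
The plan is to reduce the converse to three separate Burnashev-type arguments, one per constraint $i\in\{1,2,3\}$, and then take the minimum. Fix a code inducing a distribution $P\in\PMAC$ and a stopping time $T$. For the sum-rate constraint $i=3$ I would track the joint posterior entropy $H_t^3 \deq H(W_1,W_2\mid Y^t)$; for the individual constraints $i=1,2$ I would track the conditional entropies $H_t^1\deq H(W_1\mid W_2,Y^t)$ and $H_t^2\deq H(W_2\mid W_1,Y^t)$, both of which correspond to genie-aided systems in which the receiver is handed the other user's message (revealing the genie can only decrease the error probability, so the resulting Burnashev bound is a valid upper bound on $E$). The two remaining posteriors $H(W_1\mid Y^t)$ and $H(W_2\mid Y^t)$ enter only as bookkeeping, through the chain-rule identities relating the five processes. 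At $t=0$ each $H_0^i$ equals the relevant log message size $L_i$ ($L_1=\log M_1$, $L_2=\log M_2$, $L_3=\log M_1M_2$), while Fano's inequality at the stopping time forces $H_T^i\le 1+P_e L_i$, so that the terminal entropies are negligible when $P_e$ is exponentially small.

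For each $i$ I would split the time axis at the first crossing time $T_i\deq\inf\{t:\,H_t^i\le\epsilon\}\wedge N_P$ and analyze two drift regimes. In the data phase $[0,T_i]$ the one-step expected decrease obeys $\EE[H_t^i-H_{t+1}^i\mid\CF_t]=I(\,\cdot\,;Y_{t+1}\mid\CF_t)\le$ the corresponding single-letter (conditional) mutual information, by the data-processing inequality applied to the Markov structure $W\to(X_{1,t+1},X_{2,t+1})\to Y_{t+1}$ of the memoryless channel; summing via the optional stopping theorem yields $L_i-\EE[H_{T_i}^i]\le I^i(P)\,\EE[T_i]$, which is precisely the variable-length directed information of Definition~\ref{def:VdI} appearing in $I^i(P)$. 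In the confirmation phase $[T_i,T]$ I would instead control a log-likelihood process and show that the increments of $\log(1/H_t^i)$ are bounded in conditional expectation by the worst-case per-step discrimination, i.e.\ by $\D_r^i(Y^{r-1})$ (and by $\D^3$ for $i=3$); here the supremum over the interfering user's conditional law in the definition of $\D_r^i$ is exactly what accounts for the residual uncertainty about the other message during confirmation. Optional stopping on this log-process then gives $\log(1/P_e)\le D^i(P)\,\EE[T-T_i]+O(1)$.

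The step I expect to be the main obstacle is making the two regimes meet cleanly at the random crossing time $T_i$. Near the threshold $\epsilon$ a single channel use can drive $H_t^i$ from well above $\epsilon$ down to essentially $0$, so neither the linear bound nor the logarithmic bound governs the transient step, and a naive treatment either loses a constant per message bit or destroys the submartingale structure needed for optional stopping. This is precisely where I would deploy the time-pruning technique: instead of advancing $t$ by one unit, I would build a pruned process in which time jumps forward by stopping-time increments adapted to $\{\CF_t\}$, so that the entire transient is absorbed into a single pruned step and the logarithmic-drift submartingale is defined on the pruned filtration. Proving that the expected increment of the log-process over a pruned step is still controlled by the average of $\D_r^i$ over the skipped interval, using the martingale machinery of Section~\ref{sec:analysis} together with the variable-length information identities, is the technical heart of the argument.

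Finally I would combine the two phase inequalities. Writing $L_i=R_i\,\EE[T]$ and $\log(1/P_e)=E\,\EE[T]$, adding the data-phase bound $\EE[T_i]\ge (L_i-\epsilon)/I^i(P)$ and the confirmation-phase bound $\EE[T-T_i]\ge(\log(1/P_e)-O(1))/D^i(P)$ gives $\EE[T]\ge R_i\,\EE[T]/I^i(P)+E\,\EE[T]/D^i(P)-O(1)$, and dividing by $\EE[T]\to\infty$ yields $E\le D^i(P)\bigl(1-R_i/I^i(P)\bigr)+o(1)$. Since this holds for each $i$ with the crossing-time split, the exponent is bounded by the minimum over $i$; passing to the supremum over the code-induced $P\in\PMAC$, over $T\le N_P$, and over the admissible phase times $T_1,T_2,T_3\le T$ (which only enlarges the right-hand side and therefore preserves validity) produces $E(R_1,R_2)\le E^{ub}(R_1,R_2)$.
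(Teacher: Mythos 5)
Your overall architecture (entropy processes with a linear and a logarithmic drift regime, directed information for the data phase, divergences for the confirmation phase, time pruning at the transition, Burnashev-style accounting) matches the paper's strategy, but there is a genuine gap in how you instantiate the processes for $i=1,2$, and it changes the constant you can prove. You run \emph{both} phases on the genie-aided entropies $H(W_1\mid W_2,Y^t)$, $H(W_2\mid W_1,Y^t)$, justified by ``revealing the genie can only decrease the error probability.'' First, the genie-aided system is not a DMC: from user 1's viewpoint it is a channel whose state $X_{2,r}$ (a function of $W_2$ and the feedback) is known to the receiver but not to the transmitter, so there is no off-the-shelf Burnashev bound to invoke; proving the state-dependent analogue is essentially the content of the paper. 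Second, and more fundamentally, the logarithmic drift of the genie-aided entropy is governed by divergences of the true channel with the other user's input \emph{pinned}, i.e.\ quantities of the form $D_{KL}\big(Q(\cdot|x_1,x_2)\,\|\,Q(\cdot|x_1',x_2)\big)$, not by $\D_r^1$, which compares the \emph{effective} channels $\Qbar{1}_r$ obtained by averaging over the receiver's residual posterior on $X_{2,r}$, with an adversarial $\sup$ over $P_{X_2'|Y^{r-1}}$. Your own observation that this $\sup$ ``accounts for the residual uncertainty about the other message during confirmation'' is inconsistent with a genie-aided receiver, which has no such residual uncertainty; the two constants differ in general (for two parallel channels the pinned-state divergence for user 1 collapses to $D_1$, while $\D_r^1$ need not), so your route yields a bound with the wrong $D^i(P)$, not the theorem as stated.

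The missing structural idea — precisely what your remark that $H(W_1\mid Y^t)$ and $H(W_2\mid Y^t)$ are ``only bookkeeping'' discards — is that each index $i$ requires \emph{two coupled processes}: the genie-aided $\Hbar^i_t$ drives the linear phase (its drift, Lemma \ref{lem: linear drift MAC FB}, gives the conditional directed information $I^i$), while the marginal $\Htld^i_t = H(W_i\mid \CF_t)$ drives the logarithmic phase, defines the crossing time, and supplies the phase indicator (its log-drift, Lemma \ref{lem: log drift MAC FB}, is what produces $\D_r^i$). Lemma \ref{lem:pruned sub martingale 2} is engineered exactly for this coupling: the submartingale uses $\Hbar^i_t$ above the threshold and $\log\Htld^i_t$ below it, with the threshold tested on $\Htld^i_t$ and the inequality $\Hbar^i_t\leq\Htld^i_t$ making the two pieces splice at the pruned crossing. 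Without this two-process construction the claimed $D^i(P)$ is out of reach. A smaller inaccuracy: the upper crossing time $\Tueps$ to which the pruned clock jumps is a last-exit time, not a stopping time with respect to $\CF_t$ (the paper states this explicitly); measurability holds only on the pruned filtration, which is Lemma \ref{prop:measurable}, so describing the jumps as ``stopping-time increments adapted to $\{\CF_t\}$'' glosses over a point the paper has to prove.
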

This upper bound is proved by analyzing an arbitrary coding scheme in three stages. 
Roughly speaking, these stages correspond with the rate of decrease of appropriately defined entropy functions. Specifically, in the first stage, entropies decrease linearly, while in the second, exponentially. The boundary between these two behaviors can be different for the two users and is quantified by the stopping times $T_i$ for user $i$.
Next, we present a simplified version of the upper bound as a corollary.  
\begin{corollary}[Two-Phase Upper Bound]\label{cor:twophase up}
The reliability function of a MAC channel is bounded from above as
\begin{align}\label{eq: E_u}
E(R_1,R_2) \leq E^{ub}_{\text{two-phase}}(R_1,R_2) :=\sup_{P \in \PMAC} \sup_{\substack{ T:\\  T\leq N_P}} \sup_{\substack{ T_1,T_2, T_3 :\\ T_i\leq T} }  D_{ub} \min_{i\in \{1,2,3\}} \Big\{ 1 - \frac{R_i}{I^i(P)}\Big\},
\end{align}
where $D_{ub}=\max_{\substack{x_1, x'_1\in \mathcal{X}_1,\\ x_2,x'_2 \in \mathcal{X}_2}} D_{KL}\Big(Q(\cdot | x_1,x_2) \| Q(\cdot | x'_1, x'_2) \Big)$.
\end{corollary}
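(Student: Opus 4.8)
The plan is to deduce Corollary~\ref{cor:twophase up} directly from Theorem~\ref{thm:upper bound three phase}. The two-phase bound $E^{ub}_{\text{two-phase}}$ is exactly the three-phase bound $E^{ub}$ with each channel-dependent divergence rate $D^i(P)$ replaced by the single constant $D_{ub}$ and pulled outside the minimum. Hence it suffices to establish the uniform domination
\begin{align*}
0 \le D^i(P) \le D_{ub}, \qquad i = 1,2,3,
\end{align*}
for every $P \in \PMAC$ and every choice of stopping times, and then to combine this with the structure of the minimum.

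The case $i=3$ is immediate, since by definition $D^3(P) = \D^3 = D_{ub}$. The cases $i=1,2$ are the heart of the argument, and I treat $i=1$ (the case $i=2$ being symmetric). Fix $r$ and $y^{r-1}$. By the factorization \eqref{eq: factorization of P} the effective channel is a mixture of the true channel over the second user's input,
\begin{align*}
\Qbar{1}_r(\cdot \mid x_1, y^{r-1}) = \sum_{x_2 \in \mathcal{X}_2} P_{X_{2,r} \mid X_{1,r}, Y^{r-1}}(x_2 \mid x_1, y^{r-1})\, Q(\cdot \mid x_1, x_2),
\end{align*}
and likewise $\Qbarp{1}_r(\cdot \mid x'_1, y^{r-1}) = \sum_{x_2} P_{X'_2 \mid Y^{r-1}}(x_2 \mid y^{r-1})\, Q(\cdot \mid x'_1, x_2)$ is a mixture over the adversarially chosen second-user law. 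The key observation is that $D_{KL}(p \| q)$ is convex in $p$ for fixed $q$ and convex in $q$ for fixed $p$. Applying the first convexity to the outer mixture and the second to the inner one gives, for arbitrary mixing weights,
\begin{align*}
D_{KL}\Big(\Qbar{1}_r(\cdot \mid x_1, y^{r-1}) \,\Big\|\, \Qbarp{1}_r(\cdot \mid x'_1, y^{r-1})\Big) \le \max_{x_2, x'_2} D_{KL}\Big(Q(\cdot \mid x_1, x_2) \,\Big\|\, Q(\cdot \mid x'_1, x'_2)\Big).
\end{align*}
Because this bound is uniform in the weights, it survives the supremum over $P_{X'_2 \mid Y^{r-1}}$ in the definition of $\D_r^1$, and the further maximum over $x_1, x'_1$ yields $\D_r^1(y^{r-1}) \le D_{ub}$ for all $r$ and $y^{r-1}$. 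Consequently $D^1(P)$, being a normalized expected sum of the nonnegative terms $\D_r^1(Y^{r-1}) \in [0, D_{ub}]$, satisfies $0 \le D^1(P) \le D_{ub}$.

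It remains to assemble the bounds. For fixed parameters write $a_i := 1 - R_i/I^i(P)$. On the regime relevant to the upper bound the inner minimum is nonnegative: indeed $E^{ub} \ge E(R_1,R_2) \ge 0$, so the supremum defining $E^{ub}$ is attained at parameters with $\min_i D^i(P)\, a_i \ge 0$, which forces the active factors $a_i$ to be nonnegative. For such parameters $D^i(P) \le D_{ub}$ together with $a_i \ge 0$ give the termwise inequality $D^i(P)\, a_i \le D_{ub}\, a_i$, whence $\min_i D^i(P)\, a_i \le D_{ub} \min_i a_i$. Taking the supremum over all $P$ and all stopping times yields $E^{ub}(R_1,R_2) \le E^{ub}_{\text{two-phase}}(R_1,R_2)$, which combined with Theorem~\ref{thm:upper bound three phase} completes the proof. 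I expect the convexity bound on the mixed effective channels to be the main obstacle — in particular, ensuring that the two-argument convexity correctly handles the mismatched mixing weights and the adversarial choice of $P_{X'_2 \mid Y^{r-1}}$; the final manipulation of the minimum is routine once the positivity of the active factors is noted.
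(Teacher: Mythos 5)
Your proposal is correct and takes essentially the same route as the paper, which proves the corollary in one line by citing Theorem~\ref{thm:upper bound three phase} together with the fact that $D_{ub} \geq D^i(P)$ for all $i$ and all $P \in \PMAC$; your contribution is to actually substantiate that fact (via the convexity of $D_{KL}$ in each argument applied to the mixture representation of the effective channels $\Qbar{1}_r$, $\Qbarp{1}_r$) and to justify pulling the constant $D_{ub}$ outside the minimum by checking signs of the factors $1 - R_i/I^i(P)$. These are details the paper leaves implicit rather than a different method, so the two proofs coincide in approach.
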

The proof follows from Theorem \ref{thm:upper bound three phase} and the fact that $D_{ub}\geq D^i(P)$ for any $i=1,2,3,$ and $P\in \PMAC$. 


\subsection{Lower bounds on the error exponent}\label{subsec:LB}
In this section, we present our lower bounds on the reliability function.  For any $L\in \NN$ and $P\in \PMAC$, and rates $(R_1, R_2)$, define
\begin{align}\label{eq:Eo}
\Eo(R_1, R_2):=1- \max\set{\frac{R_1}{\ILP{1}},\frac{R_1}{\ILP{2}},\frac{R_1+R_2}{\ILP{3}}}.
\end{align}

For a PMF $P_{X_2} \in \mathcal{P}(\mathcal{X}_2)$, define
 $\Qbar{1}(y|x_1) := \sum_{x_2}P_{X_2}(x_2) Q(y|x_1, x_2)$ as the single-letter effective channel from the first user's perspective. Similarly, define $\Qbar{2}$. 
 Define for any $x_j \in \mathcal{X}_j$,
$$\DQbar{j}(x_j, P_{X_{\bar{j}}}):=\sup_{z \in \mathcal{X}_j} D_{KL}(\Qbar{j}(x_j) \|\Qbar{j}(z)),$$
where $j=1,2$ and $\bar{j}=3-j$.
  For $x_1\in \mathcal{X}_1$, define $C_2(x_1)=I(X_2; Y|X_1=x_1)$. Similarly, define  $C_1(x_2)$. 
Let $Z= (Z_1(0), Z_2(0), Z_1(1), Z_2(1))$ be a vector of random variables on $\mathcal{X}_1\times \mathcal{X}_2\times\mathcal{X}_1\times\mathcal{X}_2$  with joint distribution $P_{{\bfZ}}$. 
Define 
for all $a, b \in \{0,1\}$, 
\begin{align}\label{eq:dpar ij}
\bar{D}_{P_{{\bfZ}}}(00\| ab):= \EE_{\bfZ\sim P_{{\bfZ}}}\Big[ D_{KL}\Big(Q\big(\cdot |Z_1(0), Z_2(0)\big) \| Q\big(\cdot|Z_1(a), Z_2(b)\big)\Big)\Big].
\end{align}

\begin{theorem}\label{thm: Error Exp lowerbound three phase}
The following is a lower-bound for the reliability function  of any discrete memoryless MAC:
\begin{align}\label{eq: E_l three}
E^{lb}(R_1,R_2) = \max\set{\Elb^1(R_1,R_2), \Elb^2(R_1,R_2)},
\end{align}
where, 
\begin{align*}
\Elb^1(R_1,R_2) =  &\sup_{\substack{x_1\in \mathcal{X}_1 \\0\leq \gamma_2<1}}\sup_{P_{{\bfZ}}} \sup_{\substack{P_{X_2}\\ P \in \PMAC}} \min\Big\{
 \big(\Eo(R_1, R_2-\gamma_2C_2(x_1))-\gamma_2\big)\bar{D}_{P_{Z}}(00\| 01),\\
& ~~ \big(\Eo(R_1, R_2-\gamma_2C_2(x_1)-\gamma_2\big) \bar{D}_{P_{Z}}(00\|10)+\gamma_2\DQbar{1}(x_1, P_{X_2}),\\\numberthis\label{eq:El 1}
& ~~ \big(\Eo(R_1, R_2-\gamma_2C_2(x_1)-\gamma_2\big) \bar{D}_{P_{Z}}(00\| 11)+\gamma_2\DQbar{1}(x_1, P_{X_2})\Big\},
\end{align*}
and
\begin{align*}
\Elb^2(R_1,R_2) =  &\sup_{\substack{x_2\in \mathcal{X}_2 \\0\leq \gamma_2<1}}\sup_{P_{{\bfZ}}} \sup_{\substack{P_{X_1}\\ P \in \PMAC}}  \min\Big\{
 \big(\Eo(R_1-\gamma_2C_1(x_2), R_2)-\gamma_2\big) \bar{D}_{P_{Z}}(00\| 01)\\
 &~~ +\gamma_2\DQbar{2}(x_2, P_{X_1}),  \big(\Eo(R_1-\gamma_2C_1(x_2), R_2)-\gamma_2\big) \bar{D}_{P_{Z}}(00\| 10),\\\numberthis\label{eq:El 2}
&~~ \big(\Eo(R_1-\gamma_2C_1(x_2), R_2)-\gamma_2\big) \bar{D}_{P_{Z}}(00\| 11)+\gamma_2\DQbar{2}(x_2, P_{X_1})\Big\}.
\end{align*}
\end{theorem}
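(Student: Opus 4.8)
The plan is to construct an explicit three-phase variable-length coding scheme and to bound its error exponent from below, thereby exhibiting an achievable triple $(E^{lb}, R_1, R_2)$ in the sense of Definition \ref{def:ErrExponent}. The two expressions $\Elb^1$ and $\Elb^2$ are symmetric (obtained by interchanging the roles of the two users), so it suffices to establish that $\Elb^1(R_1,R_2)$ is achievable; $\Elb^2$ then follows by relabeling. The three phases are: (i) a \emph{data transmission} phase in which both encoders send their messages using a scheme that drives the per-user and joint message entropies down at rates governed by the directed informations $\ILP{1}, \ILP{2}, \ILP{3}$; (ii) a \emph{hybrid data-confirmation} phase, of (expected) length governed by the fraction $\gamma_2$, in which user~$2$ begins transmitting a confirmation bit while user~$1$ continues to transmit data — this is the phase that produces the $\gamma_2 C_2(x_1)$ correction in the rate argument of $\Eo$ and the additive $\gamma_2 \DQbar{1}(x_1, P_{X_2})$ gain in two of the three branches; and (iii) a \emph{full confirmation} phase in which both users send one confirmation bit, and the decoder runs a composite hypothesis test against the four hypotheses indexed by $(a,b)\in\{0,1\}^2$ to decide whether either message was decoded in error.

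First I would set up the error analysis by the standard union-of-events decomposition: a block error occurs either if the data phase fails (controlled by $\Eo$, which is the zero-exponent achievable-rate slack term coming from Theorem \ref{thm:VLC Capacity}'s region via the three constraints in the $\max$ of \eqref{eq:Eo}), or if the confirmation phase accepts an erroneous codeword. The confirmation-phase false-accept probability is exactly a binary/composite hypothesis-testing error, and its exponent is the relevant $\bar D_{P_{\bfZ}}(00\|ab)$: the hypothesis $00$ is ``both users correct,'' and the three alternatives $01, 10, 11$ correspond to the three nontrivial error patterns (user~$2$ wrong, user~$1$ wrong, both wrong). The three branches in the $\min$ of \eqref{eq:El 1} are precisely the exponents associated with catching these three error events, and the $\min$ reflects that the scheme's exponent is limited by the worst (most confusable) error pattern. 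The factor multiplying each $\bar D$ is the expected fraction of time spent in confirmation, which equals the data-phase slack $\Eo(\cdots) - \gamma_2$ once the hybrid phase of length $\gamma_2$ is subtracted off.

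Next I would make the decomposition quantitative. In the data phase I invoke an achievability scheme for the MAC-with-feedback capacity region (a VLC achieving rates inside $\CMAC$) to guarantee that, with the chosen $P \in \PMAC$ and $P_{X_2}$, the messages are decoded with vanishing error using an expected time fraction equal to $\max\{R_1/\ILP{1}, (R_2-\gamma_2 C_2(x_1))/\ILP{2}, \ldots\}$, leaving confirmation-time budget $\Eo(R_1, R_2 - \gamma_2 C_2(x_1))$. During the hybrid phase user~$2$ sends confirmation while user~$1$ keeps sending data at rate $C_1$-type increments; the bookkeeping here is what shifts user~$2$'s effective rate by $\gamma_2 C_2(x_1)$ and what supplies the extra $\gamma_2 \DQbar{1}(x_1, P_{X_2})$ term — the single-letter effective-channel divergence $\DQbar{1}$ measures how much additional discrimination user~$1$'s continued data transmission buys against the $10$ and $11$ hypotheses (but not the $01$ hypothesis, which is why the first branch of \eqref{eq:El 1} lacks that additive term). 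Finally I would combine the phases: the total expected time is the sum of the three phase lengths, the overall error probability is dominated by the confirmation false-accept whose exponent is the weakest branch, and normalizing $-\log P_e$ by $\EE[T]$ yields the $\min$ of three products claimed in \eqref{eq:El 1}. Taking suprema over the free parameters $x_1, \gamma_2, P_{\bfZ}, P_{X_2}, P$ gives $\Elb^1$.

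The hard part will be the rigorous analysis of the hybrid phase (ii), where the two users are in \emph{different} modes simultaneously: user~$1$ still transmitting data while user~$2$ confirms. This requires carefully defining the joint codebook and the decoder's sequential hypothesis test so that (a) user~$1$'s ongoing data transmission does not corrupt user~$2$'s confirmation test, and (b) the effective single-letter channel seen for the $\DQbar{1}$ gain is correctly justified via the law of large numbers / a change-of-measure argument along the (random) phase boundary. Controlling the stopping times $T_1, T_2$ that separate the phases, and showing their expected values concentrate so that the rate and exponent bookkeeping holds up to $o(1)$ terms, is the main technical obstacle; this is exactly where the paper's ``time pruning'' philosophy and martingale/drift estimates must be deployed to handle the transient between the linear-drift (data) and exponential-drift (confirmation) regimes. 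I would expect to need a concentration lemma showing $\EE[T] = (\text{data length}) + \gamma_2 (\text{hybrid length}) + (\text{confirmation length})(1+o(1))$ and a matching large-deviations bound on the composite test, after which the claimed lower bound follows by assembling the pieces and optimizing.
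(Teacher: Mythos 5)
Your overall architecture (three-phase block scheme, confirmation viewed as a composite hypothesis test with exponents $\bar{D}_{P_{\bfZ}}(00\| ab)$, time-fraction bookkeeping giving the factor $\Eo(\cdot)-\gamma_2$) is the same as the paper's. However, there is a genuine conceptual error in how you assign the roles in the hybrid phase, and it breaks exactly the exponent calculation needed for \eqref{eq:El 1}. In $\Elb^1$, the rate correction $R_2-\gamma_2 C_2(x_1)$ means that user~2 is the one still sending data during phase (ii), at rate $C_2(x_1)=I(X_2;Y|X_1=x_1)$, with part of its message deferred to that phase, while user~1 is the one confirming, by repeating a symbol $x_1(\Theta_1)\in\{x_1(0),x_1(1)\}$; the gain $\gamma_2\DQbar{1}(x_1,P_{X_2})$ is the Chernoff--Stein exponent of the binary test on user~1's repetition code over the effective channel $\Qbar{1}(y|x_1)=\sum_{x_2}P_{X_2}(x_2)Q(y|x_1,x_2)$ obtained by averaging over user~2's i.i.d.\ data codebook. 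You state the opposite assignment (user~2 confirms, user~1 keeps sending data) and, more importantly, you claim that $\DQbar{1}$ measures ``how much additional discrimination user~1's continued data transmission buys against the $10$ and $11$ hypotheses.'' Data transmission buys no such discrimination: the phase-(ii) data symbols do not depend on $\Theta_1$, so under your scheme the phase-(ii) output distribution is identical under hypothesis $00$ and under $10$ (or $11$), and no positive exponent against those hypotheses can come from that phase. The extra exponent sits on precisely the branches $10,11$ (where $\Theta_1=1$) because the \emph{confirming} user's transmitted symbol flips from $x_1(0)$ to $x_1(1)$ there. If you fix the roles consistently, the scheme you describe yields $\Elb^2$, and $\Elb^1$ then follows by relabeling, so the theorem remains reachable; but the analysis you outline for \eqref{eq:El 1} as written would fail.

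A second, smaller point: you expect the main technical obstacle to be random phase boundaries and stopping-time concentration, to be attacked with the time-pruning/martingale machinery. That machinery belongs to the paper's converse, not its achievability. The paper's scheme avoids the issue by design: within each block, the three phases have deterministic lengths $\gamma_1 n,\gamma_2 n,\gamma_3 n$; variable length enters only through retransmission of entire blocks when the confirmation test rejects, so the number of blocks is geometric and one gets $P_e = P_{eb}/(1-q)$ and $\EE[T]=n/(1-q)$ with $q\to 0$. The hybrid phase itself is handled by a short self-contained lemma (repetition code for the confirmer, i.i.d.\ random code for the data user, independent decoders, Chernoff--Stein bound on the effective channel), not by drift analysis. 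Building the scheme with genuinely random phase boundaries, as you propose, would create real difficulties that the theorem does not require you to solve.
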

A proof is given in Appendix \ref{proof: Error Exp lowerbound three phase}.
Note that we derive this theorem by proposing a coding scheme consisting of three phases: (1) data transmission, in which a capacity-achieving code is used, (2) hybrid data-confirmation stage, in which one user stops transmission and sends one bit of confirmation message while the other user continues the transmission, and (3) final confirmation, where both users send a one-bit confirmation message.   
As a special case, we present a simplified inner bound with a two-phase coding scheme. 
\begin{corollary}[Two-Phase Lower Bound]\label{cor:twophase lb}
The following is a lower-bound for the reliability function  of any discrete memoryless MAC:
\begin{align}\label{eq: E_l}
E^{lb}_{\text{two-phase}}(R_1,R_2) = \sup_{P\in \PMAC}  D_{lb}~\Eo(R_1, R_2),
\end{align}
where, 	
\begin{align}\label{eq: D_l}
D_{lb}\deq \sup_{P_{Z}}\min_{ab\neq 00} \bar{D}_{P_{Z}}(00\| ab).
\end{align}
\end{corollary}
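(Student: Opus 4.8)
The plan is to deduce Corollary~\ref{cor:twophase lb} directly from the three-phase lower bound of Theorem~\ref{thm: Error Exp lowerbound three phase} by specializing to the degenerate scheme in which the hybrid data-confirmation phase has zero length. Physically this is the two-phase scheme in which both encoders leave the data phase simultaneously and pass directly to full confirmation; in the formula it corresponds to the admissible choice $\gamma_2=0$ in the definition of $\Elb^1$ (the treatment of $\Elb^2$ is symmetric). Since $\Elb^1$ is a supremum over $\gamma_2\in[0,1)$ and $0$ is an admissible value, $\Elb^1$ is at least its value at $\gamma_2=0$, so any quantity matched to that value is automatically dominated by $\Elb^1\le E(R_1,R_2)$.

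The first step is to substitute $\gamma_2=0$ in \eqref{eq:El 1}. The shifted rate reduces to $\Eo(R_1,R_2-\gamma_2C_2(x_1))=\Eo(R_1,R_2)$, the subtracted $-\gamma_2$ disappears, and each additive confirmation term $\gamma_2\DQbar{1}(x_1,P_{X_2})$ vanishes. Hence the three arguments of the inner minimum collapse to $\Eo(R_1,R_2)\,\bar{D}_{P_Z}(00\|ab)$ for $ab\in\{01,10,11\}$, and in the nontrivial regime $\Eo(R_1,R_2)\ge 0$ we may factor it out:
\begin{align*}
\min_{ab\neq 00}\Big\{\Eo(R_1,R_2)\,\bar{D}_{P_Z}(00\|ab)\Big\} = \Eo(R_1,R_2)\,\min_{ab\neq 00}\bar{D}_{P_Z}(00\|ab).
\end{align*}
The free variables $x_1$ and $P_{X_2}$ have now disappeared from the objective, so their suprema are vacuous and are discarded.

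The second step rearranges the surviving suprema. Interchanging the two suprema, pulling the nonnegative factor $\Eo(R_1,R_2)$ out of $\sup_{P_Z}$, and identifying $\sup_{P_Z}\min_{ab\neq 00}\bar{D}_{P_Z}(00\|ab)=D_{lb}$ from \eqref{eq: D_l} yields
\begin{align*}
\sup_{P\in\PMAC}\sup_{P_Z}\Eo(R_1,R_2)\min_{ab\neq 00}\bar{D}_{P_Z}(00\|ab) &= \sup_{P\in\PMAC}\Eo(R_1,R_2)\sup_{P_Z}\min_{ab\neq 00}\bar{D}_{P_Z}(00\|ab)\\
&= \sup_{P\in\PMAC}D_{lb}\,\Eo(R_1,R_2) = E^{lb}_{\text{two-phase}}(R_1,R_2).
\end{align*}
Combining the two steps gives $E^{lb}_{\text{two-phase}}\le\Elb^1\le E(R_1,R_2)$, which is the assertion of the corollary.

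The one place that needs care — and the only genuine obstacle — is the sign of $\Eo$, which controls whether factoring it out of $\sup_{P_Z}$ is legitimate. When $\Eo(R_1,R_2)\ge 0$ the factorization above is exact. When $\Eo(R_1,R_2)<0$, the inner supremum over $P_Z$ is instead governed by the infimum of $\min_{ab\neq 00}\bar{D}_{P_Z}$, which can be driven to $0$, so such $P$ contribute nothing; in that case $\sup_P D_{lb}\,\Eo(R_1,R_2)$ is itself negative and is trivially a valid lower bound because the reliability function is nonnegative. Thus in every regime $E^{lb}_{\text{two-phase}}$ is dominated by the achievable three-phase bound, establishing the corollary. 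If a self-contained argument were preferred over this reduction, the same bound would follow by analyzing a genuine two-phase VLC consisting of a capacity-achieving data phase followed by a single joint confirmation phase, but the reduction from Theorem~\ref{thm: Error Exp lowerbound three phase} is the more economical route.
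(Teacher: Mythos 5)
Your proof is correct and takes essentially the same route as the paper: the paper's entire justification of Corollary~\ref{cor:twophase lb} is precisely the specialization $\gamma_2=0$ in Theorem~\ref{thm: Error Exp lowerbound three phase}, which is your first step, after which the $\gamma_2$-dependent shifts and the $\gamma_2\DQbar{1}$, $\gamma_2\DQbar{2}$ terms vanish and the suprema collapse to $\sup_{P\in\PMAC} D_{lb}\,\Eo(R_1,R_2)$. Your extra care about the sign of $\Eo(R_1,R_2)$ when factoring it through the minimum and the supremum over $P_{\bfZ}$ is a detail the paper glosses over, but it only strengthens the argument.
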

The result is obtained by  setting $\gamma_2=0$ in Theorem~\ref{thm: Error Exp lowerbound three phase}.

\section{On The Shape and Tightness of The Bounds}\label{sec: the shape of the bounds}
\subsection{Geometric characterization}
In this Section, we study the tightness of the upper bound in Theorem \ref{thm:upper bound three phase} and the lower bound in Theorem \ref{thm: Error Exp lowerbound three phase}. Furthermore, we provide an alternative representation for the lower bound. 
We start with  the  two-phase lower bound in Corollary~\ref{cor:twophase lb}. 
For that, suppose $(R_1,R_2)$ is a point inside the capacity region $\mathcal{C}$. Denote by $(\|R\|, \theta_{R})$ the polar coordinates of $(R_1,R_2)$ in $\RR^2$. Let $\mathcal{C}(\theta_{R})$ be the point at the intersection of the capacity boundary and the line crossing the origin with angle $\theta_{R}$, as shown in Fig. \ref{fig: R' and R}. 
\begin{figure}[hbtp]
\centering
\includegraphics[scale=0.8]{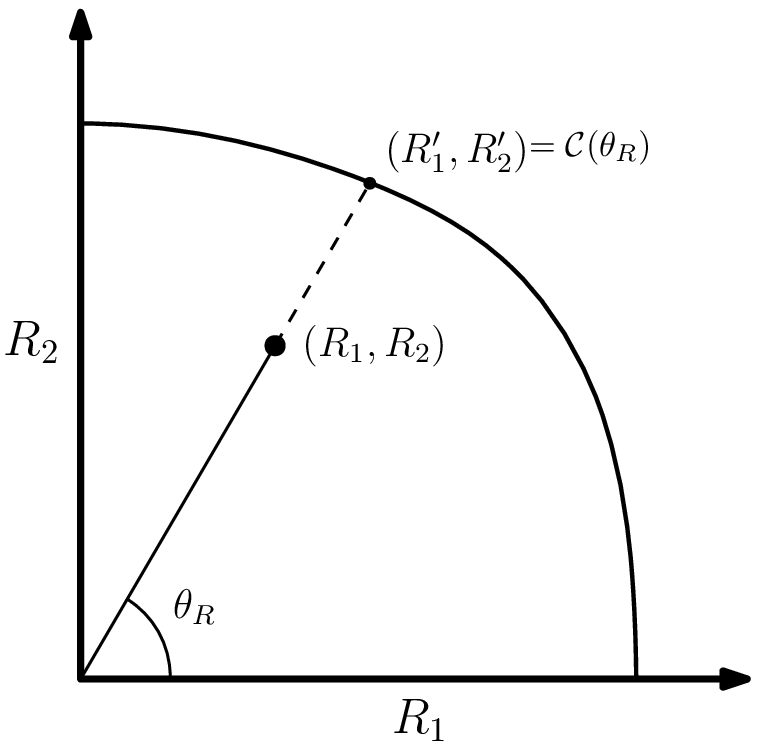}
\hspace{0.5in}
\includegraphics[scale=0.35]{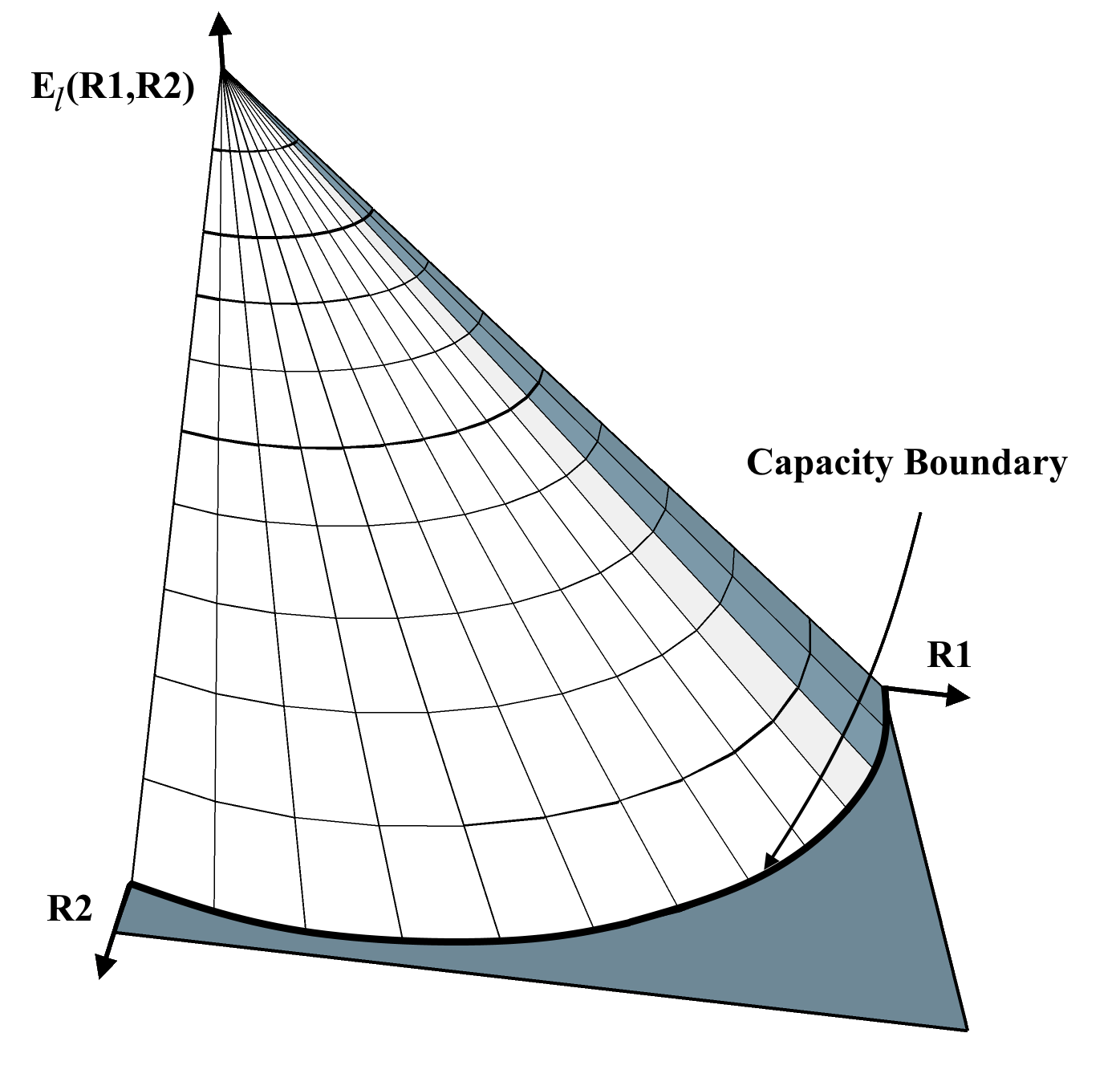}
\hspace{0.4in} \large{(a) \hspace{2.5in} (b)}
\caption{(a) Given a rate pair $(R_1, R_2)$ that is inside the capacity region, consider the line passing $(R_1,R_2)$ and the origin. Then, $(R'_1, R'_2)$ is the point of intersection of this line with the boundary of the capacity region.
(b) The conceptual shape of the lower bound on the error exponent of a given MAC with respect to the transmission rate pair $(R_1,R_2)$.}
\label{fig: R' and R}
\end{figure}
The following lemma provides a geometric characterization of the lower bound with the proof provided in Appendix \ref{proof:lem:lb_shape}.
\begin{lem}\label{lem:lb_shape}
    The reliability function of a MAC channel is bounded from below as 
\begin{align*}
   E(R_1, R_2)\geq  D_{lb} \Big(1-\frac{\|R\|}{\mathcal{C}(\theta_{R})}\Big).  
\end{align*}
\end{lem}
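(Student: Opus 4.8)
The plan is to derive the geometric bound in Lemma~\ref{lem:lb_shape} directly from the two-phase lower bound of Corollary~\ref{cor:twophase lb}, namely from $E^{lb}_{\text{two-phase}}(R_1,R_2) = \sup_{P\in \PMAC} D_{lb}\,\Eo(R_1,R_2)$. The key observation is that the factor $\Eo(R_1,R_2) = 1 - \max\{R_1/I^1, R_1/I^2, (R_1+R_2)/I^3\}$ (using the shorthand $I^i = \ILP{i}$) has the form $1 - (\text{rate quantity})/(\text{information quantity})$, and the quantity $\mathcal{C}(\theta_R)$ is precisely the norm of the point on the capacity boundary that lies on the ray through the origin and $(R_1,R_2)$. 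So the whole task reduces to rewriting $\max_i\{\cdot\}$ as $\|R\|/\mathcal{C}(\theta_R)$ for an appropriately chosen $P\in\PMAC$.

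First I would fix the ray through the origin at angle $\theta_R$, so that $(R_1,R_2) = \|R\|(\cos\theta_R, \sin\theta_R)$ and the boundary point is $\mathcal{C}(\theta_R)\,(\cos\theta_R, \sin\theta_R)$. Since $(\mathcal{C}(\theta_R)\cos\theta_R, \mathcal{C}(\theta_R)\sin\theta_R)$ lies on the boundary of the capacity region $\mathcal{C}$, and $\mathcal{C} = \mathcal{C}_{FB}$ is (by Theorem~\ref{thm:VLC Capacity}) described by the three constraints $R_1 \le I^1$, $R_2 \le I^2$, $R_1+R_2 \le I^3$, I would choose a distribution $P\in\PMAC$ (and stopping time) whose rate region has this boundary point on its boundary. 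For that point, at least one of the three constraints is met with equality, and scaling all rate coordinates by the factor $t = \|R\|/\mathcal{C}(\theta_R) \in [0,1]$ sends the boundary point to $(R_1,R_2)$. Consequently each ratio satisfies $R_1/I^1 \le t$, $R_2/I^2 \le t$, $(R_1+R_2)/I^3 \le t$, with equality in whichever constraint is active at the boundary, so that $\max\{R_1/I^1, R_1/I^2, (R_1+R_2)/I^3\} = t = \|R\|/\mathcal{C}(\theta_R)$. Plugging this into $\Eo$ gives $\Eo(R_1,R_2) = 1 - \|R\|/\mathcal{C}(\theta_R)$, and substituting into Corollary~\ref{cor:twophase lb} yields
\begin{align*}
E(R_1,R_2) \ge E^{lb}_{\text{two-phase}}(R_1,R_2) \ge D_{lb}\Big(1 - \frac{\|R\|}{\mathcal{C}(\theta_R)}\Big),
\end{align*}
which is the claim, provided $D_{lb}$ is achieved (or approached in the supremum) by the same $P_Z$.

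I expect the main obstacle to be the matching between the information quantities $I^i(P)$ appearing in $\Eo$ and the capacity-region description that defines $\mathcal{C}(\theta_R)$. Concretely, one must exhibit a single $P\in\PMAC$ for which the point $\mathcal{C}(\theta_R)(\cos\theta_R,\sin\theta_R)$ is simultaneously a boundary point of both the capacity region and the rate region $\{R_1\le I^1(P), R_2\le I^2(P), R_1+R_2\le I^3(P)\}$ used inside $\Eo$; this requires either invoking the characterization of $\mathcal{C}_{FB}$ via supporting hyperplanes in Theorem~\ref{thm:capacity C lambda} to certify that the boundary point is attained by some admissible $P$, or a careful limiting argument since $\mathcal{C}_{FB}$ is a convex closure and the boundary point may only be approached. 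The subtlety is that the $\sup_{P}$ in $\Eo$ is taken inside the definition, so I would argue that for the specific boundary point determined by $\theta_R$, some $P$ (or a sequence of them) makes all three normalized ratios simultaneously bounded by $t$ with the active constraint tight; the convexity of $\mathcal{C}$ and the fact that the three constraints are exactly the ones defining $\mathcal{C}_{FB}$ in Theorem~\ref{thm:VLC Capacity} are what make this possible, but the details of selecting $P$ and handling the closure operation are where the real work lies.
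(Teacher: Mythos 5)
Your proposal takes a genuinely different route from the paper, and the step you yourself defer is precisely where it breaks down. You work in the \emph{primal} (pentagon) description: you want a single $P\in\PMAC$ (or a sequence of them) whose region $\{R_1\le \ILP{1},\, R_2\le \ILP{2},\, R_1+R_2\le \ILP{3}\}$ contains the boundary point $\mathcal{C}(\theta_R)(\cos\theta_R,\sin\theta_R)$, so that scaling along the ray by $t=\|R\|/\mathcal{C}(\theta_R)$ bounds all three ratios in $\Eo$ by $t$ (note only the inequality $\max_i\{\cdot\}\le t$ is needed, not the equality you claim). But the capacity region is a \emph{convex closure} of such regions (Definition \ref{def:capcity}, Theorem \ref{thm:VLC Capacity}), and a boundary point of the convex hull of a union of pentagons need not lie in, or even be approached by, any single pentagon: points on time-sharing faces are exactly strict convex combinations of points from \emph{different} pentagons. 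Moreover, the quantities $\ILP{i}$ in $\Eo$ are fixed-length objects, $\frac{1}{L}I(\cdot)$ for $P\in\PMAC^L$, whereas the constraints in Theorem \ref{thm:VLC Capacity} involve stopping times $T$ and normalization by $\EE[T]$; so your assertion that "the three constraints are exactly the ones defining $\mathcal{C}_{FB}$ in Theorem \ref{thm:VLC Capacity}" is inaccurate, and Corollary \ref{cor:twophase lb} is a priori \emph{weaker} than the lemma, since its supremum sees only one pentagon at a time. Closing this gap would require a real additional argument (e.g., block concatenation showing directed informations add over independent sub-blocks, so that the union of fixed-length pentagons is dense in its convex hull and reaches the capacity boundary); you name two possible fixes but carry out neither, and this is the crux of the proof, not a detail.

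The paper's proof is structured specifically to avoid this issue: it does not pass through Corollary \ref{cor:twophase lb} at all. It re-runs the two-phase scheme and imposes the first-phase rate constraint in the \emph{dual} form of Theorem \ref{thm:capacity C lambda}: admissibility of the first-phase rates is expressed as $\frac{1}{\gamma_1}\big(\lambda_1R_1+\lambda_2R_2+\lambda_3(R_1+R_2)\big)\le \mathcal{C}_{\underline{\lambda}}$ for all $\underline{\lambda}$, which forces $\gamma_1\ge\sup_{\underline{\lambda}}\sum_i\lambda_iR_i/\mathcal{C}_{\underline{\lambda}}$ and yields the exponent $D_{lb}\big(1-\sup_{\underline{\lambda}}\sum_i\lambda_iR_i/\mathcal{C}_{\underline{\lambda}}\big)$; the supporting-hyperplane description absorbs the convexification for free. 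The polar form then follows not from scaling a pentagon, but from two observations: (i) the optimizing $\underline{\lambda}^*$ is invariant under $(R_1,R_2)\mapsto(\alpha R_1,\alpha R_2)$, and (ii) for the boundary point $(R'_1,R'_2)$ on the ray, $\sup_{\underline{\lambda}}\sum_i\lambda_iR'_i/\mathcal{C}_{\underline{\lambda}}=1$, whence $\sup_{\underline{\lambda}}\sum_i\lambda_iR_i/\mathcal{C}_{\underline{\lambda}}=\|R\|/\mathcal{C}(\theta_R)$. If you wish to salvage your primal route you must prove the density-of-pentagons claim; otherwise you should switch to the dual route the paper takes.
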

This lemma provides a lower bound that decreases linearly by $\frac{\|R\|}{\mathcal{C}(\theta_{R})}$ as $(R_1,R_2)$ gets closer to the capacity boundary (see Fig. \ref{fig: R' and R}(a)). Based on this observation, we present  Fig. \ref{fig: R' and R}(b) that shows the shape of a typical lower bound as a function of the transmission rate pairs.

\begin{figure}
    \centering
    \includegraphics[scale=0.8]{}
    \caption{The three-phase scheme provides more flexibility to adjust the rates during the second phase. Recall that the two-phase lower bound in Lemma \ref{lem:lb_shape} depends  on the ratio of the first stage transmission rates $\norm{R}$ and $\mathcal{C}(\theta)$, the capacity boundary as in Fig. \ref{fig: R' and R}.  The three-phase lower bound is similar but with the flexibility that the first-stage transmission rate can be adjusted by allocating a fraction of the rate to the second stage. For example, suppose the second user transmits with rate $\beta C_2$ during the second phase with $\beta \in [0,1]$ and $C_2$ being the capacity during this stage.  Then the first stage rates become $(R_1, R_2-\beta C_2)$.  In that case, the three-phase lower bound scales with the ratio of the new effective rate and $\mathcal{C}(\theta^2_{R, \beta})$, the capacity boundary at the new angle.  With such flexibility, the three-phase scheme gives a higher error exponent than the two-phase scheme.}
    \label{fig:shape three phase}
\end{figure}
The above bound is a result of a two-phase scheme. Next, we study a geometric representation of a lower bound using the three-phase scheme. Recall that there are three stages in this scheme: (1) data communication for both users; (2) confirmation stage only for one of the users; and (3) confirmation stage for both users. As a result, one user continues data transmission during the second phase. Because of this approach, the user allocates part of its message to the second phase,  lowering its rate during the first phase. For a more intuitive argument, we consider a simplified version of the three-phase lower bound given in Theorem \ref{thm: Error Exp lowerbound three phase}. For that, we consider the following looser bound
\begin{align*}
E(R_1,R_2) \geq  D_{lb} \max_{0\leq \gamma_2 \leq 1} \max\set{\Eo(R_1, R_2-\gamma_2C_2)-\gamma_2, \Eo(R_1-\gamma_2C_1, R_2)-\gamma_2},
\end{align*}
where, $C_1=\max_{x_1}C_1(x_1)$ and  $C_2=\max_{x_2}C_2(x_2)$ are the capacity of the \ac{ptp} channel during the second phase. We obtain this bound by noting that $\bar{D}_{P_{{\bfZ}}}(00\| ab)$ in the theorem is greater than $D_{lb}$ as in \eqref{eq: D_l}. Moreover, we ignored the non-negative terms $\gamma_2\DQbar{1}$ and $\gamma_2\DQbar{2}$ in \eqref{eq:El 1} and \eqref{eq:El 2}.

Using a similar argument as in Lemma \ref{lem:lb_shape}, we obtain a geometric interpretation of the above bound as
\begin{align*}
E(R_1,R_2) \geq  D_{lb} \max_{0\leq \beta \leq 1} \max\set{(1-\beta-\frac{\|(R_1-\beta C_1, R_2)\|}{\mathcal{C}(\theta^1_{R, \beta})}), (1-\beta-\frac{\|(R_1, R_2 - \beta C_2)\|}{\mathcal{C}(\theta^2_{R, \beta})})},
\end{align*}
where $\theta^1_{R, \beta}$ and $\theta^2_{R, \beta}$ are the angle of the point $(R_1-\beta C_1, R_2)$ and $(R_1, R_2 - \beta C_2)$ in polar coordinates, respectively. Note that the second maximization determines whether the first or the second user must continue data transmission during the second stage. Moreover, the parameter $\beta$ determines the amount of the rate allocated to the second phase. Therefore, by tuning $\beta$ and choosing which user to transmit information  to during the second stage, we get   the effective rate during the first stage ranging from $(R_1-C_1, R_2)$, when the first user sends data with rate $C_1$ during the second stage ($\beta=1$),  to $(R_1, R_2)$ where there is no second stage ($\beta=0$), and   to $(R_1, R_2-C_2)$  where the second user sends data with rate $C_2$ during the second stage ($\beta=1$). These rates are demonstrated as the vertical and horizontal lines in Fig. \ref{fig:shape three phase}. Observe that by setting $\beta$, the corresponding angle ranges from $\theta_{\min}$ to $\theta_{\max}$ as in Fig. \ref{fig:shape three phase}. These angles are calculated as 
\begin{align*}
    \theta = \arctan(\frac{R_2}{R_1})\qquad 
    \theta_{\min}= \arctan(\frac{[R_2-C_2]^+}{R_1}) \qquad \theta_{\max}= \arctan(\frac{R_2}{[R_1-C_1]^+}).
\end{align*}
As a result, $\mathcal{C}(\theta_R)$ is the capacity boundary as in Fig.~\ref{fig: R' and R}   ranges between $\mathcal{C}(\theta_{\min})$  to $\mathcal{C}(\theta_{\max})$ as in Fig. \ref{fig:shape three phase}. 
Therefore, by appropriately tuning $\beta$, we can get a better achievable exponent than the two-phase scheme. In other words,  the three-phase scheme allows more flexibility in changing the angle $\theta_{R}$ by allocating parts of the rates to the second phase.  We should note that the three-phase bound in Theorem \ref{thm: Error Exp lowerbound three phase} is larger due to the additional terms such as $\gamma_2\DQbar{1}$ and $\gamma_2\DQbar{2}$. We ignored those terms here for the sake of a more intuitive argument.

\subsection{On the tightness of the error exponent bounds}
In what follows, we provide examples of classes of channels for which the lower and upper bound coincide. 
The following is an example of a MAC for which the two-phase bounds match. 
\begin{example}\label{ex:symmetric MAC}
Consider a MAC with input alphabets $\mathcal{X}_1=\mathcal{X}_2=\{0,1,2\}$, and output alphabet $\mathcal{Y}=\{0,1,2\}$. The following relation describes the transition probability of the channel:
$Y=X_1 \oplus _3 X_2 \oplus_3 N_p$,
where the additions are modulo-3, and $N_p$ is a random variable with $P(N_p=1)=P(N_p=2)=p$, and $P(N_p=0)=1-2p$, where $0\leq p\leq 1/2$. 
It can be shown that for this channel $D_l=D_u=(1-3p) \log \frac{1-2p}{p}.$
Hence, the upper-bound in Corollary \ref{cor:twophase up} matches the lower-bound in Corollary \ref{cor:twophase lb}.
\end{example}
The argument in the above example can be extended to $m$-ary additive MACs for $m>2$,
where all the random variables take values from $\ZZ_m$, and $N_p$ is a random variable with $P(N_p=i)=p$ for any $i\in \ZZ_m, ~i\neq 0$ and $P(N_p=0)=1-(m-1)p$. It can be shown that for this channel 
$$D_l=D_u=(1-mp) \log \frac{1-(m-1)p}{p}.$$
Next, MAC consists of two independent channels. Interestingly, the two-phase bounds do not match but the three-phase bounds match.  

\begin{example}\label{ex:two ptp}
Consider a MAC in which the output is $(y_1, y_2)\in \set{0,1}\times \set{0,1}$ with the transition probability matrix described by the product $Q_{Y_1, Y_2 | X_1, X_2}= Q_{Y_1|X_1}Q_{Y_2|X_2}$. This MAC consists of two parallel (independent) \ac{ptp} channels. Suppose $C_1$ and $C_2$ are the first and second parallel channel capacities. Next, we define the corresponding relative entropies of these channels. For $i=1,2$, let $$D_i=\max_{x_i, z_i} D_{KL}\Big(Q_{Y_i|X_i}(\cdot | x_i) \| Q_{Y_i|X_i}(\cdot | z_i)\Big).$$ 

We start with calculating the two-phase bounds in Corollaries \ref{cor:twophase up} and \ref{cor:twophase lb}. They simplify to the following 
\begin{align*}
   E^{ub}_{\text{two-phase}}(R_1,R_2) &= \max(D_1, D_2) \min\Big\{(1-\frac{R_1}{C_1}), (1-\frac{R_2}{C_2})\Big\}\\
    E^{lb}_{\text{two-phase}}(R_1,R_2) &= \min(D_1, D_2) \min\Big\{(1-\frac{R_1}{C_1}), (1-\frac{R_2}{C_2})\Big\}.
\end{align*}
These bounds do not match when $D_1\neq D_2$.

Next, we  study the upper bound in Theorem \ref{thm:upper bound three phase}. Note that the relative entropy terms in the theorem are equal to $D^1(P)=D_1, D^2(P)=D_2$, and $D^3=D_1+D_2$. Moreover, maximizing over $P$ gives $I^1(P)=C_1, I^2(P)=C_2,$ and $I^3(P)=C_1+C_2$. Hence, it is not difficult to see that the upper bound simplifies to the following
\begin{align*}
    E(R_1,R_2)\leq E^{ub}(R_1, R_2) = \min\Big\{D_1(1-\frac{R_1}{C_1}), D_2(1-\frac{R_2}{C_2})\Big\},
\end{align*}
where we used the inequality $\max\{\frac{a}{b}, \frac{a'}{b'}\}\geq \frac{a+a'}{b+b'}$ for any $a, a',b, b'>0$.
Next, we show that this upper bound matches the lower bound in Theorem \ref{thm: Error Exp lowerbound three phase}. The calculations that are more involved are given below. Without loss of generality, assume $\frac{R_1}{C_1}\leq \frac{R_2}{C_2}$.

We start with simplifying $\Elb^1(R_1,R_2)$ in \eqref{eq:El 1}. Since the channels are independent, then
 $C_2(x_1)=C_2$ and $C_1(x_2)=C_1$ for any $x_1\in \mathcal{X}_1$ and $x_2\in \mathcal{X}_2$. Hence, we can independently optimize over $x_1$ in  \eqref{eq:El 1}. Note that  $\bar{D}_{P^*_{Z}}(00\| 10)=\DQbar{1}(x^*_1, P_{X_2})=D_1$, where the superscript $^*$ implies that we optimized over the corresponding variable. Similarly, note that $\bar{D}_{P^*_{Z}}(00\| 01)=\DQbar{2}(x^*_2, P_{X_1})=D_2$ and $\bar{D}_{P^*_{Z}}(00\| 11)=D_1+D_2$.  As a result, $\Elb^1(R_1,R_2)$ simplifies to the following
\begin{align*}
\Elb^1(R_1,R_2) =  \sup_{0\leq \gamma_2\leq 1}\min\Big\{ &
 \big(\Eo(R_1, R_2-\gamma_2C_2)-\gamma_2\big)D_2,\\
&  \big(\Eo(R_1, R_2-\gamma_2C_2)-\gamma_2\big) D_1+\gamma_2D_1,\\
&  \big(\Eo(R_1, R_2-\gamma_2C_2)-\gamma_2\big) (D_1+D_2)+\gamma_2D_1\Big\}.
\end{align*}
Note that the third term can be ignored as it is always larger than the second term. We set $\gamma_2=\frac{R_2}{C_2}-\frac{R_1}{C_1}$. This is a valid assignment as $0\leq \gamma_2\leq 1$. With this assignment,  $\Eo(R_1, R_2-\gamma_2C_2-\gamma_2\big) = (1-\frac{R_1}{C_1})$. Therefore, due to the supremum over $\gamma_2$, with the above assignment, we get a lower bound for $\Elb^1(R_1,R_2)$:
\begin{align*}
\Elb^1(R_1,R_2) &\geq  \min\Big\{ \big(1-\frac{R_1}{C_1}-(\frac{R_2}{C_2}-\frac{R_1}{C_1})\big)D_2,  \big(1-\frac{R_1}{C_1}\big) D_1\Big\}\\
&=\min\Big\{D_1(1-\frac{R_1}{C_1}), D_2(1-\frac{R_2}{C_2})\Big\}.
\end{align*}
The lower bound in Theorem \ref{thm: Error Exp lowerbound three phase} is greater or equal to the above expression. Hence, we get that 
\[
E(R_1, R_2) \geq \min\Big\{D_1(1-\frac{R_1}{C_1}), D_2(1-\frac{R_2}{C_2})\Big\}. 
\]
This is a lower bound that matches the upper bound we derived above.  
\end{example}


%
%

\section{Analysis and Proof Techniques}\label{sec:analysis}

The typical behavior of the entropy is shown in Fig.~\ref{fig:entropy12} (generated for a \ac{ptp} channel). The figure shows $H(W|y^t), t>0$, where $W$ is the message taking values from $[1:2^{10}]$, and $y^t$ is the channel output realization. 
For MAC, we study the drift of five different entropies involving the messages' individual, conditional, and joint entropy.  We derive the upper bound by analyzing the rate of the drift of these entropies. Particularly, we derive bounds on the slope of the linear and logarithmic drifts in terms of variable-length mutual information and relative entropy, respectively. For that, we use tools from the theory of martingales and variable-length information measures. 

\begin{figure}[hbtp]
\centering
\vspace{-10pt}
\includegraphics[scale=0.35]{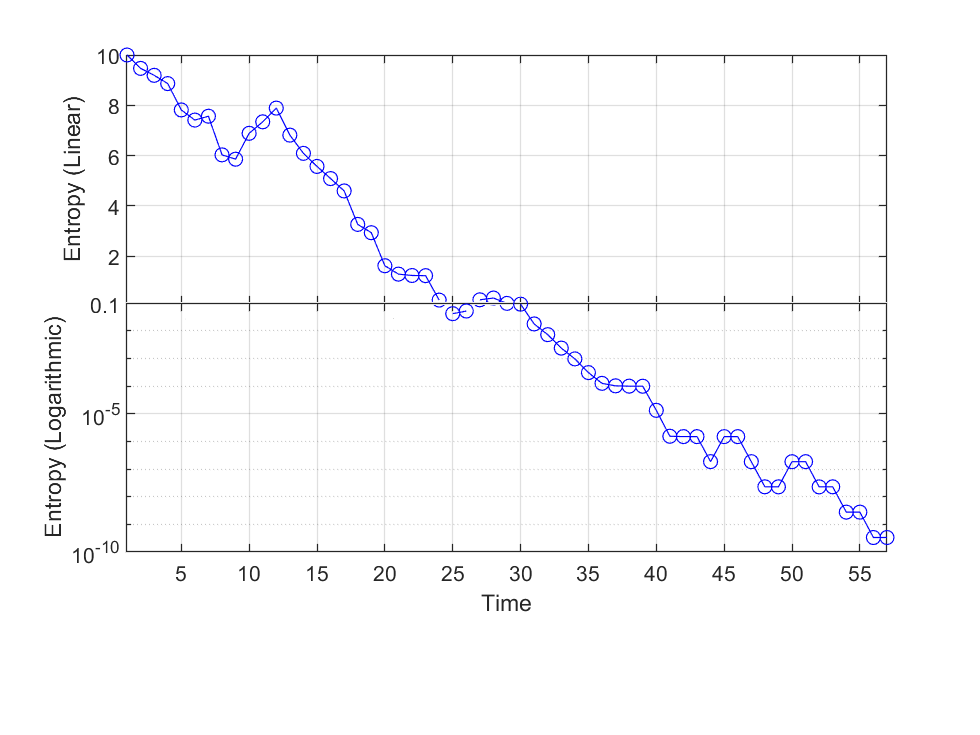}
\vspace{-35pt}
\caption{Realization of the entropy process decrease. The plot uses linear scale for values above a threshold (e.g., 0.1) and log scale for values below this threshold.}
\label{fig:entropy12}
\end{figure}

\subsection{Prune-timing technique}
One of the major technical challenges in our proofs is the transient period from linear to logarithmic drifts. We address the transient period by proposing a new technique called ``time pruning". Particularly, we allow the time to increase randomly as a stopping time with respect to the underlying random process.  

Let $\set{H_t}_{t>0}$ be a non-negative random process. Based on this process we  define the pruned time random process $\set{t_n}_{n>0}$. First, for any $\epsilon \geq 0$ and $N\in \NN$  define the following random variables
\begin{align}\label{eq:Tleps}
\Tleps &\deq \inf\set{t>0: H_t\leq \epsilon} \wedge N,\\\label{eq:Tueps}
\Tueps &\deq \sup\set{t>0: H_{t-1}\geq \epsilon} \wedge N =\Big[ \sup\set{t>0: H_t\geq \epsilon}+1\Big]\wedge N.
\end{align}
Then the pruned time process is defined as 
\begin{align}\label{eq:tn}
t_n \deq \begin{cases}
               n & \text{if}~ n< \Tleps,\\
               n\vee \Tueps & \text{if}~ \Tleps \leq n\leq N,\\
                N  & \text{if}~ n> N.                 
            \end{cases}
\end{align} 
The diagram of $t_n$ as a function of $n$ is demonstrated in Fig. \ref{fig:t_n}.
\begin{figure}[hbtp]
\centering
\includegraphics[scale=0.9]{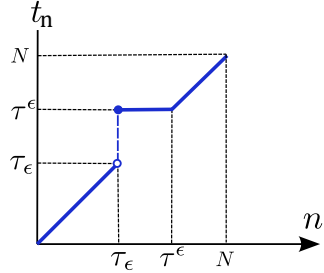}
\caption{The evolution of $t_n$ with $n$.}
\label{fig:t_n}
\end{figure}
Now we consider the random sampling of the random process $\set{Y_t}_{t>0}$ based on the pruned time process $\set{t_n}_{n>0}$. Particularly, we are interested in the resulted random process $\set{Y^{t_n}}_{n>0}$ which represents the receiver's observation in pruned time $t_n$.
\begin{example}
Fig. \ref{fig:Yt_n} shows an example of the process $\set{Y^{t_n}}_{n>0}$. Let $N=4$, and $Y_t$ take values from $\set{0, 1}$. Then, each path in the tree in Fig. \ref{fig:Yt_n} represents a particular realization of $Y^N$. For example, the left most path corresponds to $y^4=(0,0,0,0)$. In this case $\Tleps=2$ and $\Tueps=4$. Hence, $t_1=1, t_2=4, t_3=4, t_4=4$. Therefore, the pruned process evolves as
 $$y^{t_1}=0, \qquad y^{t_2}=(0,0,0,0),  \qquad y^{t_3}=(0,0,0,0),  \qquad y^{t_4}= (0,0,0,0).$$
It implies that in the second sample a jump in the future occurs and all samples of $y_t$ from $t=\Tleps=2$ to  $\Tueps=4$ become available.  Hence, in this example, the observed samples in pruned time are $0, (0,0,0),-,-$, implying that second sample contains $(y_2, y_3, y_4)$ and the third and fourth samples contain no new information.
\end{example}

\begin{figure}[hbtp]
\centering
\vspace{-20pt}
\includegraphics[scale=1.7]{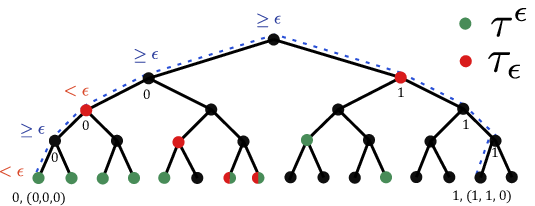}
\caption{An example of a random sampled random process $\set{Y_t}_{t>0}$ based on the pruned time $\set{t_n}_{n>0}$. Here, $N=4$, and $Y_t$ takes values from $\set{0, 1}$. The upper labels ($\geq \epsilon$, $<\epsilon$) indicate whether $H_t$ is greater or smaller than $\epsilon$. Each path in the tree represents a particular realization of $Y^N$. For example, the left most path corresponds to $y^4=(0,0,0,0)$. In this case $\Tleps=2$ and $\Tueps=4$. Hence, $t_1=1, t_2=4, t_3=4, t_4=4$. Therefore, $y^{t_1}=0, y^{t_2}=(0,0,0,0), y^{t_3}=(0,0,0,0), y^{t_4}= (0,0,0,0)$. Hence, the observed samples in pruned time are $0, (0,0,0),-,-$, implying that second sample contains $(y_2, y_3, y_4)$ and the third and fourth samples contain no new information.}
\label{fig:Yt_n}
\end{figure}

Let $\{\tilde{Y}_n\}_{n>0}$ be the sampled version of $\set{Y_t}_{t>0}$ with $\tilde{Y}_n=Y^{t_n}$. For instance, in Fig. \ref{fig:Yt_n}, 
$$\tilde{y}_1=0, \qquad \tilde{y}_2=(0,0,0,0),  \qquad \tilde{y}_3=(0,0,0,0),  \qquad \tilde{y}_4= (0,0,0,0).$$
Let $\mathcal{G}_n, n>0$ represent the $\sigma-$algebra generated by $\tilde{Y}_n$. Then, as $t_n$ is monotonically increasing with $n$, $\mathcal{G}_n$ is a filtration, i.e., $\mathcal{G}_1\subseteq \mathcal{G}_2 \subseteq \cdots \subseteq \mathcal{G}_N$.

Note that $\Tleps$ is a stopping time with respect to $\{H_t\}_{t>0}$, but this is not the case for $\Tueps$. This raises the question as to whether one can determine the occurrence of  $\Tleps$ and $\Tueps$ based on the past observation of the sampled process $\{\tilde{Y}_n\}_{n>0}$. We show that the answer is affirmative. More precisely, whether $\Tleps\geq t_n$ and/or $\Tueps\geq t_n$ are determined  having $\tilde{Y}^n$. For that, we prove the following statements. 

\begin{lem}\label{prop:measurable}
The processes $t_n$, $t_n\wedge \Tleps$, and $t_n\wedge \Tueps$ are measurable with respect to the filtration $\mathcal{G}_n, n>0$.
\end{lem}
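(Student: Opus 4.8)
The plan is to exploit the single defining feature of the pruned process: by construction $t_n$ is precisely the length of the observed word $\tilde Y_n=Y^{t_n}$, and the underlying entropy process $\{H_t\}$ is adapted to $\mathcal{F}_t=\sigma(Y^t)$. Consequently, for every index $s\le t_n$ the value $H_s$ is a deterministic function of the prefix $Y^s$ of $\tilde Y_n$, hence $\mathcal{G}_n$-measurable. This gives the workhorse of the proof: any quantity expressible purely in terms of $t_n$ and the finite collection $\{H_s:0\le s\le t_n\}$ is automatically $\mathcal{G}_n$-measurable. Measurability of $t_n$ itself is then immediate, since $t_n$ is recovered from $\tilde Y_n$ by reading off its length (equivalently, the position of the last non-$\zeta$ coordinate). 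It therefore remains to express $t_n\wedge\Tleps$ and $t_n\wedge\Tueps$ through the observed entropies.

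For $t_n\wedge\Tleps$ I would first record that a down-crossing of level $\epsilon$ is visible inside the window $[1,t_n]$ if and only if $n\ge\Tleps$: in the regime $n<\Tleps$ the construction gives $t_n=n$ with $H_s>\epsilon$ for all $s\le t_n$, whereas $n\ge\Tleps$ forces $\Tleps\le n\le t_n$ and hence an observed index with $H_s\le\epsilon$. Thus the indicator $\11\{\exists\, s\le t_n: H_s\le\epsilon\}=\11\{n\ge\Tleps\}$ is $\mathcal{G}_n$-measurable, and on the first event the first observed down-crossing equals the global first down-crossing $\Tleps\le t_n$, while on its complement $\Tleps>t_n$. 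Either way one obtains the closed form $t_n\wedge\Tleps=\inf\{s\le t_n: H_s\le\epsilon\}\wedge t_n$ (with $\inf\emptyset=+\infty$), which is a function of the observed entropies and is therefore $\mathcal{G}_n$-measurable.

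The main obstacle, and the step that justifies the whole pruning device, is $t_n\wedge\Tueps$, because $\Tueps$ is not a stopping time with respect to $\{\mathcal{F}_t\}$: it is defined through the last up-crossing of $\epsilon$ and so looks into the future. The key fact to prove is that the future needed to evaluate $\Tueps$ has already been revealed once the jump has occurred. Concretely, on $\{n\ge\Tleps\}$ the definition gives $t_n=n\vee\Tueps\ge\Tueps$ (and $t_n=N\ge\Tueps$ when $n>N$), so the entire up-crossing region $\{s:H_s\ge\epsilon\}$ lies inside the observed window; since, by definition of $\Tueps$, no up-crossing occurs at or after $\Tueps$, the windowed quantity $\big[\sup\{s\le t_n: H_s\ge\epsilon\}+1\big]\wedge N$ reproduces $\Tueps$ exactly. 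On the complementary event $\{n<\Tleps\}$ one has $t_n=n<\Tleps\le\Tueps$, whence $t_n\wedge\Tueps=t_n$. Selecting between the two branches by the $\mathcal{G}_n$-measurable indicator of the previous paragraph yields $t_n\wedge\Tueps$ as a function of $\{H_s:s\le t_n\}$ and $t_n$, completing the argument. I expect the only delicate bookkeeping to be the boundary behaviour at the threshold (strict versus weak crossings of $\epsilon$) and the $\wedge N$ truncation; both are cleanly resolved by the case split $t_n=N$ versus $t_n<N$, since $t_n=N$ means the complete trajectory $Y^N$, and hence $H_0,\dots,H_N$, is observed.
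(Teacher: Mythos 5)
Your proof is correct and takes essentially the same route as the paper's: the paper likewise reduces everything to functions of the observed word, writes $t_n\wedge \Tleps = n\,\11\set{\Tleps>n}+\Tleps\,\11\set{\Tleps\leq n}$ and $t_n\wedge \Tueps = n\,\11\set{\Tleps>n}+\Tueps\,\11\set{\Tleps\leq n}$, and argues that on $\set{\Tleps\leq n}$ one has $t_n\geq\Tueps$ so that $\Tleps$ and $\Tueps$ are determined by the samples already revealed; your windowed first-down-crossing and last-up-crossing formulas are just a more explicit rendering of the paper's ``counting the number of samples in the jump.'' One small patch: your claimed equivalence $\11\set{\exists\, s\le t_n: H_s\le\epsilon}=\11\set{n\ge\Tleps}$ fails in the degenerate case where $\Tleps=N$ purely by the $\wedge N$ truncation (no actual down-crossing by time $N$), but there $\Tleps=\Tueps=N\geq t_n$, so your ``no visible crossing'' branch still returns the correct value $t_n$ for both minima and the conclusion is unaffected.
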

\begin{proof}
We show that the random processes in the statement are functions of $\tilde{Y}^n, n>0$. Note that $t_n$ is measurable as it is a function of the length of the vector $\tilde{Y}^n$. For any $n\leq N$, the other random processes can be written as 
\begin{align*}
 t_n\wedge \Tleps = n \11\set{\Tleps >n}+\Tleps \11\set{\Tleps\leq n}, \qquad t_n\wedge \Tueps = n \11\set{\Tleps > n}+\Tueps \11\set{\Tleps\leq n}.
 \end{align*} 
 This statement follows from the definition of $t_n$ in \eqref{eq:tn} implying when $t_n<\Tleps$ then $t_n =n$, and that when $t_n \geq \Tleps$ then $t_n \geq \Tueps$.  Since, $\Tleps$ is a stopping time w.r.t $\set{Y_t}_{t>0}$, then $\11\set{\Tleps > n}$ is measurable w.r.t $\mathcal{F}_t$ and, hence, w.r.t $\mathcal{G}_n$. Moreover,  $\Tleps \11\set{t_n\geq \Tleps}$ is also measurable w.r.t $\mathcal{G}_n$. This is because $\Tleps$ is completely determined when $n\geq \Tleps$, as we have more than $\Tleps$ samples. Consequently, $t_n\wedge \Tleps$ is measurable. Similarly, $\Tueps \11\set{\Tleps\leq n}$ is measurable w.r.t $\mathcal{G}_n$; because when $\Tleps\leq n$, then $t_n\geq \Tueps$. In that case, $t_n$ already jumped up and $\Tueps$ is determined by counting the number of samples in the jump.  Consequently,  $t_n\wedge \Tueps$ is also measurable. 
\end{proof}

\subsection{Analysis of the entropy drift of the messages}
Next, we analyze the drift of massages' entropy. 
For that, we consider the marginal and conditional entropy of the messages. More precisely, define the following random processes: 
\begin{align*}
\Hbar_t^1 & = H(W_1 | W_2, \mathcal{F}_t ), &    \Hbar_t^2 &= H(W_2 | W_1, \mathcal{F}_t ),    & \Hbar_t^3  & = H(W_1, W_2 | \mathcal{F}_t ),\\\numberthis \label{eq:H processes}
\Htld_t^1 & = H(W_1 | \mathcal{F}_t ), &    \Htld_t^2 &= H(W_2 |\mathcal{F}_t ),    & \Htld_t^3 & = H(W_1, W_2 | \mathcal{F}_t ).
\end{align*}
We analyze the drift of each pair $\{(\Hbar_t^i, \Htld_t^i)\}_{t>0}$, where $i=1,2,3$. First, the following result is provided with the proof in Appendix  \ref{proof:lem:pruned sub martingale 2}.
\begin{lem}\label{lem:pruned sub martingale 2}
Let $\set{\Hbar_r}_{r>0}$ and $\{\Htld_r\}_{r>0}$  be a pair of random processes with the following properties with respect to a filtration $\mathcal{F}_r, r>0,$: 
\begin{subequations}\label{eq:Hbar Hrld conditions}
\begin{alignat}{5}\label{eq:Hbar less Htld}
\Hbar_r &\leq \Htld_r  & \qquad \text{if}~ r&> 0,\\\label{eq:Hbar linear K1}
 \EE[\Hbar_{r+1} - \Hbar_r |\mathcal{F}_r]&\geq -k_{1,r+1},  & \qquad  \\\label{eq:Htld log K2}
\EE[\log \Htld_{r+1} - \log \Htld_r |\mathcal{F}_r]&\geq -k_{2,r+1}      & \qquad \text{if}~   \Htld_r & < \epsilon\\\label{eq:Htld log K3}
\abs{\log \Htld_{r+1} - \log \Htld_r} & \leq k_3    & &\\\label{eq:Htld K4}
\abs{ \Htld_{r+1}-\Htld_r} &\leq k_4   & &
\end{alignat}
\end{subequations}
where $k_{1,r}, k_{2,r}$ are non-negative random variables measurable w.r.t $\mathcal{F}_{r}$, and $k_{1,r}\leq k_{2,r}$ for all $r>0$. Moreover, $k_3, k_4$ are non-negative constants. Let $\set{\tilde{t}_n}_{n>0}$ be the pruned time process defined in \eqref{eq:tn} with respect to $\{\Htld_r\}_{r>0}$. Further, let $\Tlepstld$ and $\Tuepstld$ be the random variables as in \eqref{eq:Tleps} and \eqref{eq:Tueps} but with $H_t=\Htld_t$, respectively. Given $\epsilon \in (0,1)$ and $I\geq D>0$ define the random process $\set{Z}_{t>0}$ as 
\begin{align*}
Z_t \deq (\frac{\Hbar_t-\epsilon}{I})\11\{\Htld_t\geq \epsilon\}+\Big(\frac{\log \Htld_t - \log \epsilon}{D}+f(\log \frac{\Htld_t}{\epsilon}) \Big) \11\{\Htld_t<\epsilon\},
\end{align*}
where $f(y)=\frac{1-e^{\lambda y}}{\lambda D}$ with $\lambda >0$. Further define the random process $\set{S_t}_{t>0}$ as 
\begin{align*}
S_t \deq \sum_{r=1}^{t\wedge \Tlepstld} \frac{k_{1,r}}{I} + \sum_{r=t\wedge \Tlepstld+1}^{t\wedge \Tuepstld} \frac{k_4}{I} \11\{\Htld_{r-1} \geq \addvb{\sqrt{\epsilon}}\}+\sum_{r=t\wedge \Tuepstld+1}^{t} \frac{k_{2,r}}{D} + \addvb{\sqrt{\epsilon}} \frac{N}{I}\11\{t\geq \Tuepstld\},
\end{align*}
Lastly define the random process $\set{\Zprune}_{n>0}$ as $\Zprune \deq Z_{t_n}+S_{t_n}.$ Then, for small enough $\lambda>0$ the process $\set{\Zprune}_{n>0}$ is a sub-martingale with respect to the time pruned filtration $\mathcal{G}_{n}, n>0$. 
\end{lem}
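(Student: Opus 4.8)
The plan is to verify the sub-martingale property directly from its definition, namely to show that $\EE[L_{n+1}-L_n\mid \mathcal G_n]\ge 0$ for every $n$, where $L_n=Z_{t_n}+S_{t_n}$. The first thing I would record is that, by Lemma~\ref{prop:measurable}, the quantities $t_n$, $t_n\wedge\Tlepstld$ and $t_n\wedge\Tuepstld$ are all $\mathcal G_n$-measurable, so every term entering $L_n$ is adapted and the conditional expectations below are well defined. I would then split the one-step analysis according to the position of $n$ relative to the two stopping times $\Tlepstld$ and $\Tuepstld$, which is precisely how the pruned time in \eqref{eq:tn} is built: (i) the \emph{linear region} $n+1<\Tlepstld$, where $t_n=n$ and $\Htld_{t_n}>\epsilon$, so the $\Hbar$-branch of $Z$ is active; (ii) the single \emph{jump}, occurring when $n$ increases from $\Tlepstld-1$ to $\Tlepstld$, where $t_n$ leaps from $\Tlepstld-1$ to $\Tuepstld$; (iii) the \emph{frozen region} $\Tlepstld\le n<\Tuepstld$, where $t_n\equiv\Tuepstld$ so $L_{n+1}-L_n=0$ trivially; and (iv) the \emph{logarithmic region} $n\ge \Tuepstld$, where $t_n=n$, $\Htld_{t_n}<\epsilon$, and the $\log\Htld$-branch is active.

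In the linear region the increment is $Z_{n+1}-Z_n=(\Hbar_{n+1}-\Hbar_n)/I$, and since here $\mathcal G_n=\mathcal F_n$, the drift hypothesis \eqref{eq:Hbar linear K1} gives $\EE[Z_{n+1}-Z_n\mid\mathcal F_n]\ge -k_{1,n+1}/I$, which is matched exactly by the increment $k_{1,n+1}/I$ of the first sum in $S_t$; hence the conditional drift of $L_n$ is non-negative there. In the logarithmic region I would work with $g(y)\deq \tfrac1D\big(y+\tfrac{1-e^{\lambda y}}{\lambda}\big)$, so that the $\log\Htld$-branch of $Z$ equals $g(y_t)$ with $y_t=\log(\Htld_t/\epsilon)<0$. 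Writing $\Delta=y_{r+1}-y_r$, a Taylor expansion of $g$ (equivalently of $e^{\lambda\Delta}$, using the bounded-increment bound $\abs{\Delta}\le k_3$ from \eqref{eq:Htld log K3}) yields
\begin{align*}
\EE[g(y_{r+1})-g(y_r)\mid\mathcal F_r]\ge \tfrac1D\big(1-e^{\lambda y_r}\big)\,\EE[\Delta\mid\mathcal F_r]-\tfrac{\lambda}{2D}e^{\lambda k_3}\,\EE[\Delta^2\mid\mathcal F_r].
\end{align*}
Because $1-e^{\lambda y_r}\in[0,1)$ and $\EE[\Delta\mid\mathcal F_r]\ge -k_{2,r+1}$ by \eqref{eq:Htld log K2}, the first term is at least $-k_{2,r+1}/D$, which the third sum in $S_t$ cancels, while the second term is an $O(\lambda)$ correction. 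The role of the exponential correction $f$ is exactly to make $g$ interpolate between a quadratic profile near the threshold (where $\abs{y_r}$ is small, so the residual is small) and a linear profile of slope $1/D$ deep in the logarithmic regime (where $1-e^{\lambda y_r}\to1$ and the clean drift bound \eqref{eq:Htld log K2} dominates); choosing $\lambda>0$ small enough then keeps the residual non-negative.

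The genuinely delicate step, and the one I expect to be the main obstacle, is the jump~(ii): the pruned process replaces an entire excursion of $\Htld$ across $\epsilon$ by the single increment $Z_{\Tuepstld}-Z_{\Tlepstld-1}$, which switches $Z$ from the $\Hbar$-branch to the $\log\Htld$-branch and so cannot be controlled by either drift condition alone. The idea is first to pin both endpoints near $\epsilon$: by \eqref{eq:Htld K4} one has $\Htld_{\Tlepstld-1}\le\epsilon+k_4$, whence $Z_{\Tlepstld-1}\le k_4/I$ using $\Hbar\le\Htld$ from \eqref{eq:Hbar less Htld}, and $\Htld_{\Tuepstld}\ge\epsilon-k_4$, so that $g(y_{\Tuepstld})$ is bounded below; this controls the size of the (possibly negative) jump. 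It then remains to show that the compensator accumulated over the skipped interval — the terms $\sum_{r=\Tlepstld+1}^{\Tuepstld}(k_4/I)\,\11\{\Htld_{r-1}\ge\sqrt\epsilon\}$ together with the one-time buffer $\sqrt\epsilon\,N/I$ — dominates this jump. Here the threshold $\sqrt\epsilon$ is what makes the bookkeeping close: telescoping $Z_{\Tuepstld}-Z_{\Tlepstld-1}$ over the real-time steps, those with $\Htld_{r-1}\ge\sqrt\epsilon$ contribute a per-step variation of at most $k_4/I$ and are charged directly, whereas those with $\Htld_{r-1}<\sqrt\epsilon$ have negligible $Z$-values and are absorbed into the $\sqrt\epsilon\,N/I$ buffer. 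Assembling the four regions and transferring the one-step inequalities through $\EE[\,\cdot\mid\mathcal G_n]$ via the measurability in Lemma~\ref{prop:measurable} then delivers the sub-martingale property.
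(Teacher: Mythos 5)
Your overall architecture coincides with the paper's: measurability from Lemma \ref{prop:measurable}, the same four-region split induced by \eqref{eq:tn}, and an identical treatment of the linear region (the paper proves a single-process version first, Lemma \ref{lem:pruned sub martingale 1}, then adapts it to the pair via \eqref{eq:Hbar less Htld}). But the two steps you yourself flag as delicate are exactly where your argument does not close.

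\textbf{The jump.} The paper's handling of this step rests on one inequality that never appears in your proposal, namely \eqref{eq:fy 1}: for $\lambda$ small enough, the log-branch of $Z$ dominates the linear functional,
\begin{align*}
\frac{\log \Htld_t-\log\epsilon}{D}+f\Big(\log \frac{\Htld_t}{\epsilon}\Big)\;\geq\;\frac{\Htld_t-\epsilon}{I}\;\geq\;\frac{\Hbar_t-\epsilon}{I},
\end{align*}
on the range $\log(\Htld_t/\epsilon)\in[-k_3,0)$ that actually occurs at $t=\Tuepstld$, because \eqref{eq:Htld log K3} pins $\Htld_{\Tuepstld}\geq 2^{-k_3}\Htld_{\Tuepstld-1}\geq 2^{-k_3}\epsilon$. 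Hence $Z_{\Tuepstld}\geq(\Hbar_{\Tuepstld}-\epsilon)/I$ whichever branch is active, and the entire jump reduces to telescoping the \emph{linear-scale} increments $\Hbar_r-\Hbar_{r-1}$ over the skipped interval: each step is either $\geq -k_4$ (charged to a $k_4/I$ compensator term) or $\geq -\Hbar_{r-1}\geq-\Htld_{r-1}>-\sqrt{\epsilon}$ when $\Htld_{r-1}<\sqrt{\epsilon}$ (absorbed by the $\sqrt{\epsilon}\,N/I$ buffer). Your route --- pin the endpoints and telescope $Z$ itself with per-step charge $k_4/I$ --- fails on three counts. First, the pinning $\Htld_{\Tuepstld}\geq\epsilon-k_4$ is vacuous in the regime the lemma is used in ($k_4=\log M_i\gg\epsilon$), so you have not bounded $g(y_{\Tuepstld})$ below; the usable pinning comes from \eqref{eq:Htld log K3}, not \eqref{eq:Htld K4}. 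Second, nothing in the hypotheses bounds the per-step variation of $Z$ on the linear branch by $k_4/I$: condition \eqref{eq:Htld K4} constrains $\Htld$, while that branch of $Z$ is built from $\Hbar$. Third, the steps with $\Htld_{r-1}<\sqrt{\epsilon}$ do \emph{not} have negligible $Z$-values --- there $Z$ sits on the log branch at depth of order $\frac{1}{D}\log(\Htld_{r-1}/\epsilon)$, large and negative, whereas the buffer supplies only $\sqrt{\epsilon}\,N/I$ in total. What is negligible below $\sqrt{\epsilon}$ is the \emph{linear} quantity $\Hbar_{r-1}$, and that is precisely why the paper applies \eqref{eq:fy 1} once, at the single endpoint $\Tuepstld$, and telescopes on the linear scale thereafter.

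\textbf{The logarithmic region.} Your displayed chain ends at a negative quantity: after bounding the first-order term by $-k_{2,r+1}/D$ and cancelling it \emph{exactly} against the compensator increment $k_{2,r+1}/D$, what remains is $-\frac{\lambda}{2D}e^{\lambda k_3}\EE[\Delta^2\,|\,\mathcal{F}_r]<0$, and no choice of ``small $\lambda$'' turns a strictly negative bound into a non-negative one. The fix, which is what the paper's Taylor step implements, is to keep the slack you discarded: use $(1-e^{\lambda y_r})\EE[\Delta\,|\,\mathcal{F}_r]+k_{2,r+1}\geq e^{\lambda y_r}k_{2,r+1}$, and bound the second-order term via $\zeta\leq y_r+k_3$ (not merely $\zeta\leq k_3$), so it is at most $\frac{\lambda}{2D}k_3^2e^{\lambda k_3}e^{\lambda y_r}$ in magnitude. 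Both surviving terms then carry the common factor $e^{\lambda y_r}$, the increment is at least $\frac{e^{\lambda y_r}}{D}\big(k_{2,r+1}-\frac{\lambda}{2}k_3^2e^{\lambda k_3}\big)$, and only now does taking $\lambda$ small help. Your prose about $f$ interpolating between the quadratic and linear regimes is the right intuition, but the inequalities as written discard the very factor $e^{\lambda y_r}$ that realizes it.
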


We want to show that for each $i=1,2,3$, the pair of random processes $\{(\Hbar^i_t, \Htld^i_r)\}_{r>0}$ satisfy the conditions in \eqref{eq:Hbar Hrld conditions} and hence Lemma \ref{lem:pruned sub martingale 2} can be applied. Clearly, \eqref{eq:Hbar less Htld} holds as conditioning reduces the entropy. Condition \eqref{eq:Hbar linear K1} holds because of the following lemma.  
\begin{lem}\label{lem: linear drift MAC FB}
Given an $(M_1, M_2, N)$-VLC,  the following inequalities hold almost surely for $1 \leq r \leq N$
\begin{subequations}\label{eq:linear drift J}
\begin{align}\label{eq:linear drift J 1}
\EE[\Hbar^1_{r}-\Hbar^1_{r-1}|\mathcal{F}_{r-1}] & \geq - \J_{r}^1, &  \text{where}~  \J_r^1 &\deq I(X_{1,r}; Y_r | X_2^r, \mathcal{F}_{r-1}),\\\label{eq:linear drift J 2}
\EE[\Hbar^2_{r}-\Hbar^2_{r-1}|\mathcal{F}_{r-1}] & \geq - \J_{r}^2, &   \text{where}~  \J_r^2 &\deq I(X_{2,r}; Y_r | X_1^r, \mathcal{F}_{r-1}),\\\label{eq:linear drift J 3}
\EE[\Hbar^3_{r}-\Hbar^3_{r-1}|\mathcal{F}_{r-1}] & \geq - \J_{r}^3, &   \text{where}~  \J_r^3 &\deq I(X_{1,r}, X_{2,r}; Y_r | \mathcal{F}_{r-1}),
\end{align}
\end{subequations}
where the mutual information quantities are calculated with the induced probability distribution $P_{X_1^NX_2^NY^N} \in \PMAC^N$.
\end{lem}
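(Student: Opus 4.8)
The plan is to compute the one-step conditional drift of each entropy process \emph{exactly} and then upper-bound the resulting mutual-information term by $\J_r^i$ using the Markov structure of the MAC. Start with $\Hbar^1_t = H(W_1 \mid W_2, \mathcal{F}_t)$, viewed as an $\mathcal{F}_t$-measurable random variable. Working on the event $\{Y^{r-1}=y^{r-1}\}$ and averaging over $Y_r$, the chain rule for conditional entropy gives $\EE[\Hbar^1_r - \Hbar^1_{r-1}\mid \mathcal{F}_{r-1}] = -\,I(W_1; Y_r \mid W_2, \mathcal{F}_{r-1})$, where the right-hand side is the $\mathcal{F}_{r-1}$-measurable conditional mutual information. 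It therefore suffices to prove $I(W_1; Y_r \mid W_2, \mathcal{F}_{r-1}) \le \J_r^1 = I(X_{1,r}; Y_r \mid X_2^r, \mathcal{F}_{r-1})$, together with the analogous statements for $i=2,3$.

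Next I would replace the message by the current channel input. Since $X_{1,r}=e_{1,r}(W_1, Y^{r-1})$ is a deterministic function of $(W_1,\mathcal{F}_{r-1})$ and the channel obeys $P(y_r\mid y^{r-1}, x_1^{r}, x_2^{r})=Q(y_r\mid x_{1,r}, x_{2,r})$, conditioned on $(W_2,\mathcal{F}_{r-1})$ the output $Y_r$ depends on $W_1$ only through $X_{1,r}$; that is, $W_1 \to X_{1,r} \to Y_r$ is a Markov chain. Because $X_{1,r}$ is itself a function of $W_1$, data processing applies in both directions and yields the exact identity $I(W_1; Y_r \mid W_2, \mathcal{F}_{r-1}) = I(X_{1,r}; Y_r \mid W_2, \mathcal{F}_{r-1})$. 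The same reduction collapses $(W_1,W_2)$ to $(X_{1,r},X_{2,r})$ for the joint process $\Hbar^3$, giving directly $I(W_1,W_2;Y_r\mid\mathcal{F}_{r-1}) = I(X_{1,r},X_{2,r};Y_r\mid\mathcal{F}_{r-1})=\J_r^3$, so \eqref{eq:linear drift J 3} already holds with equality in the drift and needs no further work.

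The remaining and most delicate step is to loosen the conditioning from $W_2$ to $X_2^r$. One cannot simply substitute, because feedback makes $W_1$ and $W_2$ dependent given $\mathcal{F}_{r-1}$ (already after a single additive channel use the messages become correlated), so conditioning on the finer variable $W_2$ is not equivalent to conditioning on the coarser variable $X_2^r$, which is a deterministic function of $(W_2,\mathcal{F}_{r-1})$. The key observation is that the channel sees $W_2$ only through $X_{2,r}$, a coordinate of $X_2^r$; hence $W_2 \to (X_{1,r},X_2^r) \to Y_r$ is Markov given $\mathcal{F}_{r-1}$. Writing $I(X_{1,r};Y_r\mid W_2,\mathcal{F}_{r-1})=I(X_{1,r};Y_r\mid W_2, X_2^r, \mathcal{F}_{r-1})$ (since $X_2^r$ is a function of $W_2$) and expanding $I(X_{1,r}; Y_r, W_2 \mid X_2^r, \mathcal{F}_{r-1})$ by the chain rule in two ways, the Markov relation $I(W_2;Y_r\mid X_{1,r},X_2^r,\mathcal{F}_{r-1})=0$ forces the gap to equal $I(W_2; Y_r \mid X_2^r, \mathcal{F}_{r-1}) \ge 0$. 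This gives $I(X_{1,r};Y_r\mid W_2,\mathcal{F}_{r-1}) \le I(X_{1,r};Y_r\mid X_2^r,\mathcal{F}_{r-1})=\J_r^1$. Equivalently, this is the concavity of the map $P_{X_{1,r}}\mapsto I(X_{1,r};Y_r)$ for the fixed channel $Q(\cdot\mid\cdot, x_{2,r})$: conditioning on $X_2^r$ mixes the per-$W_2$ input laws that share the same $X_{2,r}$, and mixing the input distribution can only increase mutual information.

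Combining the three steps yields $\EE[\Hbar^1_r-\Hbar^1_{r-1}\mid\mathcal{F}_{r-1}] = -I(X_{1,r};Y_r\mid W_2,\mathcal{F}_{r-1}) \ge -\J_r^1$, which is \eqref{eq:linear drift J 1}, and \eqref{eq:linear drift J 2} follows by exchanging the roles of the two users. I expect the main obstacle to be precisely the $W_2 \to X_2^r$ relaxation: it is the only place where the non-product posterior induced by feedback is felt, and getting the conditioning exactly right, namely the current input $X_{1,r}$ for user~$1$ but the full causal history $X_2^r$ for user~$2$, is what makes the bound simultaneously valid and tight.
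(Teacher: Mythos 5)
Your proof is correct and takes essentially the same route as the paper's: both compute the exact drift $\EE[\Hbar^1_r-\Hbar^1_{r-1}\mid \mathcal{F}_{r-1}]=-I(W_1;Y_r\mid W_2,\mathcal{F}_{r-1})$, pass from messages to channel inputs using the deterministic encoders together with channel memorylessness, and then relax the conditioning from $W_2$ to $X_2^r$ at the cost of the nonnegative term $I(W_2;Y_r\mid X_2^r,\mathcal{F}_{r-1})$ (the paper phrases this last step as ``conditioning reduces entropy,'' which is the identical inequality). One small correction: the quantity you should expand by the chain rule in two ways is $I(X_{1,r},W_2;\,Y_r\mid X_2^r,\mathcal{F}_{r-1})$ rather than $I(X_{1,r};\,Y_r,W_2\mid X_2^r,\mathcal{F}_{r-1})$ — with that comma placement, your stated Markov relation $I(W_2;Y_r\mid X_{1,r},X_2^r,\mathcal{F}_{r-1})=0$ and the resulting gap are exactly what the expansion yields.
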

\begin{IEEEproof}
We start with proving the inequality in  \eqref{eq:linear drift J 1}. 
\begin{align*}
\EE[\Hbar^1_{r}-\Hbar^1_{r-1}|y^{r-1}] & = H(W_1|Y_r, W_2, y^{r-1}) -H(W_1|W_2,  y^{r-1}) = -I(W_1; Y_r | W_2, y^{r-1})\\ 
& \stackrel{(a)}{=} -I(W_1, X_{1,r}; Y_r | W_2, X_{2}^{r}, y^{r-1})\\
& \stackrel{(b)}{=} -H(Y_r | W_2, X_{2}^{r}, y^{r-1})+ H(Y_r | X_{2,r}, X_{1,r}, X_2^{r-1}, y^{r-1})\\
& \geq  -H(Y_r | X_{2}^{r}, y^{r-1})+ H(Y_r | X_{2}^r, X_{1,r}, y^{r-1}) = - J_r^1,
\end{align*}
where (a) holds as $X_{1,r} = e_{1,r}(W_1, y^{r-1})$ and $X_{2,\ell} = e_{1,\ell}(W_1, y^{\ell-1}), 1\leq \ell \leq r$. Equality (b) is due to \eqref{eq: chann probabilities} implying that conditioned on $(X_{1,r}, X_{2,r})$ the channel's output $Y_r$ is independent of $W_1, W_2$, and $X_2^{r-1}$. 

Similarly for the inequality in \eqref{eq:linear drift J 2}, the following lower-bound holds 
\begin{align*}
\EE[\Hbar^2_{r}-\Hbar^2_{r-1}|y^{r-1}] &= -I(W_2; Y_r | W_1, y^{r-1})=-I(W_2, X_{2,r}; Y_r | W_1, X_{1}^{r}, y^{r-1})\\
& \geq  -H(Y_r | X_{1}^{r}, y^{r-1})+ H(Y_r | X_{1}^{r}, X_{2,r}, y^{r-1}) = - J_r^2.
\end{align*}
Using a similar argument for \eqref{eq:linear drift J 3}, we can show that the following inequality holds 
\begin{align*}
\EE[\Hbar^3_{r}-\Hbar^3_{r-1}|y^{r-1}] &= -I(W_1, W_2; Y_r | y^{r-1})\\
&=-I(W_1, X_{1,r}, W_2, X_{2,r}; Y_r |y^{r-1}) 
= - J_r^3.
\end{align*}
\end{IEEEproof}
For condition \eqref{eq:Htld log K3}, we need the following result.
\begin{lem}\label{rem: H_t+1 - H_tis bounded}
For any $r\geq 1$ and $i=1,2,3$, the following inequality holds if the channel's transition probabilities $Q(\cdot |\cdot ,\cdot)$ are positive.  
\begin{align*}
\abs{\log \Htld_r^i-\log \Htld_{r-1}^i} \leq \eta  \deq \max_{x_1, z_1 \in\mathcal{X}_1}\max_{x_2, z_2 \in \mathcal{X}_2}\max_{y\in \mathcal{Y}} \log \frac{Q(y|x_1,x_2)}{Q(y|z_1,z_2)}.
\end{align*}
\end{lem}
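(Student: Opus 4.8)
The plan is to fix a realization $y^{r-1}\in\mathcal{Y}^{r-1}$ together with one extra output symbol $y_r\in\mathcal{Y}$, and to compare the posterior of the relevant message given $\mathcal{F}_{r-1}$ (i.e.\ given $y^{r-1}$) with the posterior given $\mathcal{F}_r$ (i.e.\ given $y^r=(y^{r-1},y_r)$). Since $\Htld_{r-1}^i$ and $\Htld_r^i$ are precisely the Shannon entropies of these two posteriors, the lemma reduces to bounding how much the entropy of a distribution can change under one step of Bayesian updating. I would write $W$ for the message attached to index $i$ (namely $W_1$, $W_2$, or the pair $(W_1,W_2)$ for $i=1,2,3$), set $p_w=P(W=w\mid y^{r-1})$ and $p'_w=P(W=w\mid y^r)$, and write $\phi(x)=-x\log x$, so that $\Htld_{r-1}^i=\sum_w\phi(p_w)$ and $\Htld_r^i=\sum_w\phi(p'_w)$.

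The first key step is a pointwise multiplicative control of the update: I claim $e^{-\eta}p_w\leq p'_w\leq e^{\eta}p_w$ for every $w$. By Bayes' rule $p'_w=p_w\ell_w/Z$, where $\ell_w=P(Y_r=y_r\mid W=w,y^{r-1})$ and $Z=\sum_{w'}p_{w'}\ell_{w'}$, so writing $c_w=\ell_w/Z$ it suffices to show $c_w\in[e^{-\eta},e^{\eta}]$. For $i=3$ both inputs $X_{1,r},X_{2,r}$ are deterministic functions of $W$ and $y^{r-1}$, so $\ell_w=Q(y_r\mid x_{1,r},x_{2,r})$ is a single channel value; for $i=1,2$ only one input is pinned down and $\ell_w$ is a convex combination of values $Q(y_r\mid x_1,x_2)$ over the posterior of the other user's input. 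In all three cases $\ell_w\in[\min_{x_1,x_2}Q(y_r|x_1,x_2),\,\max_{x_1,x_2}Q(y_r|x_1,x_2)]$, and because $Z$ is itself a convex combination of the $\ell_{w'}$ it lies in the same interval; hence $c_w=\ell_w/Z$ is squeezed between the ratio of the extreme channel values, which by the definition of $\eta$ lies in $[e^{-\eta},e^{\eta}]$. (The positivity hypothesis on $Q$ guarantees $\eta<\infty$ and $\ell_w>0$, so no ratio degenerates; messages with $p_w=0$ also have $p'_w=0$ and are harmless.)

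The second step turns this multiplicative bound into the entropy bound. Writing $p'_w=c_wp_w$ and expanding, I obtain the identity
\begin{align*}
\Htld_r^i=\sum_w\phi(p'_w)=\sum_w c_w\,\phi(p_w)\;-\;\sum_w p'_w\log c_w=\sum_w c_w\,\phi(p_w)\;-\;D_{KL}(p'\,\|\,p),
\end{align*}
where I used $\sum_w p'_w\log c_w=\sum_w p'_w\log(p'_w/p_w)=D_{KL}(p'\|p)\geq 0$. Since $\phi\geq 0$ and $c_w\leq e^{\eta}$, dropping the nonnegative divergence gives $\Htld_r^i\leq e^{\eta}\sum_w\phi(p_w)=e^{\eta}\,\Htld_{r-1}^i$. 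For the reverse direction I exploit symmetry: $p$ is the tilt of $p'$ by the factors $c_w^{-1}\in[e^{-\eta},e^{\eta}]$ (indeed $\sum_w p'_w c_w^{-1}=\sum_w p_w=1$), so the same identity with $p$ and $p'$ swapped yields $\Htld_{r-1}^i\leq e^{\eta}\,\Htld_r^i$. Combining the two inequalities gives $e^{-\eta}\leq \Htld_r^i/\Htld_{r-1}^i\leq e^{\eta}$, i.e.\ $\abs{\log\Htld_r^i-\log\Htld_{r-1}^i}\leq\eta$ almost surely.

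The main obstacle to flag is that, unlike the \emph{expected} one-step drift analyzed in Lemma~\ref{lem: linear drift MAC FB}, the pointwise change in entropy is genuinely two-sided: observing a specific output $y_r$ can \emph{increase} the posterior entropy, so one cannot invoke ``conditioning reduces entropy'' to lower-bound $\Htld_r^i$. The resolution is exactly the symmetric use of the decomposition above, leaning on the nonnegativity of \emph{both} $D_{KL}(p'\|p)$ and $D_{KL}(p\|p')$; this is what supplies the two directions cleanly and uniformly across $i=1,2,3$.
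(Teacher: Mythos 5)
Your proof is correct, and it is essentially the argument the paper invokes: the paper's proof of this lemma is a one-line citation to Lemma 4 of Burnashev, whose proof is exactly your combination of a pointwise multiplicative bound $e^{-\eta}\le p'_w/p_w\le e^{\eta}$ on the Bayesian update with the decomposition $H(p')=\sum_w c_w\phi(p_w)-D_{KL}(p'\,\|\,p)$ used symmetrically in both directions. The only MAC-specific point — that for $i=1,2$ the likelihood $\ell_w$ is a convex mixture of channel values over the other user's input, hence still trapped in $[\min Q,\max Q]$ — you handle correctly, so your write-up is a valid self-contained replacement for the citation.
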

\begin{IEEEproof}
The proof follows from the  argument given in that of Lemma 4 in \cite{Burnashev}.
\end{IEEEproof}

Condition \eqref{eq:Htld log K2} holds as a result of the following lemma which is proved in Appendix \ref{appx: proof of lem log drift}.
\begin{lem}\label{lem: log drift MAC FB}
Given an $(M_1, M_2, N)$-VLC and $\epsilon \in [0,\frac{1}{2}]$, if $\Htld^i_r<\epsilon$, then the following inequality holds almost surely
\begin{align}\label{eq:log drift D}
\EE[\log \Htld^i_{r}-\log \Htld^i_{r-1}|\mathcal{F}_{r-1}] & \geq - \D_{r}^i-O(\sqrt{h_b^{-1}(\epsilon)}), 
\end{align}
where and  $\D_r^i$ is a function of $y^{r-1}$ and is defined as 
\begin{align}\label{eq:D_r definition}
\D_r^i \deq \max_{x_i, x'_i}\sup_{P_{X'_j|Y^{r-1}}} D_{KL}\Big( \Qbar{i}_r(\cdot | x_i, y^{r-1})~\|~ \Qbarp{i}_r(\cdot | x'_i, y^{r-1}) \Big),
\end{align}
where $\Qbar{i}_r = P_{Y_r|X_{i,r}, Y^{r-1}}$ is the effective channel from the user $i$'s perspective. 
\end{lem}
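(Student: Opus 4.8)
The plan is to adapt Burnashev's small‑entropy (binary hypothesis testing) analysis to the MAC, absorbing the extra freedom of the interfering user's input into the worst‑case effective channel that appears in $\D_r^i$. I will present the argument for $i=1$; the cases $i=2,3$ are identical after relabeling (for $i=3$ the effective channel is the full $Q$ and $\D_r^3$ reduces to $D^3$). Throughout, fix a realization $y^{r-1}$ with $\Htld^1_{r-1}<\epsilon$ and write $\pi_w=P(W_1=w\mid y^{r-1})$. First I would reduce to a binary test: let $\hat w$ be the MAP estimate of $W_1$ given $y^{r-1}$ and $p\deq 1-\pi_{\hat w}$ the conditional error probability. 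Since $\11\{W_1\neq\hat w\}$ is a function of $W_1$, the chain rule gives $\Htld^1_{r-1}\geq h_b(p)$, hence $p\leq h_b^{-1}(\epsilon)$; this is the origin of the scale $h_b^{-1}(\epsilon)$ in the error term. Conversely $\Htld^1_{r-1}\leq h_b(p)+p\log M_1$, so in this regime the entropy is governed, up to the residual mass on the non‑MAP set, by $p$. Lemma~\ref{rem: H_t+1 - H_tis bounded} guarantees $\Htld^1_r\geq e^{-\eta}\Htld^1_{r-1}$, so a single channel use keeps $\Htld^1_r$ (hence $p_r$) within a constant factor of the small regime, which lets me treat the approximations below uniformly.

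The core step is an evidence (log‑likelihood‑ratio) drift bound. Writing $q_w(y)\deq P(Y_r=y\mid W_1=w,y^{r-1})$ for the output law induced by message $w$, the posterior odds of $\hat w$ update multiplicatively by $q_{\hat w}(Y_r)/\bar q(Y_r)$, where $\bar q\deq\sum_{w\neq\hat w}\frac{\pi_w}{1-p}\,q_w$ is the mixture output law under the error hypothesis. Taking expectation under the true law $P(Y_r\mid y^{r-1})=(1-p)q_{\hat w}+p\,\bar q$ gives
\begin{align*}
\EE\Big[\log\frac{q_{\hat w}(Y_r)}{\bar q(Y_r)}\,\Big|\,y^{r-1}\Big]
=(1-p)\,D_{KL}(q_{\hat w}\|\bar q)-p\,D_{KL}(\bar q\|q_{\hat w})
\leq D_{KL}(q_{\hat w}\|\bar q)
\leq\max_{w\neq\hat w}D_{KL}(q_{\hat w}\|q_w),
\end{align*}
using $p\,D_{KL}(\bar q\|q_{\hat w})\geq0$ and convexity of $D_{KL}(\cdot\|\cdot)$ in its second argument. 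Thus the posterior odds of $\hat w$ cannot grow, in expectation, faster than this maximal divergence.

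Next I would identify $\max_{w\neq\hat w}D_{KL}(q_{\hat w}\|q_w)$ with $\D_r^1$. Each $q_w$ is an effective first‑user channel with input $x_1(w)=e_{1,r}(w,y^{r-1})$ and interfering‑user law $P(X_{2,r}\mid W_1=w,y^{r-1})$. Because $W_1$ and $W_2$ become correlated given $y^{r-1}$, the competing messages may induce a different $X_2$‑law than the dominant one — this is precisely why $\D_r^1$ in \eqref{eq:D_r definition} takes a supremum over $P_{X'_2\mid Y^{r-1}}$ and uses the primed channel $\Qbarp{1}_r$ for the alternative. Matching $q_{\hat w}$ to $\Qbar{1}_r(\cdot\mid x_1(\hat w),y^{r-1})$ (the dominant hypothesis carries posterior mass $1-p$, so its $X_2$‑law and the true posterior $X_2$‑law differ in total variation by $O(p)$) and each competitor $q_w$ to $\Qbarp{1}_r(\cdot\mid x_1(w),y^{r-1})$ yields $\max_{w\neq\hat w}D_{KL}(q_{\hat w}\|q_w)\leq\D_r^1+O(\sqrt{p})$, where the $\sqrt{\cdot}$ enters through a Pinsker/total‑variation estimate together with the boundedness of log‑ratios (positivity of $Q$, via Lemma~\ref{rem: H_t+1 - H_tis bounded}).

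Finally I would transfer the evidence drift to the entropy drift. Since $\Htld^1_r\geq h_b(p_r)\geq p_r\log\frac1{p_r}$ and $\Htld^1_{r-1}\leq h_b(p)+p\log M_1$, one has $\log\Htld^1_r\geq \log\frac{p_r}{1-p_r}-O(p_r)$ and a matching upper bound for $\log\Htld^1_{r-1}$, so that $\EE[\log\Htld^1_r-\log\Htld^1_{r-1}\mid y^{r-1}]$ is controlled from below by minus the expected increase of the posterior log‑odds, giving $-\D_r^1-O(\sqrt{h_b^{-1}(\epsilon)})$ as in \eqref{eq:log drift D}. The main obstacle is exactly this last transfer rather than the drift identity: I must show that replacing $\log\Htld^1$ by the error‑probability surrogate $\log\frac{p}{1-p}$ costs only $O(\sqrt{h_b^{-1}(\epsilon)})$ \emph{uniformly in the blocklength}, even though the residual entropy on the non‑MAP set can carry up to $\log M_1$ bits. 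The danger is a blocklength‑growing $\log\log M_1$ slack in the absolute comparison; it is averted by noting that this residual term appears at both times $r-1$ and $r$, so only its \emph{drift} — itself distinguished through the same effective channel and hence bounded by the same divergence — enters, and by combining it with the $O(\sqrt{p})$ channel‑matching error and the dropped nonnegative term $p\,D_{KL}(\bar q\|q_{\hat w})$. Making this uniform collection rigorous, adapting Burnashev's careful estimates to the MAC effective channel, is the delicate point.
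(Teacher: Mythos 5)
Your high-level skeleton --- MAP concentration giving $p\leq h_b^{-1}(\epsilon)$, a KL-divergence drift bound, and the matching of the message-induced output laws to the effective channels $\Qbar{1}_r,\Qbarp{1}_r$ through posterior concentration --- parallels the paper's proof. But the step you yourself flag as ``the delicate point'' is a genuine gap, and it is exactly where the paper takes a different, essential route. By the grouping axiom, $\Htld^1_{r-1}=h_b(p)+p\,H(\hat{W}_1)$, where $\hat{W}_1$ is the posterior restricted to the non-MAP messages; the residual $p\,H(\hat{W}_1)$ can be as large as $p\log M_1$. Hence $\log\Htld^1_{r-1}$ and your binary surrogate $\log\frac{p}{1-p}$ differ by an additive term of order $\log\big(\log\frac{1}{p}+H(\hat{W}_1)\big)$, which can grow like $\log\log M_1$, i.e.\ with blocklength, and is not uniformly $O(\sqrt{h_b^{-1}(\epsilon)})$. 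Your proposed repair --- that this residual ``appears at both times, so only its drift enters, and that drift is bounded by the same divergence'' --- is not an argument: $\log(a+b)$ does not decompose, and controlling
\begin{align*}
\EE\Big[\log\big(h_b(p_r)+p_r H(\hat{W}_1|Y_r)\big)-\log\big(h_b(p)+p\,H(\hat{W}_1)\big)\,\Big|\,y^{r-1}\Big]
\end{align*}
is another instance of the very same log-drift problem (the entropy of the posterior on the non-MAP set undergoes its own multiplicative shrinkage under $Y_r$), so the reduction is circular as stated.

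The paper closes precisely this step with Burnashev's Lemma 7, the ratio-of-sums inequality
\begin{align*}
\sum_{\ell} p_\ell \log\Big( \frac{\sum_{i} \mu_i}{\sum_{i} \beta_{i,\ell}}\Big) \leq \max_{i} \sum_{\ell} p_{\ell} \log \frac{\mu_i}{\beta_{i,\ell}},
\end{align*}
applied with $\mu_w=-\mu_1(w)\log\mu_1(w)$ and $\beta_{w,y_r}=-\mu_1(w,y_r)\log\mu_1(w,y_r)$: this turns the drift of the logarithm of the entropy (a log of a sum over \emph{all} messages) directly into $\max_w\Gamma(w)$, a maximum over messages of single-message quantities, with no binary surrogate and hence no transfer step at all. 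Each $\Gamma(w)$ with $w\neq\wstar$ is then bounded by $D_{KL}\big(P(\cdot|y^{r-1})\,\|\,Q^1_{w}\big)$ plus $O(\sqrt{h_b^{-1}(\epsilon)})$ corrections, using the truncation $\mathcal{Y}_\delta=\{y_r: P(y_r|y^{r-1})>\delta\}$ with $\delta=\sqrt{h_b^{-1}(\epsilon)}$ and Lemma \ref{rem: H_t+1 - H_tis bounded} for the tail; $\Gamma(\wstar)$ is handled by Taylor-expanding $-x\log x$ at $x=1$; and only afterwards does the convexity/concentration step you describe replace $\max_{w'}D_{KL}\big(P(\cdot|y^{r-1})\,\|\,Q^1_{w'}\big)$ by $\D_r^1$ up to $\eta_1(\epsilon)d^{\max}$. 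If you wish to keep your binary-reduction outline, you must prove a surrogate-comparison lemma of comparable strength to Lemma 7; as written, the transfer from log-odds drift back to log-entropy drift is missing, not merely unpolished.
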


Lastly, condition \eqref{eq:Htld K4} follows as $\Htld^i_r\leq \log M_i$, that is
\begin{align*}
\abs{\Htld^i_r-\Htld^i_{r-1}} \leq \max \{ \Htld^i_r, \Htld^i_{r-1} \} \leq \log M_i,
\end{align*}
where  $M_3 = M_1M_2$.
As a result,  we apply Lemma \ref{lem:pruned sub martingale 2} on each pair $\{(\Hbar^i_t, \Htld^i_r)\}_{r>0}$, where $i=1,2,3$ and with the following parameters
\begin{align*}
k_{1,r} = J^i_r, \qquad k_{2,r} = D^i_r, \qquad k_3 = \eta, \qquad k_4 = \log M_i, \qquad I=I^i, \qquad D=D^i.
\end{align*}
That said, let $Z^i_t, S^i_t, t^i_n,$ and $L^i_n$ be the corresponding random processes as in Lemma \ref{lem:pruned sub martingale 2} but for $(\Hbar_t, \Htld_t) = (\Hbar_t^i, \Htld^i_t), \forall t\geq 0$ and $D=D^i, I=I^i$, where $i=1,2,3$. Therefore, from the lemma, $\{L^i_n\}_{n >0}$ is a sub-martingale w.r.t $\mathcal{F}_{t^i_n}, n>0$.

\subsection{Connection to the error exponent}
Now we are ready to study the connection between the entropy drift and the error exponent. For that we have the following lemma with the proof given in Appendix \ref{proof:lem:Err Exp conenction}. 

\begin{lem}\label{lem:Err Exp conenction}
For any $(M_1,M_2, N)$-VLC with rates $(R_1, R_2)$, probability of error $P_e$ and stopping time $T$ the following holds
\begin{align}\label{eq:err exponent up eq 1}
\frac{-\log P_e}{\EE[T]} \leq \frac{1}{1-\Delta} D^i \Big( \frac{ \EE\big[S^i_{T\vee \Tuepstldi}\big]}{\EE[T]} - \frac{R_i}{I^i}\Big) + \frac{1}{1-\Delta} U_i(P_e, M_i, \epsilon), \qquad i=1,2,3.
\end{align}
where $\{S_t\}_{t>0}$ is the slope process as in Lemma \ref{lem:pruned sub martingale 2}, and
\begin{align*}
U_i(P_e, M_i, \epsilon) &\deq  R_i\Big( D^i \frac{h_b(P_e)+P_e \log(M_1M_2)+\epsilon}{I^i \log M_i} +\frac{- \log \epsilon}{\log M_i}+\frac{1}{\lambda  \log M_i}\Big),\\
\Delta & \deq \frac{\log \big(-\log P_e +2+ \log M_1M_2 \big)}{-\log P_e}.
\end{align*}
\end{lem}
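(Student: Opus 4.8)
The plan is to run the sub-martingale supplied by Lemma~\ref{lem:pruned sub martingale 2} from the start of transmission to a terminal pruned index corresponding to $T\vee\Tuepstldi$, and then convert the resulting entropy inequality into a bound on $P_e$ via Fano's and Jensen's inequalities. Recall that for each $i=1,2,3$ the pair $(\Hbar_t^i,\Htld_t^i)$ satisfies all the hypotheses \eqref{eq:Hbar Hrld conditions}, so $L^i_n=Z^i_{t^i_n}+S^i_{t^i_n}$ is a sub-martingale with respect to $\mathcal G_n$. First I would record the initial value: since $\mathcal F_0$ is trivial and $W_1,W_2$ are independent and uniform, $\Hbar^i_0=\Htld^i_0=\log M_i\ge\epsilon$, whence $Z^i_0=(\log M_i-\epsilon)/I^i$ and $S^i_0=0$. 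Applying the optional stopping theorem to the bounded sub-martingale $\{L^i_n\}$ at the index $\nu=\inf\{n:t^i_n\ge T\vee\Tuepstldi\}$, which is a bounded $\mathcal G_n$-stopping time by Lemma~\ref{prop:measurable} (so that $t^i_\nu=T\vee\Tuepstldi$), then gives
\[
\EE\big[Z^i_{T\vee\Tuepstldi}\big]+\EE\big[S^i_{T\vee\Tuepstldi}\big]\;\ge\;\frac{\log M_i-\epsilon}{I^i}.
\]

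Next I would bound $\EE[Z^i_{T\vee\Tuepstldi}]$ from above. By the definition of $\Tuepstld$ in \eqref{eq:Tueps} one has $\Htld^i_t<\epsilon$ for all $t\ge\Tuepstld$ except on the edge event where the entropy never crosses $\epsilon$ before $N$. On the former event the terminal $Z^i$ equals $(\log\Htld^i-\log\epsilon)/D^i+f(\log(\Htld^i/\epsilon))$ with $0\le f\le 1/(\lambda D^i)$, while on the latter I bound $Z^i\le \Hbar^i/I^i$. Splitting accordingly and using the pointwise bound $(\log\Htld^i-\log\epsilon)\11\{\Htld^i<\epsilon\}\le\log\Htld^i-\log\epsilon$ yields
\[
\EE\big[Z^i_{T\vee\Tuepstldi}\big]\le\frac{\EE[\log\Htld^i_{T\vee\Tuepstldi}]-\log\epsilon}{D^i}+\frac1{\lambda D^i}+\frac{\EE[\Hbar^i_{T\vee\Tuepstldi}]}{I^i}.
\]

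The crux is relating the two terminal entropies to $P_e$. Since the decoder's estimate is a function of $Y^{T}$ and $T\vee\Tuepstldi\ge T$, Fano's inequality (applied to $W_1$ given $(W_2,Y)$, to $W_2$ given $(W_1,Y)$, and to $(W_1,W_2)$ given $Y$ for $i=1,2,3$) gives $\EE[\Hbar^i_{T\vee\Tuepstldi}]\le h_b(P_e)+P_e\log M_i\le h_b(P_e)+P_e\log M_1M_2$, which produces the Fano term of $U_i$. For the logarithmic term, Jensen's inequality followed by the same Fano bound yields $\EE[\log\Htld^i_{T\vee\Tuepstldi}]\le\log\EE[\Htld^i_{T\vee\Tuepstldi}]\le\log\big(h_b(P_e)+P_e\log M_i\big)$; bounding $h_b(P_e)+P_e\log M_i\le P_e\big(-\log P_e+2+\log M_1M_2\big)$ and invoking the definition of $\Delta$ then gives the key estimate $\EE[\log\Htld^i_{T\vee\Tuepstldi}]\le(1-\Delta)\log P_e$.

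Finally I would substitute these two estimates into the optional-stopping inequality, isolate $\EE[S^i_{T\vee\Tuepstldi}]$, multiply through by $D^i$, divide by $\EE[T]$, and use $R_i=\log M_i/\EE[T]$ to turn $D^i(\log M_i-\epsilon)/(I^i\EE[T])$ into $D^iR_i/I^i$ minus a small $\epsilon$-term. Collecting the $\epsilon$-, $\log\epsilon$-, $1/\lambda$-, and Fano-corrections into $U_i(P_e,M_i,\epsilon)$ and dividing by the positive factor $(1-\Delta)$ then reproduces \eqref{eq:err exponent up eq 1}. I expect the main obstacle to be the rigorous use of optional stopping in the pruned filtration: one must invoke Lemma~\ref{prop:measurable} to ensure $T\vee\Tuepstldi$ is adapted to $\mathcal G_n$ and that the stopping index is a.s.\ bounded, and one must treat the edge event $\{\Htld^i_{T\vee\Tuepstldi}\ge\epsilon\}$ separately. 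It is precisely this event, together with the Jensen step bridging $\EE[\log\Htld^i]$ and $\log\EE[\Htld^i]$, that forces the appearance of the correction factor $\Delta$ and the Fano terms collected in $U_i$.
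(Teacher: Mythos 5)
Your proposal is correct and follows essentially the same route as the paper's proof: the sub-martingale/optional-stopping inequality $L^i_0\le\EE[L^i_{T\vee\Tuepstldi}]$, the bound on the terminal $Z^i$ via $f\le 1/(\lambda D^i)$, Jensen plus Fano to control the terminal entropies by $h_b(P_e)+P_e\log M_1M_2$, and the algebraic step $-\log\alpha(P_e)\ge(1-\Delta)(-\log P_e)$ yielding the $1/(1-\Delta)$ factor. The only (harmless) deviations are that you apply Fano directly at time $T\vee\Tuepstldi$ rather than first passing to time $T$ via "conditioning reduces entropy," and you track the initial value $(\log M_i-\epsilon)/I^i$ slightly more carefully than the paper does.
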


%
%

Next, we find appropriate $I^i$ and $D^i$ so that $\EE\big[S^i_{T\vee \Tuepstldi}\big] \approx \EE[T]$. Further, we argue that $\Delta$ and $U_i(P_e, M_i, \epsilon)$ converge to zero for any sequence of VLCs satisfying the conditions in Definition \ref{def:ErrExponent}.
We have the following lemma toward this.

\begin{lem}\label{lem:bounding S}
Under the condition that $\epsilon > \alpha(P_e)$, for each $i=1,2,3$, if   
\begin{align*}
I^i = \frac{1}{\EE[\Tlepstldi]} \EE\Big[ \sum_{r=1}^{\Tlepstldi}\J^i_r \Big], \qquad D^i = \frac{1}{\EE[T -\Tlepstldi]} \EE\Big[ \sum_{r=\Tlepstldi+1}^{T}\D^i_r \Big],
\end{align*}
then $\EE\big[S^i_{T\vee \Tuepstldi}\big]\leq \EE[T] (1+ V(\epsilon, N)),$ where $V(\epsilon,N) =\frac{R_i}{I^i} \big( \addvb{\sqrt{\epsilon}} N \big)+ \addvb{\sqrt{\epsilon}} \frac{N}{\EE[T] I^i}$.
\end{lem}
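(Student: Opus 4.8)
The plan is to evaluate the slope process $S^i$ of Lemma~\ref{lem:pruned sub martingale 2} at the time $t=T\vee\Tuepstldi$ and bound its expectation term by term. Since $\Tlepstldi\le\Tuepstldi\le T\vee\Tuepstldi$, the minima with $t$ collapse: $(T\vee\Tuepstldi)\wedge\Tlepstldi=\Tlepstldi$, $(T\vee\Tuepstldi)\wedge\Tuepstldi=\Tuepstldi$, and the indicator $\11\{t\ge\Tuepstldi\}$ equals $1$. Substituting $k_{1,r}=\J^i_r$, $k_{2,r}=\D^i_r$, $k_4=\log M_i$, $I=I^i$, $D=D^i$ gives
\begin{align*}
S^i_{T\vee\Tuepstldi} = \frac{1}{I^i}\sum_{r=1}^{\Tlepstldi}\J^i_r + \frac{\log M_i}{I^i}\sum_{r=\Tlepstldi+1}^{\Tuepstldi}\11\{\Htld^i_{r-1}\ge\sqrt{\epsilon}\} + \frac{1}{D^i}\sum_{r=\Tuepstldi+1}^{T\vee\Tuepstldi}\D^i_r + \sqrt{\epsilon}\frac{N}{I^i},
\end{align*}
and I would bound the expectation of each of the four summands separately.

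The first and last summands are immediate: by the choice of $I^i$ one has $\EE[\sum_{r=1}^{\Tlepstldi}\J^i_r]=I^i\,\EE[\Tlepstldi]$, so the first term contributes exactly $\EE[\Tlepstldi]$, while the last is the deterministic constant $\sqrt{\epsilon}N/I^i$. For the third term I would first invoke the hypothesis $\epsilon>\alpha(P_e)$: through Fano's inequality this forces $\Htld^i_T\le\epsilon$, hence $\Tlepstldi\le T$ almost surely, so $\sum_{r=\Tlepstldi+1}^{T}\D^i_r$ is well defined and matches the definition of $D^i$. Because each $\D^i_r\ge 0$ and $\Tuepstldi\ge\Tlepstldi$, the summation range $[\Tuepstldi+1,\,T\vee\Tuepstldi]$ is contained in $[\Tlepstldi+1,T]$, so $\sum_{r=\Tuepstldi+1}^{T\vee\Tuepstldi}\D^i_r\le\sum_{r=\Tlepstldi+1}^{T}\D^i_r$; taking expectations and using the choice of $D^i$ bounds the third term by $\EE[T]-\EE[\Tlepstldi]$.

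The hard part will be the transient (second) summand, where I must show $\EE[\sum_{r=\Tlepstldi+1}^{\Tuepstldi}\11\{\Htld^i_{r-1}\ge\sqrt{\epsilon}\}]\le\sqrt{\epsilon}N$. The key observation is that $\{\Htld^i_t\}_{t>0}$, being of the form $H(\text{messages}\mid\mathcal{F}_t)$, is a nonnegative supermartingale (conditioning reduces entropy), and $\Tlepstldi$ is a stopping time at which $\Htld^i_{\Tlepstldi}\le\epsilon$. Applying the optional stopping theorem to the bounded stopping time $\Tlepstldi\wedge(r-1)$ gives, on $\{\Tlepstldi\le r-1\}$, $\EE[\Htld^i_{r-1}\mid\mathcal{F}_{\Tlepstldi}]\le\Htld^i_{\Tlepstldi}\le\epsilon$, whence $\EE[\Htld^i_{r-1}\11\{\Tlepstldi\le r-1\}]\le\epsilon$. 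Markov's inequality then yields $\PP(\Tlepstldi<r\le\Tuepstldi,\ \Htld^i_{r-1}\ge\sqrt{\epsilon})\le\epsilon/\sqrt{\epsilon}=\sqrt{\epsilon}$ for each integer $r$, and summing over the at most $N$ indices (recall $\Tuepstldi\le N$) gives $\sqrt{\epsilon}N$. With $\log M_i=R_i\,\EE[T]$, the second term is then at most $\EE[T]\frac{R_i}{I^i}\sqrt{\epsilon}N$.

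Adding the four bounds, the $\pm\EE[\Tlepstldi]$ contributions from the first and third terms cancel, leaving $\EE[S^i_{T\vee\Tuepstldi}]\le\EE[T]+\EE[T]\frac{R_i}{I^i}\sqrt{\epsilon}N+\sqrt{\epsilon}\frac{N}{I^i}$. Rewriting the last term as $\EE[T]\cdot\frac{\sqrt{\epsilon}N}{\EE[T]I^i}$ identifies the factor $V(\epsilon,N)=\frac{R_i}{I^i}(\sqrt{\epsilon}N)+\frac{\sqrt{\epsilon}N}{\EE[T]I^i}$ and completes the argument. I expect the optional-stopping/Markov occupation-time estimate to be the only genuinely delicate step; the remaining manipulations are bookkeeping driven directly by the definitions of $I^i$ and $D^i$ and the nonnegativity of $\D^i_r$.
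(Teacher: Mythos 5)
Your decomposition of $S^i_{T\vee\Tuepstldi}$ and your handling of the first, third, and fourth summands coincide with the paper's proof (your third-term treatment, via the containment $[\Tuepstldi+1,\,T\vee\Tuepstldi]\subseteq[\Tlepstldi+1,\,T]$ plus nonnegativity of $\D^i_r$, is in fact cleaner than the paper's, which writes this step as an equality when it is really an inequality). Where you genuinely diverge is the transient term: the paper bounds the whole sum by $N\,\PP\big(\sup_{\Tlepstldi\le t\le N-1}\Htld^i_t\ge\sqrt{\epsilon}\big)$ and then invokes a Doob-type maximal inequality for supermartingales after a stopping time (its Lemma \ref{lem:super martingale}), patched with Fano's inequality to control $\EE[\Htld^i_{\Tlepstldi}]$; you instead use a per-time optional-stopping argument plus Markov's inequality and a union bound over the at most $N$ indices. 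Both routes yield the same $\sqrt{\epsilon}\,N$ estimate, and yours has the advantage of not requiring the separate maximal-inequality lemma.

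Two of your justifications are imprecise, though neither is fatal. First, Fano's inequality gives only $\EE[\Htld^i_T]\le\alpha(P_e)<\epsilon$, an expectation bound; it does not ``force $\Htld^i_T\le\epsilon$'' pathwise, so your claim that $\Tlepstldi\le T$ almost surely is unjustified --- but also unnecessary, since your containment argument holds unconditionally (when $\Tlepstldi>T$ both ranges are empty). Second, your assertion that $\Htld^i_{\Tlepstldi}\le\epsilon$ is false on the truncation event $\{\Tlepstldi=N,\ \Htld^i_N>\epsilon\}$, where the infimum in \eqref{eq:Tleps} is never attained; this is exactly the case for which the paper needs the hypothesis $\epsilon>\alpha(P_e)$ together with Fano. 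Your argument survives nonetheless, because every event you apply it to has the form $\{\Tlepstldi\le r-1\}$ with $r\le\Tuepstldi\le N$, hence forces $\Tlepstldi\le N-1<N$, so the infimum is attained and $\Htld^i_{\Tlepstldi}\le\epsilon$ does hold there --- but this observation must be made explicit, since as written your step quietly assumes something that is not true in general (and, as a by-product, your route never actually uses the hypothesis $\epsilon>\alpha(P_e)$, whereas the paper's does).
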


Before presenting the proof of the lemma, let us discuss its consequences. From this lemma and \eqref{eq:err exponent up eq 1}, we get the desired upper bound by appropriately setting $I^i$ and $D^i$ as in the lemma. Hence, we get for $i=1,2,3$,
\begin{align}\label{eq:err exponent up eq 2}
\frac{-\log P_e}{\EE[T]} \leq  \frac{1}{1-\Delta}D^i \Big(1 - \frac{R_i}{I^i}\Big) +  \frac{1}{1-\Delta}D^i V(\epsilon,N) + \frac{1}{1-\Delta} U_i(P_e, M_i, \epsilon).
\end{align}
Note that all $D^i, t=1,2,3$ is bounded by a constant $d^{max}$ depending only on the channel's characteristics. Then, having $R_i\leq I^i$, the bound in \eqref{eq:err exponent up eq 2} is simplified to the following 
\begin{align}\label{eq:err exponent up eq 3}
\frac{-\log P_e}{\EE[T]} \leq  D^i \Big(1 - \frac{R_i}{I^i}\Big) +  \frac{d^{max}}{1-\Delta} \Big(\Delta+ V(\epsilon,N)\Big) + \frac{1}{1-\Delta} U_i(P_e, M_i, \epsilon),
\end{align}
for $i=1,2,3$.  Let the residual terms above denoted by
\begin{align*}
 \tilde{V}(P_e, M_1,M_2, \epsilon, N)  \deq d^{max} \frac{\Delta}{1-\Delta}+ \frac{1}{1-\Delta}d^{max} V(\epsilon,N) + \frac{1}{1-\Delta} \max_{i}\set{U_i(P_e, M_i, \epsilon)}.
 \end{align*}
We show that for any sequence of $(M_1^{(n)},M_2^{(n)}, N^{(n)})$-VLCs as in Definition \ref{def:ErrExponent} with vanishing $P_e^{(n)}$, the above term converges to zero as $n\rightarrow \infty$. It is easy to see that $\Delta$ as in \eqref{eq:Delta} converges to zero as $P_e^{(n)}\rightarrow 0$. Further, by setting $\epsilon^{(n)} = (\frac{1}{N^{(n)}})^2$, we can check that $\lim_{n\rightarrow \infty} V(\epsilon^{(n)}, N^{(n)})=0$. It remains to show the convergence of $U_i(\cdot)$ as in \eqref{eq:Ui}. The convergence of the third term in \eqref{eq:Ui} follows as $\lim_{n\rightarrow \infty}\frac{1}{M_i^{(n)}}= 0$. For the second term, as  $\epsilon^{(n)} = (\frac{1}{N^{(n)}})^2$ then we have that
\begin{align*}
\lim_{n\rightarrow \infty} \frac{-\log \epsilon^{(n)}}{\log M_i^{(n)}} = \lim_{n\rightarrow \infty} \frac{2 \log N^{(n)} }{\log M_i^{(n)}} =0,
 \end{align*} 
where the last equality holds as $N^{(n)}$ grows sub-exponentially with $n$. The convergence of the first term also follows from the fact that $\lim_{n\rightarrow \infty}\alpha(P_e^{(n)})=0$, as $P_e^{(n)}$ converges exponentially fast\footnote{The exponential convergence of $P_e^{(n)}$ holds because otherwise the error exponent is zero.}. To sum-up, with the above argument we showed that 
\begin{align*}
\lim_{n\rightarrow \infty}  \tilde{V}(P_e^{(n)}, M_1^{(n)},M_2^{(n)}, \epsilon^{(n)}, N^{(n)})=0.
\end{align*}

Hence, by maximizing over all distributions and from Definition \ref{def:ErrExponent}, we get the following upper bound on the error exponent
\begin{align}\label{eq: Err Exp up 1}
E(R_1,R_2) 
\leq \sup_{N\in \NN} \sup_{P^N \in \PMAC^N} \sup_{T:  T\leq N}  \min_{i\in \{1,2,3\}} \left\{D^i(P^N) \Big(1 - \frac{R_i}{I^i(P^N)}\Big)\right\}, 
\end{align}
where the maximizations are taken over all distributions $P^N\in \PMAC^N$. 
Further,  $I^i(P^N)$ and $D^i(P^N)$ are defined as in Lemma \ref{lem:bounding S} with the distribution $P^N$. Lastly, we complete our argument by presenting a proof of Lemma \ref{lem:bounding S} in the following.
%

\begin{proof}[Proof of Lemma \ref{lem:bounding S}]
From the definition of $\{S^i_t\}_{t>0},$ we have that
\begin{align*}
S^i_t = \sum_{r=1}^{t\wedge \Tlepstldi} \frac{\J^i_r}{I^i} + \sum_{r=t\wedge \Tlepstldi+1}^{t\wedge \Tuepstldi} \frac{\log M_i}{I^i} \11\{\Htld^i_{r-1} \geq \addvb{\sqrt{\epsilon}}\}+\sum_{r=t\wedge \Tuepstld+1}^{t} \frac{\D_r^i}{D^i} + \addvb{\sqrt{\epsilon}} \frac{N}{I^i}\11\{t\geq \Tuepstld\}.
\end{align*}
Therefore, as $(T\vee \Tuepstldi) \geq  \Tuepstldi \geq  \Tlepstldi$, then
\begin{align*}
\EE\big[S^i_{T\vee \Tuepstldi}\big] &= \EE\bigg[ \sum_{r=1}^{\Tlepstldi} \frac{\J^i_r}{I^i} + \sum_{r=\Tlepstldi+1}^{\Tuepstldi} \frac{\log M_i}{I^i} \11\{\Htld^i_{r-1} \geq \addvb{\sqrt{\epsilon}}\}+\sum_{r=\Tuepstld+1}^{T\vee \Tuepstldi} \frac{\D_r^i}{D^i}\bigg] + \addvb{\sqrt{\epsilon}} \frac{N}{I^i}.
\end{align*}
As for the first summation, after multiplying and dividing by $\EE[\Tlepstldi]$, we have that
\begin{align*}
\EE\Big[\sum_{r=1}^{\Tlepstldi} \frac{\J^i_r}{I^i}\Big] = \frac{\EE[\Tlepstldi]}{I^i}  \bigg(\frac{1}{\EE[\Tlepstldi]} \EE\Big[\sum_{r=1}^{\Tlepstldi} \J^i_r\Big] \bigg) = \EE[\Tlepstldi],
\end{align*}
where the last equality follows by setting $I^i$ as in the statement of the lemma. Similarly, the third summation is bounded as in the following 
\begin{align*}
\EE\Big[\sum_{r=\Tuepstld+1}^{T\vee \Tuepstldi} \frac{\D_r^i}{D^i}\Big] = \frac{\EE[T-\Tlepstldi]}{D^i}  \bigg(\frac{1}{\EE[T-\Tlepstldi]}\EE\Big[\sum_{r=\Tuepstld+1}^{T\vee \Tuepstldi} \D_r^i\Big] \bigg) = \EE[T-\Tlepstldi],
\end{align*}
where the first equality holds after multiplying and dividing by $\EE[T-\Tlepstldi]$, and the second equality follows by setting $D^i$ as in the statement of the lemma. As a result,
\begin{align*}
\EE\big[S^i_{T\vee \Tuepstldi}\big] & \leq \EE[T] + \frac{\log M_i}{I^i}\EE\bigg[ \sum_{r=\Tlepstldi+1}^{\Tuepstldi}  \11\{\Htld^i_{r-1} \geq \addvb{\sqrt{\epsilon}}\}\bigg] + \addvb{\sqrt{\epsilon}} \frac{N}{I^i}\\\numberthis \label{eq:bounding S eq 1}
&{\leq} \EE[T] + \frac{\log M_i}{I^i}\EE\bigg[ \sum_{r=\Tlepstldi+1}^{N}  \11\{\Htld^i_{r-1} \geq \addvb{\sqrt{\epsilon}}\}\bigg] + \addvb{\sqrt{\epsilon}} \frac{N}{I^i}
\end{align*}
where the inequality follows as $\Tuepstldi \leq N$.  Next, we bound the remaining summation. By iterative expectation we have that
\begin{align*}
 \EE\Big[\sum_{r=\addvb{\Tlepstldi}+1}^{N} \11\{\Htld^i_{r-1} &\geq \addvb{\sqrt{\epsilon}}\}\Big] = \EE_{\addvb{\Tlepstldi}}\Big[\sum_{r=\addvb{\Tlepstldi}+1}^{N} \EE\big[\11\{\Htld^i_{r-1} \geq \addvb{\sqrt{\epsilon}}\} ~\big| \addvb{\Tlepstldi} \big]\Big]\\
&\stackrel{(a)}{\leq}  \EE_{\addvb{\Tlepstldi}}\Big[\sum_{r=\addvb{\Tlepstldi}+1}^{N} \PP\Big( \sup_{\addvb{\Tlepstldi} \leq t \leq N-1 }\Htld^i_{t} \geq \addvb{\sqrt{\epsilon}} ~\big| \addvb{\Tlepstldi} \Big)\Big]\\
&=  \EE_{\addvb{\Tlepstldi}}\Big[\PP\Big( \sup_{ \addvb{\Tlepstldi} \leq t \leq N-1 }\Htld^i_{t} \geq \addvb{\sqrt{\epsilon}} ~\big|\addvb{\Tlepstldi}\Big) \sum_{r=\addvb{\Tlepstldi}+1}^{N} \11 \Big]\\\numberthis \label{eq:bounding S eq 2}
&\stackrel{(b)}{\leq}  \EE_{\addvb{\Tlepstldi}}\Big[N  \PP\Big( \sup_{ \addvb{\Tlepstldi} \leq t \leq N-1 }\Htld^i_{t} \geq \addvb{\sqrt{\epsilon}}~\big| \addvb{\Tlepstldi} \Big)\Big] \stackrel{(c)}{=}  N \PP\Big( \sup_{ \addvb{\Tlepstldi} \leq t \leq N-1 }\Htld^i_{t} \geq \addvb{\sqrt{\epsilon}} \Big),
\end{align*}
where (a) follows from taking the supremum over all  $\Htld^i_{r-1}$ appearing in the summation. Inequality (b) follows as the summation is less than $N-\addvb{\Tlepstldi}$ which is smaller than $N$. Lastly, (c) holds by taking the expectation of the conditional probability. We proceed with the following lemma which is a variant of Doob's maximal inequality for super-martingales, where a proof is provided in Appendix \ref{app:super martingale}.
\begin{lem}[Maximal Inequality for Supermartingales]\label{lem:super martingale}
Let $\{M_t\}_{t>0}$ be a non-negative supermartingale w.r.t a filtration $\{\mathcal{F}_t\}_{t>0}$ . If $\tau$ is a bounded stopping time w.r.t this filtration, then  the following inequality holds for any constant $c>0$
\begin{align*}
\prob{\sup_{t\geq \tau} M_t >c } \leq \frac{\EE[M_\tau]}{c}.
\end{align*}
\end{lem}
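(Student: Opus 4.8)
The plan is to deduce this from the classical Doob maximal inequality for non-negative supermartingales, the only twist being that the clock starts at the stopping time $\tau$ rather than at time $0$. The central object is the first-entrance time after $\tau$,
$$\sigma \deq \inf\set{t\geq \tau:\ M_t > c},$$
which is itself a stopping time with respect to $\set{\mathcal{F}_t}_{t>0}$, since $\tau$ is a stopping time and $\set{M_t}_{t>0}$ is adapted.

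First I would fix a deterministic horizon $T$ exceeding the almost-sure upper bound on $\tau$, so that $\tau\leq \sigma\wedge T$ with both $\tau$ and $\sigma\wedge T$ bounded stopping times. Applying the optional stopping theorem for supermartingales to the ordered pair $\tau\leq \sigma\wedge T$ gives $\EE[M_{\sigma\wedge T}]\leq \EE[M_\tau]$. Next I would lower bound the left-hand side: on the event $A_T\deq \set{\sigma\leq T}=\set{\sup_{\tau\leq t\leq T}M_t > c}$ we have $M_{\sigma\wedge T}=M_\sigma > c$ directly from the definition of $\sigma$, whereas on $A_T^c$ non-negativity gives $M_{\sigma\wedge T}\geq 0$. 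Hence $\EE[M_{\sigma\wedge T}]\geq \EE[M_{\sigma\wedge T}\11_{A_T}]\geq c\,\prob{A_T}$, and combining with the optional-stopping bound yields $\prob{A_T}\leq \EE[M_\tau]/c$ for every such $T$.

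Finally, since the events $A_T$ increase to $\set{\sup_{t\geq \tau}M_t > c}$ as $T\to\infty$ (in discrete time, a supremum strictly exceeding $c$ is attained by some finite index), continuity of probability from below gives $\prob{\sup_{t\geq \tau}M_t > c}=\lim_{T\to\infty}\prob{A_T}\leq \EE[M_\tau]/c$, which is the claim. The argument is essentially routine; the only points that need care are verifying that $\sigma$ is a bona fide stopping time and that optional stopping is applicable for supermartingales between bounded stopping times, together with the elementary but essential observation that non-negativity is exactly what allows the contribution on $A_T^c$ to be discarded. I therefore do not anticipate a genuine obstacle, the subtlest step being the monotone passage $T\to\infty$.
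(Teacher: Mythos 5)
Your proposal is correct and follows essentially the same route as the paper's proof: both define the first entrance time after $\tau$ at which the process exceeds $c$, truncate it at a deterministic horizon, invoke the supermartingale (optional stopping) inequality between $\tau$ and the truncated entrance time, use non-negativity to discard the contribution off the exceedance event, and finish with continuity of probability as the horizon tends to infinity. The only cosmetic difference is the order of steps—the paper first writes the pointwise bound $M_{n\wedge S}\geq c\,\11\{\sup_{\tau\leq t\leq n}M_t>c\}$ and then compares $\EE[M_{n\wedge S}]$ with $\EE[M_{n\wedge\tau}]$, while you apply optional stopping first—which changes nothing of substance.
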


Note that $\{\Htld^i_t\}_{t>0}$ is a super martingale. Therefore, from Lemma \ref{lem:super martingale}, we have that 
\begin{align*}
\PP\Big( \sup_{ \addvb{\Tlepstldi}  \leq t \leq N-1 }\Htld^i_{t} \geq \addvb{\sqrt{\epsilon}} \Big) \leq \frac{\EE[\Htld^i_{\addvb{\Tlepstldi} }]}{\addvb{\sqrt{\epsilon}}}
\end{align*}
If $\addvb{\Tlepstldi} <N$, then by definition of this stopping time  $\Htld^i_{\addvb{\Tlepstldi}} \leq \epsilon$; otherwise $\addvb{\Tlepstldi} = N$ which implies that $\Htld^i_{\addvb{\Tlepstldi}} = \Htld^i_{N}$. However, as $T\leq N$, then 
\begin{align*}
\EE[\Htld^i_{N}]\leq \EE[\Htld^i_T] \leq h_b(P_e) + P_e\log M_1M_2 \leq \epsilon,
\end{align*}
where the second inequality follows from Fano's and the last inequality holds as $P_e \ll \epsilon$. Consequently,
\begin{align*}
\PP\Big( \sup_{ \addvb{\Tlepstldi} \leq t \leq N-1 }\Htld^i_{t} \geq \addvb{\sqrt{\epsilon}} \Big) \leq \frac{\epsilon}{\addvb{\sqrt{\epsilon}}}= \addvb{\sqrt{\epsilon}}.
\end{align*}
Therefore, using this inequality in \eqref{eq:bounding S eq 2} we obtain that 
\begin{align*}
\EE\Big[\sum_{r=\addvb{\Tlepstldi}+1}^{N} \11\{\Htld^i_{r-1} \geq \addvb{\sqrt{\epsilon}}\}\Big] \leq N \addvb{\sqrt{\epsilon}}.
\end{align*}
Thus, from \eqref{eq:bounding S eq 1}, we obtain that 
\begin{align*}
\EE[S^i_{T\vee \Tuepstldi}]  & \leq \EE[T] + \frac{\log M_i}{I^i} \big(\addvb{\sqrt{\epsilon} N} \big)+ \addvb{\sqrt{\epsilon}}\frac{N}{I^i}.
\end{align*}
Hence, factoring $\EE[T]$ gives the following inequality 
\begin{align*}
\EE[S^i_{T\vee \Tuepstldi}]  & \leq \EE[T]( 1 + V(\epsilon, N)),
\end{align*}
where $V(\epsilon,N) =\frac{R_i}{I^i} \big( \addvb{\sqrt{\epsilon}} N \big)+ \addvb{\sqrt{\epsilon}} \frac{N}{\EE[T] I^i}.$ Further, one can see that by setting $\epsilon=\epsilon_N = \frac{1}{N^{2+\zeta}}$ for some $\zeta\in (0,1)$ it holds that $\lim_{N \rightarrow \infty} V(\epsilon_N, N)=0$. Hence, the proof is complete. 
\end{proof}

\section{Achievability Schemes}\label{sec:achievability}
%


\subsection{Two-phase scheme}\label{subsec:two phase}

We build upon Yamamoto-Itoh transmission scheme for \ac{ptp} channel coding with feedback \cite{Yamamoto}. The scheme sends the messages $W_1, W_2$ through blocks of length $n$. The transmission process is performed in two stages: 1) The ``data transmission" stage taking up to $n(1-\gamma)$ channel uses, 2) The ``confirmation" stage taking up to $n\gamma$ channel uses, where $\gamma$ is a design parameter taking values from  $[0,1]$.

\paragraph*{\textbf{Stage 1}}
For the first stage, we use any coding scheme that achieves the feedback capacity of the MAC. The length of this coding scheme is at most $n(1-\gamma)$.  Let $\hat{W}_1, \hat{W}_2$ denote the decoder's estimation of the messages at the end of the first stage. Define the following random variables:
$\Theta_i=1\{\hat{W}_i \neq W_i\}, \quad i=1,2$.
%
Because of the feedback, $\hat{W}_1$ and $\hat{W}_2$ are known at each transmitter. 
Therefore, at the end of the first stage, transmitter $i$ has access to $W_i, \hat{W}_1, \hat{W}_2$, and $\Theta_i$, where $i=1,2$.  

\paragraph*{\textbf{Stage 2}}
The objective of the second stage is to inform the receiver whether the hypothesis $\mathcal{H}_0: (\hat{W}_1, \hat{W}_2)=(W_1,W_2)$ or  $\mathcal{H}_1: (\hat{W}_1, \hat{W}_2)\neq (W_1,W_2)$ is correct. 
Each transmitter employs a code of size two and length $\gamma n$. 
The codebook of user $i$ consists of two codewords $\{\underline{x_i}(0),\underline{x_i}(1)\}$. User $i$ transmits codeword $\underline{x_i}(\Theta_i)$. These two codebooks are not generated independently. Instead, the four codewords $(\underline{x_1}(0),\underline{x_2}(0), \underline{x_1}(1), \underline{x_2}(1))$ are selected randomly among all the sequences with joint-type $\mathsf{P}_{\gamma n}$ defined over the set $\mathcal{X}_1 \times \mathcal{X}_2\times \mathcal{X}_1 \times \mathcal{X}_2$ and for sequences of length $\gamma n$.

\paragraph*{\textbf{Decoding}}
Upon receiving the channel output, the receiver estimates $\Theta_1, \Theta_2$. Denote this estimation by $\hat{\Theta}_1, \hat{\Theta}_2$. 
If $(\hat{\Theta}_1, \hat{\Theta}_2) = (0,0)$, then the decoder declares that the hypothesis  $\mathcal{H}_0$ has occurred. Otherwise, $\mathcal{H}_1$ is declared. Because of the feedback, this decision is also known at each encoder. Therefore, under $\mathcal{H}_0$, and if $(\hat{\Theta}_1, \hat{\Theta}_2) = (0,0)$ transmission stops and a new data packet is transmitted at the next block. Otherwise, the message is transmitted again at the next block. The process continues until $(\hat{\Theta}_1, \hat{\Theta}_2) = (0,0)$ occurs. 

We use the log-likelihood decoder in this stage. Let $m=\gamma n$. The decision region for $\mathcal{H}_0$  in the log-likelihood decoder is define as
\begin{align*}
\mathcal{A}:=\set{y^m\in \mathcal{Y}^m: \log \frac{Q^m(y^m| \underline{x_1}(0), \underline{x_2}(0)) }{Q^m(y^m| \underline{x_1}(a_1), \underline{x_2}(a_2))}\geq \lambda, ~~(a_1,a_2)=(0,1), (1,0), (1,1)},
\end{align*}
where $\lambda$ is the decision threshold to be adjusted and $Q^m(y^m | x_1^m, x_2^m)=\prod_i Q(y_i|x_{1,i}, x_{2,i})$ for any $y^m \in \mathcal{Y}^m, x_1^m \in \mathcal{X}_1^m, x_2^m \in \mathcal{X}_2^m$. 

The confirmation stage in the proposed scheme can be viewed as a decentralized binary hypothesis problem in which a binary hypothesis $\{\mathcal{H}_0, \mathcal{H}_1\}$ is observed partially by two distributed agents and the objective is to convey the true hypothesis to a central receiver. This problem is qualitatively different from the sequential binary hypothesis testing problem as identified in \cite{Berlin} for \ac{ptp} channel. Note also that in the confirmation stage we use a different coding strategy than the one used in Yamamoto-Itoh scheme \cite{Yamamoto}. Here, all four codewords have a joint-type  $\mathsf{P}_{\gamma n}$. It can be shown that repetition codes, and more generally, constant composition codes are strictly suboptimal in this problem.  Based on this scheme, we can derive the lower bound in Corollary \ref{cor:twophase lb} in Section \ref{subsec:LB}.
Example \ref{ex:two ptp} shows the potential problem with the ``Two-Phase'' scheme. Indeed, to achieve the reliability function of the channel, each transmitter transitions from the data transmission to the data confirmation stage at a different time, while the proposed ``Two-Phase'' scheme enforces the same time for this transition. Motivated by this observation, we now propose a ``Three-Phase'' scheme.

\subsection{Three-phase scheme}\label{subsec:threephase lb}

This section explains a more sophisticated scheme than the two-phase scheme described in Section \ref{subsec:two phase}. We propose a three-phase scheme that allows the encoders to finish the data transmission phase at different times. Let $n$ be the total number of channel uses and $\gamma_1, \gamma_2, \gamma_3 >0$ be the protocol's parameters corresponding to the length of each phase and  satisfying $\gamma_1+\gamma_2+\gamma_3=1$. Moreover, suppose user $i$'s message, $W_i$, takes values from $\set{1,2,\cdots, M_i}$, where $M_i=2^{nR_i}$ with $R_i\geq 0, i=1,2$.  For simplicity of the presentation, we assume that $n\gamma_i$ and $2^{nR_i}$ are integers for $i=1,2,3$.

The transmission is performed in three stages: (i) full data transmission taking up to $\gamma_1n$ channel uses; (ii) data-confirmation hybrid phase, where one encoder starts the confirmation while the other is still in the data transmission stage. This stage takes up to $\gamma_2n$ channel uses; (iii) full confirmation stage, where both users are in the confirmation taking $\gamma_3n$ channel uses.

Before the communication starts, the coding strategy determines $\gamma_i$ and selects which user should stop data transmission at the end of the first stage.  Without loss of generality, the first encoder starts the confirmation at the second stage while the second user continues data transmission until the end of the second stage. In this case, the second user splits its message into two parts, each to be communicated over stages one and two. 

\paragraph*{\textbf{Stage 1}}  The first encoder sends its entire message during this stage; while the second user splits its message into two parts with sizes $2^{nR_{2,1}}$ and $2^{nR_{2,2}}$.    Given that the length of this stage is $n\gamma_1$, the effective transmission rate during the first stage is $(\frac{R_1}{\gamma_1}, \frac{R_{2,1}}{\gamma_1})$. The effective rates must be inside the feedback capacity region for reliable communication during this stage.   We use a fixed-length capacity achieving code to send messages over $n\gamma_1$ uses of the MAC. At the end of this stage, the first user finishes data transmission. Let $\hat{W}_1^{1}$ be the decoder's estimate of $W_1$ at the end of this stage. Let $\Theta_1=\11\{\hat{W}_1^{1} \neq W_1\}$ be the indicator on whether $W_1$ is decoded correctly or not. 

\paragraph*{\textbf{Stage 2}} 
During the second phase, the second user transmits the second part of its message while the first user sends one bit of confirmation. Therefore, the effective transmission rate in this stage is $(0, \frac{R_{2,2}}{\gamma_2})$. As the rate of the first user is zero,  the second user can potentially send information at a higher rate than during the first stage. The first user employs a repetition code to inform the decoder whether its message is decoded correctly. It sends the decoder one bit of information described by $\Theta_1$.  For that, this user sends the symbol $x_1^1(\Theta_1)$ repeatedly for $n\gamma_2$ times. As a result, the second user faces a \ac{ptp} channel with a state that depends on $x_1^1(\Theta_1)$. This user assumes that $\Theta_1=0$ and  employs a Shannon \ac{ptp} capacity-achieving code for this channel. The case $\Theta_1=1$ is ignored by  the second user because when $\Theta_1=1$, the first user's message is decoded erroneously, and the communication must be restarted. In this case, it does not matter whether the second user's message is decoded correctly or not. Therefore, the channel's capacity from the second user's perspective is $$C_2(x_1^1(0)):= \max_{P_{X_2}} I(X_2; Y | X_1= x_1^1(0)).$$
Note that the feedback does not increase the capacity of this \ac{ptp} channel. The second user operates near the capacity by setting $R_{2,2}=C_2(x_1^1(0))-\epsilon$ for some $\epsilon>0$ sufficiently small. 
Let $\hat{W}^2_2$ be the overall decoded message of the second user at the end of this stage. Further, let $\Theta_2=\11\{\hat{W}_2^{2} \neq W_2\}$ as the indicator on whether $W_2$ is decoded correctly or not. 

\paragraph*{\textbf{Stage 3}} 
Both encoders are in the confirmation phase during the third stage. They inform the receiver whether the  messages are decoded correctly. We use the same strategy in this stage as in the two-phase scheme. The codebook of user $i$ consists of two codewords $\{\underline{x^2_i}(0),\underline{x^2_i}(1)\}$. User $i$ transmits codeword $\underline{x^2_i}(\Theta_i)$. The four codewords $(\underline{x^2_1}(0),\underline{x^2_2}(0), \underline{x^2_1}(1), \underline{x^2_2}(1))$ are selected randomly among all the sequences with joint-type $\mathsf{P}_{\gamma_3 n}$ defined over the set $\mathcal{X}_1 \times \mathcal{X}_2\times \mathcal{X}_1 \times \mathcal{X}_2$ and for sequences of length $\gamma_3 n$.

\paragraph*{\textbf{Decoding}}
Upon receiving the channel output, the receiver estimates $\Theta_1, \Theta_2$. Denote this estimation by $\hat{\Theta}_1, \hat{\Theta}_2$. 
If $(\hat{\Theta}_1, \hat{\Theta}_2) = (0,0)$, then the decoder declares that the hypothesis  $\mathcal{H}_0$ is occurred. Otherwise, $\mathcal{H}_1$ is declared. Because of the feedback, this decision is also known at each encoders. 
Therefore, under $\mathcal{H}_0$, and if $(\hat{\Theta}_1, \hat{\Theta}_2) = (0,0)$ transmission stops and a new data packet is transmitted at the next block. Otherwise, the message is transmitted again at the next block. The process continues until $(\hat{\Theta}_1, \hat{\Theta}_2) = (0,0)$ occurs.

We use the log-likelihood decoder for confirmation during the second and third stages. Let $m_2=\gamma_2 n$ and $m_3 = \gamma_3n$ be the length of the corresponding stages. 
Let $\Qbar{1}(y|x_1) \deq \sum_{x_2} P_{X_2}(x_2)Q(y|x_1, x_2)$ denote the effective channel from the first user's perspective during the second stage, where $P_{X_2}$ is the single-letter probability distribution based on which the codewords of the second user are generated. With this definition, the decision region for $\mathcal{H}_0$  in the log-likelihood decoder is given by 
\begin{align*}
\mathcal{A}:=\Big\{y^m\in \mathcal{Y}^m: \log \frac{\Qbar{1, m_2}(y^{m_2}| x^1_1(0))}{\Qbar{1, m_2}(y^{m_2}| x^1_1(1))} &+\log \frac{Q^{m_3}(y_{m_2+1}^{m}| \underline{x_1}(0), \underline{x_2}(0)) }{Q^{m_3}(y_{m_2+1}^{m}| \underline{x_1}(a_1), \underline{x_2}(a_2))}\geq \lambda,\\
&\hspace{40pt}(a_1,a_2)=(0,1), (1,0), (1,1)\Big\},
\end{align*}
where $m=m_2+m_3$, and $\lambda$ is the decision threshold to be adjusted. 

 The case where the second user is selected to stop transmission at the end of stage (i) is the same as the role of the users interchanged.  In that case,  we have $x_2^1(0)$ and $x_2^1(1)$ as the symbols for the second user's repetition code with $C_1(x_2^1(0))$ being the capacity of the first user's channel in stage (ii). To maximize the error exponent, we optimize over the parameters $(\gamma_1, \gamma_2, \gamma_3)$,  the type  $\mathsf{P}_n$, and the repeating symbols $x_1^1(0), x_1^1(1), x_2^1(0), x_2^1(1)$. Based on this optimization,  we decide which user must stop transmission at the end of stage (i). 

\subsection{Exponent-rate region}
The analysis of the three-phase scheme is different from the two phase scheme. This is because of the second phase of the communication, where one user sends data and the other is in the confirmation stage (see Fig. \ref{fig:phase2}). In what follows, we consider this problem.

\begin{figure}[hbtp]
\centering
\includegraphics[scale=1]{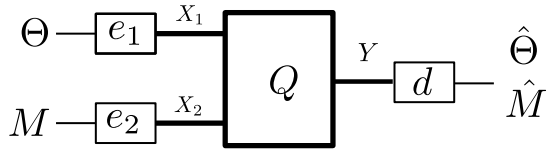}
\caption{Diagram of the second phase of the transmission. The first user performs a binary hypothesis for $\Theta$ testing while the second user transmits a message $M$}
\label{fig:phase2}
\end{figure}

Note that the hypothesis in this problem is highly asymmetric, as the probability of error in the decoded message converges to zero. Given this assumption, suppose the prior on the hypotheses are: $\PP(\mathcal{H}_1)=1-\PP(\mathcal{H}_0)\leq \zeta_n,$ where $\zeta_n$ is the probability of error for decoding the first message during the first stage of the transmission. By standard arguments, $\lim_{n\rightarrow \infty} \zeta_n=0$ as long as the transmission rates are inside the capacity. Therefore, we are interested only on bounding the exponent for the type-II error in decoding $\Theta$ and the transmission rate for the second user. 

\begin{lem}\label{lem:rate err exp}
With $n$ uses of the channel $Q$ as in Fig. \ref{fig:phase2}, let $R = \frac{\log M}{n},$ and $\beta =  \PP\big(\hat{\Theta}=1 ~|~ \Theta =0\big)$, then there exist $x_1(0), x_1(1) \in \mathcal{X}_1$ and a distribution $P_{X_2}$ on $\mathcal{X}_2$ such that 
\begin{align*}
R&\leq \min\set{I(X_2;Y|x_1(0)), I(X_2;Y|x_1(1))}, & \beta &\leq 2^{-nD(\bar{Q}(x_1(0)) \| \bar{Q}(x_1(1)))}.
\end{align*}
where, $\Qbar{1}(y| x_1)=\sum_{x_2} P_{X_2}(x_2)Q(y|x_1, x_2)$ for all $x_1 \in \mathcal{X}_1$ and $y\in \mathcal{Y}$.
\end{lem}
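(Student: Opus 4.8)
The plan is to analyze a random-coding ensemble for the second user combined with a threshold likelihood-ratio test for $\Theta$, and to show that a single codebook simultaneously meets the rate bound and the error-exponent bound. Fix two symbols $x_1(0), x_1(1) \in \mathcal{X}_1$ and an input distribution $P_{X_2}$ on $\mathcal{X}_2$; these are the free parameters that the ``there exist'' quantifier in the statement optimizes over. Generate the second user's codebook $\{X_2^n(m)\}_{m=1}^M$ with entries drawn i.i.d.\ from $P_{X_2}$, and let the first user transmit the constant sequence $(x_1(\Theta),\dots,x_1(\Theta))$ of length $n$. The decoder has two tasks: recover the message $M$ and decide $\hat\Theta$, and the crux is that it must do both from the single observation $Y^n$ without knowing $\Theta$ in advance.

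First I would handle the message. Since the receiver does not know $\Theta$ when decoding $M$, the effective channel seen by the second user is the compound channel with the two states $x_1(0)$ and $x_1(1)$ under the \emph{common} input law $P_{X_2}$. Writing $W_\theta(\cdot|x_2)\deq Q(\cdot|x_1(\theta),x_2)$, a decoder robust to both states (e.g.\ a maximum-mutual-information or joint-typicality rule evaluated under each hypothesis) together with the standard random-coding argument forces the codebook-averaged message-error probability to vanish under both $W_0$ and $W_1$ provided
\[
R < \min\bigl\{ I(X_2;Y|x_1(0)),\, I(X_2;Y|x_1(1)) \bigr\}.
\]
This is the first displayed inequality; the minimum appears precisely because one codebook and one input law $P_{X_2}$ must be reliable against both realizations of the state.

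The key step is the test for $\Theta$. I would use a \emph{codeword-unaware} test, thresholding the log-likelihood ratio $\sum_{i=1}^n \log \frac{\Qbar{1}(Y_i|x_1(0))}{\Qbar{1}(Y_i|x_1(1))}$ built from the marginal effective channels $\Qbar{1}(y|x_1)=\sum_{x_2}P_{X_2}(x_2)Q(y|x_1,x_2)$. The crucial distributional fact is that, averaged over the i.i.d.\ codebook and the uniform message, the output under $\Theta=\theta$ is exactly i.i.d., $\PP(Y^n=y^n\mid \Theta=\theta)=\prod_{i=1}^n \Qbar{1}(y_i|x_1(\theta))$, since each codeword symbol is drawn from $P_{X_2}$. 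Thus the test collapses to a binary hypothesis test between the product laws $\Qbar{1}(\cdot|x_1(0))^n$ and $\Qbar{1}(\cdot|x_1(1))^n$, and the Chernoff--Stein lemma yields the type-II error decaying as $2^{-nD(\Qbar{1}(x_1(0))\|\Qbar{1}(x_1(1)))}$ while the complementary error stays vanishing. Because the prior obeys $\PP(\mathcal{H}_1)\le \zeta_n\to 0$, only this exponentially small error is relevant to the overall analysis, which is exactly the regime Chernoff--Stein is tailored to, and this gives the second bound $\beta \le 2^{-nD(\Qbar{1}(x_1(0))\|\Qbar{1}(x_1(1)))}$.

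Finally I would combine the two analyses on one codebook. Both the message-error probability and the test error are controlled in expectation over the random ensemble, so a union bound on their sum followed by a standard selection (expurgation) argument extracts a single deterministic codebook -- hence concrete symbols $x_1(0),x_1(1)$ and law $P_{X_2}$ -- for which both bounds hold at once. I expect the main obstacle to be exactly the coupling between the two tasks: the message decoder conditions on a particular transmitted codeword, which destroys the clean i.i.d.\ $\Qbar{1}^n$ structure that makes the Chernoff--Stein step transparent. The resolution is to keep the $\Theta$-test codeword-unaware -- using only the marginal statistics $\Qbar{1}$ -- so that averaging over the ensemble restores the i.i.d.\ output law; this is also why the exponent appears as the single-letter divergence $D(\Qbar{1}(x_1(0))\|\Qbar{1}(x_1(1)))$ between marginal channels rather than the larger per-letter average $\sum_{x_2}P_{X_2}(x_2)D(Q(\cdot|x_1(0),x_2)\|Q(\cdot|x_1(1),x_2))$, the gap being the loss incurred by ignoring the known codeword in the test.
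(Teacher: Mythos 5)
Your proposal is correct and takes essentially the same route as the paper's own proof: a repetition code for the first user, an i.i.d.\ random codebook drawn from $P_{X_2}$ for the second, a codeword-unaware log-likelihood threshold test built on the marginal channel $\bar{Q}$ (so that ensemble averaging makes the output exactly i.i.d.\ and Chernoff--Stein applies), and a compound-channel argument yielding the minimum of the two mutual informations. Your closing expurgation step to extract a single deterministic codebook is a minor refinement the paper leaves implicit, but it does not change the argument.
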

\begin{proof}
The first encoder is based on a repetition code: $e_1(\theta)=x_1(\theta)^n$. For the second encoder, we use standard random codes generated \ac{iid} based on $P_{X_2}$. 
Independent decoders are used for $M$ and $\Theta$. The loglikelihood decoder is used for $\Theta$ with the decision region as 
\begin{align*}
\mathcal{A}:=\set{y^n\in \mathcal{Y}^n: \frac{1}{n}\log \frac{\bar{Q}^{n}(y^{n}| x_1(0))}{\bar{Q}^{n}(y^{n}| x_1(1))} \geq \lambda },
 \end{align*} 
 where $\bar{Q}(y|x_1):=\sum_{x_2}P_{X_2}(x_2) Q(y|x_1, x_2)$ for all $y\in\mathcal{Y}$ and $x_2\in\mathcal{X}_2$. If $y\notin \mathcal{A}$, then $\hat{\Theta}=1$ is declared and the decoding is aborted (alternatively, we set $\hat{m}=1$). If $y\in \mathcal{A}$, then $\hat{\Theta}=0$ and we continue to decoding $M$.  For decoding $M$ we find $\hat{m}$ such that $(x_2^n(\hat{m}), y^n)$ are jointly typical with respect to $P_{X_2}Q(\cdot | x_1(0), \cdot )$.
Next, we analyze $\beta$:
\begin{align*}
\beta &= \prob{\mathcal{A}| x_1^n(1)} = \sum_{m} \frac{1}{m} \sum_{x^n_2} P_{X_2}^n(x_2^n) \sum_{y^n \in \mathcal{A}} Q^n(y^n | x^n_1(1), x_2^n)=\sum_{y^n \in \mathcal{A}} \bar{Q}^n(y^n | x^n_1(1)).
\end{align*}
  Let $\hat{P}_{y^n}$ be the empirical type (distribution) of $y^n$. Then, the condition in $\mathcal{A}$ is equivalent to $$[D(\hat{P}_{y^n}\| \bar{Q}(x_1(1))) - D(\hat{P}_{y^n}\| \bar{Q}(x_1(0)))]\geq \lambda.$$ Therefore, from the Chernoff-Stein lemma, it is not difficult to see that $\beta \approx 2^{-nD(\bar{Q}(x_1(0)) \| \bar{Q}(x_1(1))}$.
  
Since the first user employs a repetition code, the second user faces a memoryless \ac{ptp} channel with unknown state $\Theta$. Therefore, from standard arguments, $M$ is correctly decoded with probability converging to one as $n\rightarrow \infty$, if $R$ is inside the capacity of the channel seen from the second user. That capacity is expressed as $\min\set{I(X_2;Y|x_1(0)), I(X_2;Y|x_1(1))}$.
\end{proof}

\begin{remark}\label{rem:rate err exp}
Note that for the analysis of the three-phase scheme, we are only interested in the error for decoding $M$ when $\Theta=0$. The reason is that $\Theta=1$ implies that the message during the first stage of transmission is decoded incorrectly. Hence, decoding the message in the second phase is not of interest as a re-transmission must occur.  As a result, the data transmission rate in the Lemma \ref{lem:rate err exp} can be made slightly larger to $R\leq I(X_2;Y|x_1(0)).$
\end{remark}

With this scheme, we establish Theorem \ref{thm: Error Exp lowerbound three phase} in Section \ref{subsec:LB}, with the full proof given  in Appendix \ref{proof: Error Exp lowerbound three phase}.

\section{Conclusion}\label{sec: conclution}
This paper studies the bounds on the achievable rate and error exponent of variable length codes for communications over multiple access channels. The bounds are tight for a variety of channels. 
The upper bound on the reliability function is derived via analysis of the drift of the message entropy conditioned on the channel output. For the lower bound,  a three-phase achievable scheme is proposed.

\begin{appendix}
\section{Illustrative Examples}\label{app:VL example}
\begin{example}\label{exp:tree VH}
Let us consider an example for illustration. Let $Y_t, t=1,2,3,$ be a sequence of \ac{iid} Bernoulli random variables with uniform distribution. Let $T$ be the first time $Y_t=1$, that is $T=\min\set{1\leq t\leq 3: Y_t=1}$. Then, $Y^T$ is characterized by the following table:
\begin{center}
\begin{tabular}{|c|cccc|}
 \toprule
$Y^T$ & $1$ & $01$ & $000$ & $001$\\
\midrule
$P(Y^T)$ & $\frac{1}{2}$ & $\frac{1}{4}$ & $\frac{1}{8}$ & $\frac{1}{8}$\\\bottomrule\hline
\end{tabular}
\end{center}
Fig. \ref{fig:TreeVH} shows the tree of the realizations of $Y^T$ and its entropy quantities. The total entropy of $Y^T$ is calculated by  
\begin{align*}
 H(Y^T) = &\prob{Y^T=1} H(Y_1) + \prob{Y^T=01} (H(Y_1)+H(Y_2|Y_1=0))\\
 &+\prob{Y^T=000} (H(Y_1)+H(Y_2|Y_1=0)+H(Y_3|Y_1=0, Y_2=0))\\
 &+\prob{Y^T=001} (H(Y_1)+H(Y_2|Y_1=0)+H(Y_3|Y_1=0, Y_2=0))
 \end{align*} 
 Therefore, we have that 
$ H(Y^T) =  \frac{7}{4}.$
  Note that $H(T) = H([\frac{1}{2}, \frac{1}{4}, \frac{1}{4}])= \frac{3}{2},$ and 
 \begin{align*}
 H(Y^T | T) &= \frac{1}{2} H(Y_1| T=1)+\frac{1}{4} H(Y_1, Y_2| T=2)+\frac{1}{4} H(Y_1, Y_2, Y_3| T=3) = 0+0+\frac{1}{4}=\frac{1}{4}.
 \end{align*}
 Hence, $H(Y^T) = \frac{3}{2}+\frac{1}{4} = \frac{7}{4}$.
\end{example}
\begin{figure}
    \centering
    \includegraphics[scale=2]{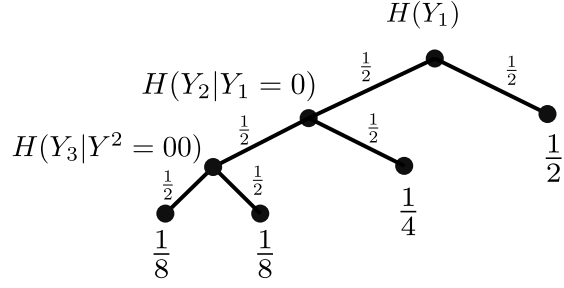}
    \caption{The tree of realizations of $Y^T$, in Example \ref{exp:tree MI}, with left branches denoting $y_t=0$  and the right branches denoting $y_t=1$. }
    \label{fig:TreeVH}
\end{figure}

\begin{example}\label{exp:tree MI}
Let us consider an example for illustration. Consider the sequence of Boolean random variables $(X_t, Y_t, Z_t), t=1,2,...$. Define the stopping time $T=\min\set{t\in \{1,2,3\}: y_t=1}$. Figure \ref{fig:Tree Example} shows the tree of realizations of $Y^T$, with left branches denoting $y_t=0$  and the right branches denoting $y_t=1$. Corresponding to each non-leaf node is a mutual information $I(X^t; Y_t | Z^t, y^{t-1})$. The weights represent the probability of each realization as 
\begin{align*}
    a=\prob{Y_1=0}, \qquad b=\prob{Y^2=00}, \qquad c=\prob{Y^3=000}.
\end{align*}
Therefore, 
\begin{align*}
I({X}^T \rightarrow {Y}^T \| {Z}^T) = I(X_1; Y_1 | Z_1) + a I(X_1 {X}_2; {Y}_2~ |~ {Z}^2, Y_1=0)+b I({X}^3; {Y}_3~ |~ {Z}^3, Y^2=00).  
\end{align*}
\end{example}
\begin{figure}
    \centering
    \includegraphics[scale=2]{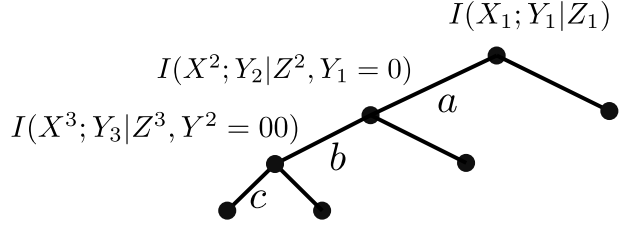}
    \caption{The tree of realizations of $Y^T$, in Example \ref{exp:tree MI}, with left branches denoting $y_t=0$  and the right branches denoting $y_t=1$. }
    \label{fig:Tree Example}
\end{figure}

\section{Proof of Theorem \ref{thm:VLC Capacity}}\label{proof:thm: VLC Capacity}
From Definition \ref{def:capcity}, consider an achievable rate pair $(R_1,R_2)$. Based on the definition of achievability, consider any $(M_1, M_2,N)$-VLC with the probability of error $P_e$ and a stopping time $T$ that is less than $N$ almost surely such that $R_i \leq \frac{\log M_i}{\EE[T]}, i=1,2$. 
Then, Fano's inequality implies that
$H(W_1, W_2 | Y^T)  \leq \alpha(P_e)$,
where $\alpha(P_e)= h_b(P_e) + P_e \log M_1 M_2.$ As conditioning reduces the entropy, we get a similar bound for the other  entropy terms: 
$$H(W_1|W_2, Y^T)\leq \alpha(P_e), \qquad H(W_2 | W, Y^T)\leq \alpha(P_e).$$

Next, we start with bounding the rate $R_1$. Since at time $t=0$, the messages $(W_1,W_2)$ are independent with uniform distribution, then we have that   
\begin{align*}
\log M_1 &= H(W_1) = H(W_1|W_2) = I(W_1; Y^T | W_2) + H(W_1|W_2, Y^T)\\\numberthis\label{eq: bound on M1}
&\leq I(W_1; Y^T | W_2) + \alpha(P_e).
\end{align*}
We proceed by showing that 
$I(W_1; Y^T | W_2) \leq I(X_1^T \rightarrow Y^T \| X_2^T)$.
We first pad $Y^T$ to make it a sequence of length $N$. Let $\xi$ be an auxiliary symbol and define 
$$Y^N =(Y_1, Y_2, \cdots, Y_T, \xi, \xi, \cdots, \xi).$$
Similarly, we extend the encoding functions and the channel's transition probability to include $\xi$. Specifically, after the stopping time $T$, the encoders send the constant symbol $\xi$ and the channel outputs $\xi$ to the receiver. More precisely,  
\begin{align*}
e_i(W_i, Y^n) = \xi, ~~~\forall n\geq T, i=1,2, \qquad Q(y|\xi, x_1) = Q(y|x_2, \xi) = \11\set{y=\xi}, ~~~ \forall x_1, x_2, y.
\end{align*}
This auxiliary adjustment is only for  tractability of the analysis as it does not affect the performance of the code. Specifically, the mutual information stays the same by replacing $Y^T$ with $Y^N$:
\begin{align*}
 I(W_1; Y^N|W_2) &= I(W_1; Y^T |W_2) + I(W_1; Y^N | W_2, Y^T) \\\numberthis \label{eq:Mu eq 1}
 &=  I(W_1; Y^T |W_2) + I(W_1; \xi_{T+1}^N | W_2, Y^T)  = I(W_1; Y^T |W_2).
 \end{align*} 
From the chain rule, we have that
\begin{align*}
I(W_1; Y^N|W_2)& = \sum_{r=1}^N I(W_1; Y_r|W_2, Y^{r-1})\\
& = \sum_{r=1}^N H(Y_r | W_2, X_2^r, Y^{r-1}) - H(Y_r |W_1, X_{1,r}, W_2, X_2^r, Y^{r-1})\\
& \stackrel{(a)}{\leq} \sum_{r=1}^N H(Y_r | X_2^r, Y^{r-1}) - H(Y_r |W_1, X_{1,r}, W_2, X_2^r, Y^{r-1})\\
& \stackrel{(b)}{=} \sum_{r=1}^N H(Y_r |X_2^r, Y^{r-1}) - H(Y_r |X_{1,r}, X_2^r, Y^{r-1})\\
&\stackrel{(c)}{=} I(X_1^N \rightarrow Y^N \| X_2^N)
\end{align*}
where (a) holds as conditioning reduces the entropy, and (b) holds because conditioned on $X_{1,r}, X_{2,r}$ the channel's output $Y_r$ is independent of the messages $(W_1,W_2)$. Lastly, (c) is due to the definition of directed mutual information as in \eqref{eq:directed MI}.
Next, we show that the directed mutual information above equals to the following: 
\begin{align*}
I(X_1^N \rightarrow Y^N \| X_2^N) =  I(X_{1}^T \rightarrow Y^T \|~ X_2^T) = \EE\Big[\sum_{r=1}^T I(X_{1,r}; Y_r | X_2^r, \mathcal{F}_{r-1})\Big],
\end{align*}
where the second equality is due to the definition given in \eqref{eq:directed MI T}.
Note that $I(X_{1,r}; Y_r | X_2^r, \mathcal{F}_{r-1}) = 0$ almost surely for any $r>T$ as $Y_r=\xi$. Therefore, we have that
\begin{align*}
 I(X_{1}^T \rightarrow Y^T \|~ X_2^T) & =  \EE\Big[\sum_{r=1}^N I(X_{1,r}; Y_r | X_2^r, \mathcal{F}_{r-1})\Big]\\\numberthis\label{eq:Mu eq 3}
 & =\sum_{r=1}^N  \EE\big[I(X_{1,r}; Y_r | X_2^r, \mathcal{F}_{r-1})\big]  
  = I(X_1^N \rightarrow Y^N \| X_2^N).
\end{align*}
Therefore, combining \eqref{eq: bound on M1}-\eqref{eq:Mu eq 3} gives an upper bound on $\log M_1$. Dividing both sides by $\EE[T]$ gives the following upper bound on $R_1$
\begin{align}\label{eq:R1 up}
R_1 \leq \frac{\log M_1}{\EE[T]}\leq  \frac{1}{\EE[T]}I(X_1^T \rightarrow Y^T \| X_2^T)+\frac{\alpha(P_e)}{\EE[T]}.
\end{align}
The first term above is the desired expression. The second term is vanishing as $P_e \rightarrow 0$. 
Based on a similar argument, we can bound  $R_2$ and $R_1+R_2$ as in the following 
\begin{align*}
R_2 &\leq \frac{1}{\EE[T]} I(X_{2}^T \rightarrow Y^T \|~ X_1^T) +\frac{\alpha(P_e)}{\EE[T]},\\\numberthis\label{eq:R2 R1+R2 up}
R_1+R_3 &\leq \frac{1}{\EE[T]} I(X_{1}^T, X_2^T \rightarrow Y^T) +\frac{\alpha(P_e)}{\EE[T]}.
\end{align*}
Note that the upper bounds in \eqref{eq:R1 up} and \eqref{eq:R2 R1+R2 up} hold for any VLC as long as $R_i \leq \frac{\log M_i}{\EE[T]}$. In view of Definition \ref{def:capcity}, we consider a sequence of $(M_1^{(n)},M_2^{(n)}, N^{(n)})$-VLCs with $P_e^{(n)}\rightarrow 0$ as $n\rightarrow \infty$. 
Moreover, the normalized mutual information quantities in \eqref{eq:R1 up} and \eqref{eq:R2 R1+R2 up} belong to $\CMAC$.
Hence, the proof is completed as the above argument implies that $(R_1,R_2)\in \CMAC$.

\section{Characterization of the Feedback Capacity Region Via Supporting Hyperplanes} \label{app:CFB supporting HYP}



From Theorem \ref{thm:VLC Capacity}, $\CMAC$ is the outer bound on the feedback-capacity region. In what follows, we provide an alternative characterization for our achievability scheme. 
Let $\mathcal{Q}$ be a finite set. It suffices to assume $|\mathcal{Q}|= 3$. For any positive integer $L$, define  $\mathcal{PQ}_L$ as the set of 
all joint distributions $P_{QX_1^LX_2^LY^L}$ such that the conditioned distribution $P_{X_1^LX_2^LY^L|Q=q}\in \mathcal{P}_L$ for any $q\in \mathcal{Q}$.

\begin{align*}
\mathcal{C}'_{FB}=\mbox{cl}\bigg[&\bigcup_{N\in \NN}\bigcup_{P\in \mathcal{PQ}_N}\bigcup_{T\leq N} \bigg\{(R_1, R_2)\in [0,\infty)^2: R_1 \leq \frac{1}{\EE[T]} I(X^T_1 \rightarrow Y^T|| X^T_2~|Q),\\\numberthis \label{eq:C_FB prime}
 R_2 &\leq \frac{1}{\EE[T]} I(X^T_2 \rightarrow Y^T|| X^T_1~|Q), \quad  R_1+R_2 \leq \frac{1}{\EE[T]} I(X^T_1 X^T_2 \rightarrow Y^T|Q) \bigg\}\bigg],
\end{align*}
where $\mbox{cl}$ means closure, and $T$ is a stopping time with respect to the filtration of $y^t, t>0$. 

\begin{lem}\label{lem:C_FB prime}
$\mathcal{C}'_{FB}=\CMAC$.
\end{lem}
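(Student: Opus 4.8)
The plan is to prove the two inclusions $\mathcal{C}'_{FB}\subseteq\CMAC$ and $\CMAC\subseteq\mathcal{C}'_{FB}$, the whole argument resting on the single observation that the time-sharing variable $Q$ convexifies the single-distribution rate regions. It is convenient to introduce notation: for $N\in\NN$, $P\in\PMAC^N$ and a stopping time $T\le N$, write $\mathcal{R}(P,T)\subseteq[0,\infty)^2$ for the pentagon cut out by the three inequalities $R_1\le\frac{1}{\EE[T]}I(X_1^T\to Y^T\|X_2^T)$, $R_2\le\frac{1}{\EE[T]}I(X_2^T\to Y^T\|X_1^T)$, and $R_1+R_2\le\frac{1}{\EE[T]}I(X_1^TX_2^T\to Y^T)$, so that $\CMAC=\overline{\mathrm{conv}}\big(\bigcup_{N}\bigcup_{P\in\PMAC^N}\bigcup_{T\le N}\mathcal{R}(P,T)\big)$, where $\overline{\mathrm{conv}}$ denotes convex closure.

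For $\mathcal{C}'_{FB}\subseteq\CMAC$, fix $P\in\mathcal{PQ}_N$ and a stopping time $T\le N$. Conditioning on $Q$, the conditional causal directed information decomposes additively: writing $P^{(q)}=P_{\cdot|Q=q}\in\PMAC^N$ and letting $T^{(q)}$ be the restriction of $T$ to $\{Q=q\}$ (still a stopping time with respect to the $Y$-filtration), Definition~\ref{def:VdI} gives $I(X_1^T\to Y^T\|X_2^T\mid Q)=\sum_q P_Q(q)\,I(X_1^{T^{(q)}}\to Y^{T^{(q)}}\|X_2^{T^{(q)}})$ computed under $P^{(q)}$, and similarly for the other two terms, while $\EE[T]=\sum_q P_Q(q)\mu_q$ with $\mu_q:=\EE[T\mid Q=q]$. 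Dividing by $\EE[T]$, each of the three bounds defining the region of $P$ becomes $\sum_q\alpha_q B_i^{(q)}$, where $B_i^{(q)}$ are the three bounds of $\mathcal{R}(P^{(q)},T^{(q)})$ and $\alpha_q:=P_Q(q)\mu_q/\EE[T]$ satisfies $\alpha_q\ge0$, $\sum_q\alpha_q=1$. Hence the region of $P$ is exactly the pentagon with $\alpha$-averaged bounds; since for any $x^{(q)}\in\mathcal{R}(P^{(q)},T^{(q)})$ the point $\sum_q\alpha_q x^{(q)}$ satisfies the three averaged inequalities, this averaged pentagon equals the Minkowski combination $\sum_q\alpha_q\mathcal{R}(P^{(q)},T^{(q)})\subseteq\mathrm{conv}\big(\bigcup_{P',T'}\mathcal{R}(P',T')\big)$. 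Taking closures yields the inclusion.

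For $\CMAC\subseteq\mathcal{C}'_{FB}$, it suffices to show that $\mathcal{C}'_{FB}$ is convex and closed and contains every $\mathcal{R}(P,T)$; the latter holds by taking $Q$ deterministic, and closedness is built into the definition. For convexity, take points $x^a,x^b$ in the pre-closure region, achieved by $(P^a,T^a)$ and $(P^b,T^b)$ with $\EE[T^a]=\mu_a$, $\EE[T^b]=\mu_b$, and a target weight $\alpha\in[0,1]$. Introduce $Q\in\{a,b\}$ with $P_Q(a)=p$, using $P^a$ on $\{Q=a\}$ and $P^b$ on $\{Q=b\}$ with overall length $\max(N^a,N^b)$; repeating the decomposition above, the effective weight on mode $a$ is $\beta=p\mu_a/(p\mu_a+(1-p)\mu_b)$, which sweeps $[0,1]$ as $p$ does, so choose $p$ with $\beta=\alpha$. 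Because $x^a,x^b$ satisfy the three inequalities of their respective pentagons, $\alpha x^a+(1-\alpha)x^b$ satisfies the three $\alpha$-averaged inequalities, hence lies in the region of the combined distribution and in $\mathcal{C}'_{FB}$. This gives convexity, so $\CMAC\subseteq\mathcal{C}'_{FB}$. The cardinality claim $|\mathcal{Q}|=3$ then follows from Carath\'eodory's theorem in $\RR^2$: any point of $\mathrm{conv}\big(\bigcup_{P,T}\mathcal{R}(P,T)\big)$ is a convex combination of at most three points $x^{(j)}\in\mathcal{R}(P_j,T_j)$ with $P_j\in\PMAC^{N_j}$; time-sharing over these three modes with weights matched by solving $P_Q(q)\propto\lambda_q/\mu_q$ realizes the point with $|\mathcal{Q}|=3$, and limit points are recovered by the closure.

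I expect the main obstacle to be the bookkeeping around the normalization by $\EE[T]$: the natural time-sharing weights are $\alpha_q\propto P_Q(q)\EE[T\mid Q=q]$ rather than $P_Q(q)$, and one must verify both the additive decomposition of the variable-length directed information under conditioning on $Q$ and that an arbitrary convex weight $\lambda_j$ is realizable by a valid $P_Q$ (namely $P_Q(q)\propto\lambda_q/\mu_q$). Confirming that $T^{(q)}$ remains a legitimate stopping time and that $P^{(q)}\in\PMAC^N$ is routine but should be stated explicitly, as is the remark that the three-bound structure is preserved under these averages so that the Minkowski combination genuinely coincides with the averaged pentagon.
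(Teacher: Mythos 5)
Your overall route is the same as the paper's: decompose the $Q$-conditioned variable-length directed information as a $P_Q$-weighted sum of per-$q$ quantities, recognize each conditional law $P^{(q)}=P_{\cdot\,|Q=q}$ as an element of $\PMAC^N$ with the same stopping time, and conclude by convexity of $\CMAC$. In one respect you are more careful than the paper: the paper keeps the factor $1/\EE[T]$ in the per-$q$ bounds and then asserts $(R_{1,q},R_{2,q})\in\CMAC$, which does not match the definition of $\CMAC$ (membership would require normalizing by $\EE[T\,|\,Q=q]$); your weights $\alpha_q=P_Q(q)\,\EE[T\,|\,Q=q]/\EE[T]$ paired with per-$q$ bounds $I_q/\EE[T\,|\,Q=q]$ repair exactly this bookkeeping.

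There is, however, a genuine gap at the step you state as ``this averaged pentagon equals the Minkowski combination.'' What you verify --- that $\sum_q\alpha_q x^{(q)}$ satisfies the averaged inequalities whenever $x^{(q)}\in\mathcal{R}(P^{(q)},T^{(q)})$ --- gives only $\sum_q\alpha_q\mathcal{R}(P^{(q)},T^{(q)})\subseteq\mathcal{R}_{\mathrm{avg}}$, which is the inclusion you do \emph{not} need. The proof requires the converse: a point obeying the three averaged inequalities must lie in the convex hull of the union of the per-$q$ regions. That is the standard time-sharing fact only when every triple satisfies the pentagon condition $c_q\le a_q+b_q$. Here $\max(a_q,b_q)\le c_q$ always holds (chain rule), but $c_q\le a_q+b_q$ can fail for a MAC with feedback, since conditioned on $Y^{t-1}$ the inputs are correlated through feedback, so $I(X_1^t;Y_t\,|\,Y^{t-1})$ may exceed $I(X_1^t;Y_t\,|\,X_2^t,Y^{t-1})$. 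For abstract triples with only $\max(a_q,b_q)\le c_q$ the needed inclusion is false: take $(a_1,b_1,c_1)=(1,0,10)$ and $(a_2,b_2,c_2)=(5,5,5)$ with equal weights; the averaged region is the rectangle $[0,3]\times[0,2.5]$, whose corner $(3,2.5)$ has sum $5.5$, yet every point of the convex hull of the two regions has sum at most $5$. So one must either show such configurations cannot arise from a single channel or supply the missing decomposition argument. To be fair, the paper's own proof makes the identical leap --- it writes $(R_1,R_2)=\sum_q P_Q(q)(R_{1,q},R_{2,q})$ with per-$q$ feasible pairs by fiat --- so you have reproduced rather than introduced the weak point; but your write-up presents it as \emph{proved} via a one-directional argument, and your closing remark that this step is routine is misplaced. (A smaller issue, shared with your convexity argument for $\CMAC\subseteq\mathcal{C}'_{FB}$: a stopping time that equals $T^a$ on $\{Q=a\}$ and $T^b$ on $\{Q=b\}$ is not measurable with respect to the $Y$-filtration alone, so the definitions need the filtration enlarged by $Q$ for that construction to be legitimate.)
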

\begin{proof}
Clearly, $\CMAC\subseteq \mathcal{C}'_{FB}$. Hence, it suffices to show that $\mathcal{C}'_{FB}\subseteq \CMAC$.  Let $(R_1, R_2)$ be a rate-pair inside $\mathcal{C}'_{FB}$. Then, there exists $N>0, P\in \mathcal{PQ}_N$ and stopping time $T\leq N$ such that the inequalities in \eqref{eq:C_FB prime} are satisfied. Note that, 
\begin{align*}
I(X^T_1 \rightarrow Y^T|| X^T_2~|Q) &= \sum_{q\in \mathcal{Q}} P_Q(q) I(X^T_1 \rightarrow Y^T|| X^T_2~|Q=q)\\
&=\sum_{q\in \mathcal{Q}} P_Q(q) I(X^T_1 \rightarrow Y^T|| X^T_2)_{P_{X_1^NX_2^NY^N|Q=q}}.
\end{align*}
Similarly, other mutual information quantities are decomposed. By assumption, $P_{X_1^NX_2^NY^N|Q=q}\in \PMAC^N$. Then, we can write 
\begin{align*}
(R_1, R_2)=\sum_{q\in \mathcal{Q}}P_Q(q) (R_{1,q}, R_{2,q}),
\end{align*}
where $(R_{1,q}, R_{2,q})$ are rate-pair satisfying 
\begin{align*}
   R_{1,q} &\leq \frac{1}{\EE[T]} I(X^T_1 \rightarrow Y^T|| X^T_2~|Q=q),\\
 R_{2,q} &\leq \frac{1}{\EE[T]} I(X^T_2 \rightarrow Y^T|| X^T_1~|Q=q),\\
  R_{1,q}+ R_{2,q} &\leq \frac{1}{\EE[T]} I(X^T_1 X^T_2 \rightarrow Y^T|Q=q).
 \end{align*}
This implies that $(R_{1,q}, R_{2,q})\in \CMAC$. Hence, their convex combination is also in $\CMAC$ implying that $\mathcal{C}'_{FB}\subseteq \CMAC$.
\end{proof}

We are ready to prove Theorem \ref{thm:capacity C lambda}.
For any $P\in \mathcal{PQ}_N$ and the corresponding stopping time $T\leq N$, define
\begin{align*}
E(P, T)\deq \bigg\{&(R_1, R_2, R_3)\in [0,\infty)^3: R_1 \leq  \frac{1}{\EE[T]} I(X^T_1 \rightarrow Y^T|| X^T_2~| Q),\\
 R_2 &\leq  \frac{1}{\EE[T]} I(X^T_2 \rightarrow Y^T|| X^T_1~| Q),\quad   R_3 \leq  \frac{1}{\EE[T]} I(X^T_1 X^T_2 \rightarrow Y^T | Q) \bigg\}.
\end{align*}
and let 
\begin{align*}
E\deq \mbox{cl}\bigg[\bigcup_{N\in \NN}\bigcup_{P\in \mathcal{PQ}_N}\bigcup_{T\leq N} E(P, T)\bigg].
\end{align*}
Note that $E$ is a convex set in $\RR^3$. This follows from the convexifying random variable $Q$  in  the mutual information quantities and from Caratheodory's theorem ( implying that it is sufficient to have $|\mathcal{Q}|=3$).
Let $\underline{\lambda} \in \Delta^3$, where $\Delta^3$ is the simplex of vectors $v\in [0,1]^3$ satisfying $\sum_iv_i=1$. Define, 
\begin{align*}
C_{\underline{\lambda}} \deq \sup_{N\in \NN}\sup_{P\in \mathcal{PQ}_N} \sup_{T\leq N} \frac{1}{\EE[T]}\Big(&\lambda_1 I(X^T_1 \rightarrow Y^T|| X^T_2~| Q)+\lambda_2 I(X^T_2 \rightarrow Y^T|| X^T_1~| Q)\\
&+\lambda_3 I(X^T_1 X^T_2 \rightarrow Y^T | Q)\Big)
\end{align*}
With that, let
\begin{align*}
E' \deq \bigg\{(R_1, R_2, R_3)\in [0,\infty)^3: \forall \underline{\lambda} \in \Delta^3, \sum_{i}\lambda_i R_i \leq C_{\underline{\lambda}}\bigg\}.
\end{align*}
Using the same techniques as in \cite{Salehi1978}, one can show that $E'=E.$ Next, define 
\begin{align*}
\mathcal{D}\deq \Big\{(R_1, R_2)\in [0,\infty)^2: (R_1, R_2, R_1+R_2)\in E\Big\}. 
\end{align*}
Note that $\mathcal{D}=\CMAC$ as follows from  Lemma \ref{lem:C_FB prime}, indicating that $\CMAC=\mathcal{C}'_{FB}=\mathcal{D}$. 
Consequently, we obtain another characterization of the outer bound region via supporting hyperplanes. That is the set of rate pairs $(R_1, R_2)$ such that for all $\underline{\lambda}\in \Delta^3$ the inequality holds:
$\lambda_1 R_1 + \lambda_2 R_2+\lambda_3(R_1+R_2) \leq \mathcal{C}_{\underline{\lambda}}$.
With that the proof of Theorem \ref{thm:capacity C lambda} is complete.

\section{Proof of Lemma \ref{lem:pruned sub martingale 2}}\label{proof:lem:pruned sub martingale 2}
We first proceed with a simpler version of the lemma stated below. 
\begin{lem}\label{lem:pruned sub martingale 1}
Suppose a non-negative random process $\set{H_r}_{r>0}$ has the following properties with respect to a filtration $\mathcal{F}_r, r>0,$
\begin{subequations}
\begin{alignat}{4}\label{eq:H linear K1}
 \EE[H_{r+1} - H_r |\mathcal{F}_r]&\geq -k_{1,r+1},  & \qquad \text{if}~  H_r & \geq \epsilon, \\\label{eq:H log K2}
\EE[\log H_{r+1} - \log H_r |\mathcal{F}_r]&\geq -k_{2,r+1}      & \text{if}~   H_r & < \epsilon\\\label{eq:H log K3}
\abs{\log H_{r+1} - \log H_r} & \leq k_3    & &\\\label{eq:H K4}
\abs{ H_{r+1}-H_r} &\leq k_4   & &
\end{alignat}
\end{subequations}
where $k_{1,r}, k_{2,r}$ are non-negative random variables measurable w.r.t $\mathcal{F}_{r}$, and $k_{1,r}\leq k_{2,r}$ for all $r>0$. Moreover, $k_3, k_4$ are non-negative constants. Given $\epsilon \in (0,1), N\in \NN$, and $I\geq D>0$ define the random process $\set{Z}_{t>0}$ as 
\begin{align*}
Z_t \deq (\frac{H_t-\epsilon}{I})\11\{H_t\geq \epsilon\}+\big(\frac{\log H_t - \log \epsilon}{D}+f(\log \frac{H_t}{\epsilon}) \big) \11\{H_t<\epsilon\},
\end{align*}
where $f(y)=\frac{1-e^{\lambda y}}{\lambda D}$ with $\lambda >0$. Further define the random process $\set{S_t}_{t>0}$ as 
\begin{align*}
S_t \deq \sum_{r=1}^{t\wedge \Tleps} \frac{k_{1,r}}{I} + \sum_{r=t\wedge \Tleps+1}^{t\wedge \Tueps} \frac{k_4}{I} \11\{H_{r-1} \geq \addvb{\sqrt{\epsilon}}\}+\sum_{r=t\wedge \Tueps+1}^{t} \frac{k_{2,r}}{D} + \addvb{\sqrt{\epsilon}} \frac{N}{I}\11\{t\geq \Tueps\}
\end{align*}
Let $\set{t_n}_{n>0}$ be the prune time process defined in \eqref{eq:tn} with respect to $\set{H_r}_{r>0}$. Lastly define the random process $\set{\Zprune}_{n>0}$ as $\Zprune \deq Z_{t_n}+S_{t_n}.$ Then, for small enough $\lambda>0$ the process $\set{\Zprune}_{n>0}$ is a sub-martingale with respect to the time pruned filtration $\mathcal{G}_{n}, n>0$. 
\end{lem}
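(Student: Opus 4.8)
The plan is to establish the defining one-step inequality $\EE[L_{n+1}-L_n\mid \mathcal{G}_n]\ge 0$, where $L_n=Z_{t_n}+S_{t_n}$. The enabling observation is that, because $t_n$ is nondecreasing and $\mathcal{G}_n$ is generated by $\tilde{Y}_n=Y^{t_n}$, one has $\mathcal{G}_n=\mathcal{F}_{t_n}$, so $L_n$ is $\mathcal{G}_n$-measurable and, by Lemma~\ref{prop:measurable}, the events locating $t_n$ relative to $\Tleps$ and $\Tueps$ are $\mathcal{G}_n$-measurable. I would therefore condition on $\mathcal{F}_{t_n}$ and split the sample space into the four regimes that $t_n\mapsto t_{n+1}$ can realize, read off from \eqref{eq:tn}: (a) a unit step strictly inside the linear band ($t_n=n<\Tleps$ and $H_{n+1}>\epsilon$); (b) the single pruned jump $\Tleps-1\mapsto\Tueps$ (triggered when $H_{n+1}\le\epsilon$ while $t_n<\Tleps$); (c) a stalled step ($\Tleps\le n<\Tueps$, where $t_{n+1}=t_n=\Tueps$ and hence $L_{n+1}=L_n$ trivially); and (d) a unit step inside the logarithmic band ($t_n=n\ge\Tueps$).

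In regime (a), $Z$ sits on its affine branch, so $Z_{t_{n+1}}-Z_{t_n}=(H_{n+1}-H_n)/I$, whose conditional mean is $\ge -k_{1,n+1}/I$ by \eqref{eq:H linear K1}; this is cancelled exactly by the increment $k_{1,n+1}/I$ of the first sum in $S$. In regime (d), writing $Z_t=g(a_t)$ with $a_t=\log(H_t/\epsilon)<0$ and $g(a)=a/D+(1-e^{\lambda a})/(\lambda D)$, the affine part of $g$ has conditional drift $\ge -k_{2,n+1}/D$ by \eqref{eq:H log K2}, matched by the increment $k_{2,n+1}/D$ of the third sum in $S$; the residual comes from the exponential part of $g$, and the claim is that for $\lambda$ small its contribution stays nonnegative. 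This is the exponential-supermartingale device familiar from Burnashev-type proofs: using $|a_{n+1}-a_n|\le k_3$ from \eqref{eq:H log K3} and a second-order expansion of $e^{\lambda(a_{n+1}-a_n)}$, the exponential correction is controlled once $\lambda$ is taken small enough relative to $k_3$, $D$ and the lower bound $k_2$; the normalization $f(0)=0$ keeps $Z$ continuous across the threshold.

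The crux — and the reason the whole pruning construction exists — is regime (b), the jump across the transient band in which $H$ may cross $\epsilon$ repeatedly. The decisive point is that $f$ flattens $Z$ at the threshold: since $g(a)=-\tfrac{\lambda a^{2}}{2D}+O(\lambda^{2})$ for bounded $a$, values of $a$ of order one contribute only $O(\lambda)$ to $Z$. Now the bounded-increment hypotheses pin both endpoints near the threshold: \eqref{eq:H K4} and \eqref{eq:H log K3} give $H_{\Tleps-1}\le \epsilon e^{k_3}$, so $Z_{\Tleps-1}=(H_{\Tleps-1}-\epsilon)/I=O(\epsilon)$, while the defining property of $\Tueps$ (a downcrossing from $H_{\Tueps-1}\ge\epsilon$) together with \eqref{eq:H log K3} forces $H_{\Tueps}\ge \epsilon e^{-k_3}$, i.e. $a_{\Tueps}\in[-k_3,0)$, so $Z_{\Tueps}=g(a_{\Tueps})=O(\lambda)$. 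Consequently the jump changes $Z$ by only $O(\lambda+\epsilon)$ in expectation, and this deficit is absorbed by the uniform slack term $\sqrt\epsilon\,N/I$ of $S$ (after choosing $\lambda$ small enough that $\lambda k_3^2/(2D)\le \sqrt\epsilon\,N/I$), while the $k_4$-excursion terms and $k_{1,\Tleps}/I$ enter with a favorable sign. I expect this jump estimate to be the main obstacle, precisely because it mixes the two branches of $Z$ and cannot be reduced to a term-by-term comparison; it is here that the supermartingale maximal inequality (Lemma~\ref{lem:super martingale}), which elsewhere keeps the auxiliary $S$-terms small, makes the construction coherent.

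Finally I would assemble the pieces: on each of the four $\mathcal{G}_n$-measurable events the conditional increment is nonnegative, so by the law of total expectation $\EE[L_{n+1}-L_n\mid\mathcal{G}_n]\ge0$, giving the submartingale property; the truncations $\wedge N$ built into $\Tleps$ and $\Tueps$ dispose of the boundary cases at $n$ near $N$.
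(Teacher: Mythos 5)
Your skeleton (the four-regime decomposition, cancelling the linear drift with the $k_{1}$-sum of $S$, cancelling the logarithmic drift with the $k_{2}$-sum, and the trivially stalled steps) matches the paper's proof. But your treatment of the jump regime (b) --- which you correctly identify as the crux --- has a genuine gap, and it is not the paper's argument. Your claim that $Z_{\Tueps}=O(\lambda)$ rests on $H_{\Tueps}\ge \epsilon e^{-k_3}$, i.e.\ on $\Tueps$ being an actual downcrossing. When $\Tueps$ is pruned at $N$ (which happens when the last excursion of $H$ above $\epsilon$ ends only after time $N$, while $H_N$ itself may sit far below $\epsilon$), the only available bound is $\log(H_{\Tueps}/\epsilon)\ge -k_3 N$, so $Z_{\Tueps}$ is only of order $-\lambda (k_3 N)^2/D$, not $-\lambda k_3^2/D$. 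Moreover, even in the unpruned case your absorption argument needs the endpoint deficit $Z_{\Tleps-1}\le \epsilon(e^{k_3}-1)/I$ to be dominated by the slack $\sqrt{\epsilon}\,N/I$, i.e.\ $\sqrt{\epsilon}\,(e^{k_3}-1)\le N$, which is an extra relation between $\epsilon$, $k_3$ and $N$ that the hypotheses do not supply: the lemma fixes an arbitrary $\epsilon\in(0,1)$ and allows only $\lambda$ to be chosen small. Two symptoms that something is off: your jump argument never uses the linear-drift hypothesis \eqref{eq:H linear K1} and never uses the $k_4$-indicator sum built into $S_t$ (why would the lemma include it?); and it leans on the supermartingale maximal inequality (Lemma \ref{lem:super martingale}), which plays no role in this lemma at all --- the paper invokes it only later, in Lemma \ref{lem:bounding S}, to bound $\EE[S]$.

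The paper's jump argument is different and avoids both problems. It never pins $Z_{\Tueps}$ near zero; instead it establishes, for $\lambda$ small, the pointwise inequality $\frac{\epsilon}{I}(e^{y}-1)-\frac{y}{D}<f(y)$ for all $y<0$, which gives the branch-free lower bound $Z_{\Tueps}\ge (H_{\Tueps}-\epsilon)/I$. The jump increment of $Z$ is then telescoped through $H$ itself,
\begin{align*}
H_{\Tueps}-H_{\Tleps-1}=(H_{\Tleps}-H_{\Tleps-1})+\sum_{r=\Tleps+1}^{\Tueps}(H_r-H_{r-1}),
\end{align*}
so the $-\epsilon$ terms of the two branches cancel and no $O(\epsilon)$ deficit ever appears. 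The first difference, paired with $k_{1,\Tleps}/I$ from $S$, has nonnegative conditional mean by \eqref{eq:H linear K1}; each later difference is paired with $k_4\,\11\{H_{r-1}\ge\sqrt{\epsilon}\}/I$, which is nonnegative by \eqref{eq:H K4} when the indicator is on and is at least $-\sqrt{\epsilon}/I$ (using $H_r\ge 0$) when it is off; the at most $N$ uncovered terms are exactly what the $\sqrt{\epsilon}\,N/I$ slack pays for. This argument is pathwise and works no matter how far $H$ has fallen below $\epsilon$ by time $\Tueps$, including the pruned case, which is precisely the situation your endpoint-pinning cannot handle. Repairing your proof essentially requires reproducing this step.
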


\begin{IEEEproof}
First, we point out that $L_n$ is measurable w.r.t $\mathcal{G}_n, n>0$. This follows from Proposition \ref{prop:measurable} and that $H_t$ is measurable w.r.t $\mathcal{F}_t$. Next, the objective is to prove $\EE[L_{n+1}- L_n|y^{t_n}] \geq 0$ almost surely for all $n\geq 1$ and $y^{t_n}$. We prove the lemma by considering three cases depending on $n$. 

\noindent\textbf{Case (a). $\{y^{t_n}: n< \Tleps\}$:}
This corresponds to the case where the channel output yields  $H_n \geq \epsilon.$
From the definition of $t_n$ in \eqref{eq:tn}, in this case $t_n=n$ and $t_{n+1}$ is either  $n+1$ if $n< \Tleps -1$ or $\Tueps$ if $n = \Tleps -1$. We use indicator functions to separate these two possibilities.  We first consider 
$(L_{n+1}-L_n) \11\set{n< \Tleps -1}$.
The random process of interest equals to 
\begin{align*}
L_n \11\set{n< \Tleps -1}&=  (Z_n+S_n) \11\set{n< \Tleps -1}= \Big(\frac{H_n-\epsilon}{I} + \sum_{r=1}^{n} \frac{k_{1,r}}{I}\Big) \11\set{n< \Tleps -1}.
\end{align*}
Similarly,
\begin{align*}
 L_{n+1} \11\set{n< \Tleps -1} 
&= \Big(\frac{H_{n+1}-\epsilon}{I} + \sum_{r=1}^{n+1} \frac{k_{1,r}}{I}\Big)\11\set{n< \Tleps -1}. 
\end{align*}
As a result, the difference between $L_n$ and $L_{n+1}$ satisfies the following
\begin{align*}
\EE[(L_{n+1}- L_n)\11\{n< \Tleps -1\}|y^{t_n}] &= \EE[(L_{n+1}- L_n)\11\{n< \Tleps -1\}|y^{n}]\\\numberthis\label{eq:first subcase a}
&= \EE[(\frac{H_{n+1}-H_n}{I} + \frac{k_{1,n+1}}{I})\11\{n< \Tleps -1\} | y^n ].
\end{align*}

Next, we consider the  $(L_{n+1}-L_n) \11\set{n = \Tleps -1}$.  In this case, we have $t_{n+1}=\Tueps$.
Consequently, the random processes equal to 
\begin{align*}
L_n \11\set{n = \Tleps -1} &= \Big(\frac{H_n-\epsilon}{I} + \sum_{r=1}^{n} \frac{k_{1,r}}{I}\Big) \11\set{n = \Tleps -1}, \\
L_{n+1}\11\set{n = \Tleps -1} &= (Z_{\Tueps} + S_{\Tueps})\11\set{n = \Tleps -1}\\ &=\Big((\frac{H_\Tueps-\epsilon}{I})\11\{H_\Tueps\geq \epsilon\}+\big(\frac{\log H_\Tueps - \log \epsilon}{D}+f(\log \frac{H_\Tueps}{\epsilon}) \big) \11\{H_\Tueps<\epsilon\}\\
&\hspace{40pt} + \sum_{r=1}^{\Tleps} \frac{k_{1,r}}{I} + \sum_{r=\Tleps+1}^{\Tueps} \frac{k_4}{I} \11\{H_{r-1} \geq \addvb{\sqrt{\epsilon}}\} +\addvb{\sqrt{\epsilon}} \frac{N}{I}\Big) \11\set{n = \Tleps -1}. 
 \end{align*} 
 Note that one cannot be sure as to whether $H_{\Tueps}$ is less than $\epsilon$ or not. The reason is that $\Tueps$ is pruned by $N$ as in \eqref{eq:Tueps}. Thus, $H_{\Tueps}$ can be greater than $\epsilon$ when $\Tueps = N$. Therefore, We proceed by bounding $Z_{\Tueps}$. Note that, for small enough $\lambda$ the following inequality holds 
\begin{equation}\label{eq:fy 1}
\frac{\epsilon}{I}(e^{y}-1)-\frac{y}{D} < f(y), \qquad   \forall y <0 .
\end{equation}
The reason is that  $\lim_{\lambda \rightarrow 0} f(y) = \lim_{\lambda \rightarrow 0} \frac{1-e^{\lambda y}}{\lambda D} = \frac{-y}{D}$. This is greater than the left-hand side of \eqref{eq:fy 1} as $y<0$. 

Now, using inequality \eqref{eq:fy 1} with $y = \log \frac{H_\Tueps}{\epsilon}$, we obtain the following bound:
 \begin{align}\label{eq:Zt bound}
 \big(\frac{\log H_\Tueps - \log \epsilon}{D}+f(\log \frac{H_\Tueps}{\epsilon}) \big) \11\{H_\Tueps<\epsilon\}\geq \big(\frac{H_\Tueps-\epsilon}{I} \big)\11\{H_\Tueps<\epsilon\}.
   \end{align}

Consequently, the difference $L_{n+1}-L_n$ satisfies the following 
\begin{align*}
\EE[(L_{n+1}- L_n)\11\{n= \Tleps -1\}|y^{n}] &\geq \EE\bigg[ \Big(\frac{H_{\Tueps}-H_n }{I} +  \frac{k_{1,\Tleps}}{I} + \sum_{r=\Tleps+1}^{\Tueps} \frac{k_4}{I} \11\{H_{r-1} \geq \sqrt{\epsilon}\}\\\numberthis \label{eq:case c 1}
&\qquad +\sqrt{\epsilon}\frac{N}{I} \Big) \11\{n= \Tleps -1\}\Big| y^n \bigg].
\end{align*}
Next, we bound the first term above as  
\begin{align*}
H_{\Tueps}-H_n  &= H_{n+1}-H_n  + \sum_{r=n+2}^{\Tueps} (H_{r}-H_{r-1}), 
\end{align*}
where in the first equality, we add and subtract the intermediate terms $H_r, n+1\leq r \leq \Tueps-1$. Next,we substitute the above terms in the right-hand side of \eqref{eq:case c 1}. As $n+2 = \Tleps+1$, then the RHS of \eqref{eq:case c 1} equals the following:
\begin{align*}
&\frac{1}{I} \EE\Big[\Big( H_{n+1}-H_n  +  k_{1,\Tleps} \Big)\11\{n= \Tleps -1\} \Big| y^n \Big]\\\numberthis \label{eq:case c 2}
&\quad + \frac{1}{I}\EE\bigg[\Big(\sum_{r=\Tleps+1}^{\Tueps} \Big( {H_{r}-H_{r-1}}+k_4 \11\{H_{r-1} \geq \addvb{\sqrt{\epsilon}}\}\Big)+\addvb{\sqrt{\epsilon}}N\Big) \11\{n= \Tleps -1\}\Big| y^n \bigg].
\end{align*}
Now we combine the two sub-cases (i.e., \eqref{eq:first subcase a} and \eqref{eq:case c 2}) to obtain:
\begin{align*}
 \EE[L_{n+1}&- L_n|y^{n}] \geq  \frac{1}{I} \EE\Big[ H_{n+1}-H_n  +  k_{1,n+1}  \Big| y^n \Big]\\\numberthis \label{eq:case a residual}
& + \frac{1}{I}\EE\bigg[\Big(\sum_{r=\Tleps+1}^{\Tueps} \Big( {H_{r}-H_{r-1}}+k_4 \11\{H_{r-1} \geq \addvb{\sqrt{\epsilon}}\}\Big)+\addvb{\sqrt{\epsilon}}N\Big) \11\{n= \Tleps -1\}\Big| y^n \bigg].
\end{align*}

Note that using condition \eqref{eq:H linear K1}, we infer that the first term is non-negative. We work on the second term using the following chain of inequalities: 
\begin{align*}
& \frac{1}{I}\EE\bigg[\Big[\sum_{r=\Tleps+1}^{\Tueps} \Big((H_{r}-H_{r-1})+{k_4}\Big) \11\{H_{r-1} \geq \addvb{\sqrt{\epsilon}}\}\\
&\qquad + \Big( ({H_{r}-H_{r-1}}) \11\{H_{r-1} < \addvb{\sqrt{\epsilon}}\}\Big) +\addvb{\sqrt{\epsilon}} N \Big] \11\{n= \Tleps -1\}\Big| y^n \bigg]\\
&\stackrel{(a)}{\geq}  \frac{1}{I}\EE\left[\Big[\sum_{r=\Tleps+1}^{\Tueps} \Big( ({H_{r}-H_{r-1}}) \11\{H_{r-1} <\addvb{\sqrt{\epsilon}}\}\Big) +\addvb{\sqrt{\epsilon}} N \Big] \11\{n= \Tleps -1\}\Big| y^n \right]\\
&\stackrel{(b)}{\geq}  \frac{1}{I}\EE\left[\Big[ \Big(\sum_{r=\Tleps+1}^{\Tueps} -H_{r-1} \11\{H_{r-1} < \addvb{\sqrt{\epsilon}}\}\Big) +\addvb{\sqrt{\epsilon}} N\Big] \11\{n= \Tleps -1\}\Big| y^n \right]\\
&\stackrel{(c)}{>}  \frac{1}{I}\EE\left[\Big[ \Big(\sum_{r=\Tleps+1}^{\Tueps} -\addvb{\sqrt{\epsilon}} \Big) +\addvb{\sqrt{\epsilon}} N\Big] \11\{n= \Tleps -1\} \Big| y^n \right]\\
&\stackrel{(d)}{\geq}  \frac{1}{I}\EE\left[\Big[ \Big(\sum_{r=1}^{N} -\addvb{\sqrt{\epsilon}} \Big) +\addvb{\sqrt{\epsilon}} N\Big]\11\{n= \Tleps -1\} \Big| y^n \right] = 0,
\end{align*}
where (a) is due to \eqref{eq:H K4}, inequality (b) holds as $H_r\geq 0$, inequality (c) holds as  $H_{r-1} \11\{H_{r-1} < \sqrt{\epsilon}\}<\sqrt{\epsilon}$, and lastly (d) holds as $\Tueps \leq N$. To sum up, we proved that 
\begin{align*}
\EE[(L_{n+1}- L_n)|y^{n}] \geq 0.
\end{align*}

\noindent\textbf{Case (b). $\{y^{t_n}: n\geq  \Tleps\}$:} Note that if
$n<\Tueps$, then $t_n =t_{n+1} = \Tueps$. Thus, immediately, $L_{n+1} -L_n =0$ almost surely.    Otherwise, if $n\geq \Tueps$ and $\{\Tueps =N\}\medcup \{n\geq N\}$, then $t_n=t_{n+1}=N$ and hence   $L_{n+1} -L_n =0$. Therefore, it remains to consider the case that $\Tueps<N$ and $\Tueps\leq n<N$. Therefore, $t_n = n$ and $t_{n+1} = n+1$. Furthermore, as $n+1>n\geq \Tueps$ and $\Tueps<N$, then $H_n<\epsilon$ and $H_{n+1}<\epsilon$, implying that we are in the logarithmic drift. Therefore, we have
\begin{align*}
 L_n = Z_n+S_n & = \frac{\log H_n -\epsilon}{D} +f(\log \frac{H_n}{\epsilon})+S_n,\\
 L_{n+1} = Z_{n+1} +S_{n+1} & = \frac{\log H_{n+1} -\epsilon}{D} +f(\log \frac{H_{n+1}}{\epsilon})+S_{n+1}. 
 \end{align*} 
Hence, to sum up, in all of the above sub-cases in (b), the following holds:
\begin{align*}
L_{n+1}-L_n= \frac{\log H_{t_{n+1}} - \log H_{t_{n}}}{D} +f(\log \frac{H_{t_{n+1}}}{\epsilon})-f(\log \frac{H_{t_{n}}}{\epsilon})+S_{t_{n+1}}-S_{t_n}.
\end{align*}
Note that from \eqref{eq:H log K2}, one can check that the following inequality holds
in all sub-cases:
\begin{align*}
\EE\left[\frac{\log H_{t_{n+1}} - \log H_{t_{n}}}{D}+S_{t_{n+1}}-S_{t_n} \Big| y^{t_n}\right] \geq 0.
\end{align*}
Therefore, the difference $L_{n+1}-L_n$ satisfies the following 
\begin{align*}
\EE[(L_{n+1}- L_n)|y^{t_n}] & \geq  \EE\left[f(\log \frac{H_{t_{n+1}}}{\epsilon})-f(\log \frac{H_{t_{n}}}{\epsilon})\Big| y^{t_n}\right] .
\end{align*}
Next, we use the Taylor's theorem for $f$. We only need to consider the case that $\Tueps<N$ and $\Tueps\leq n<N$ implying that $t_n = n$ and $t_{n+1} = n+1$.   Using the Taylor's theorem we can write
\begin{align*}
f(\log \frac{H_{{n+1}}}{\epsilon}) = f(\log \frac{H_{{n}}}{\epsilon}) + \frac{\partial f}{\partial y}\Big|_{y = \log \frac{H_{{n}}}{\epsilon}} \big(\log \frac{H_{n+1}}{H_n} \big)+\frac{\partial^2 f}{\partial y^2}\Big|_{y = \zeta}(\log \frac{H_{n+1}}{H_n})^2,
\end{align*}
where $\zeta$ is between $\log \frac{H_{n+1}}{\epsilon}$ and $\log \frac{H_n}{\epsilon}$ and 
\begin{align*}
\frac{\partial f}{\partial y}\Big|_{y = \log \frac{H_{{n}}}{\epsilon}}  = -\frac{e^{\lambda \log \frac{H_n}{\epsilon}}}{D}, \qquad \frac{\partial^2 f}{\partial y^2}\Big|_{y = \zeta} = -\frac{\lambda }{D}e^{\lambda \zeta}.
\end{align*}
As a result, we have
\begin{align*}
\EE\bigg[f(\log \frac{H_{{n+1}}}{\epsilon})-f(\log & \frac{H_{{n}}}{\epsilon})\Big| y^{n}\bigg]  = \EE\left[-\frac{e^{\lambda \log \frac{H_n}{\epsilon}}}{D} \big(\log \frac{H_{n+1}}{H_n} \big)-\frac{\lambda }{D}e^{\lambda \zeta}(\log \frac{H_{n+1}}{H_n})^2\Big| y^n \right]\\
&= \EE\left[-\frac{e^{\lambda \log \frac{H_n}{\epsilon}}}{D} \big(\log \frac{H_{n+1}}{H_n} \big)-\frac{\lambda }{D}e^{\lambda ( \zeta \pm \log \frac{H_n}{\epsilon})}(\log \frac{H_{n+1}}{H_n})^2\Big| y^n \right]\\
&\stackrel{(a)}{\geq}  \EE\left[-\frac{e^{\lambda \log \frac{H_n}{\epsilon}}}{D} \big(\log \frac{H_{n+1}}{H_n} \big)-\frac{\lambda }{D}e^{\lambda ( k_3 + \log \frac{H_n}{\epsilon})}(\log \frac{H_{n+1}}{H_n})^2\Big| y^n \right]\\
&\stackrel{(b)}{\geq}  \EE\left[-\frac{e^{\lambda \log \frac{H_n}{\epsilon}}}{D} k_3 -\frac{\lambda k_3^2}{D}e^{\lambda ( k_3 + \log \frac{H_n}{\epsilon})}\Big| y^n \right]\\
&=   \big( \frac{k_3}{D}  -\frac{\lambda k_3^2e^{\lambda k_3} }{D}\big) e^{\lambda \log \frac{H_n}{\epsilon}} \geq 0,
\end{align*}
where inequality (a) holds as $\abs{\zeta - \log \frac{H_n}{\epsilon}} \leq \abs{\log \frac{H_{n+1}}{\epsilon} - \log \frac{H_n}{\epsilon}}\leq k_3$ using \eqref{eq:H log K3}. Again, (b) follows from \eqref{eq:H log K3}. The last inequality holds for sufficiently small $\lambda >0$. Consequently, for case (b), we have 
\begin{align*}
    \EE[(L_{n+1}- L_n)|y^{t_n}] \geq 0.
\end{align*}
Lastly, combining cases (a) and (b), we get the desired result.
\end{IEEEproof}
\subsection*{Proof of Lemma \ref{lem:pruned sub martingale 2}}
Now we are ready to prove Lemma \ref{lem:pruned sub martingale 2}.
We follow the same argument with the same  cases as in the proof of Lemma \ref{lem:pruned sub martingale 1} but with $\Tleps=\Tlepstld$ and $\Tueps=\Tuepstld$. Measurability of $L_n$ w.r.t $\mathcal{G}_{n}, n>0$ follows from the same argument as in the proof of Lemma \ref{lem:pruned sub martingale 1}. Below, we consider the corresponding cases. 

\noindent\textbf{Case (a). $\{y^{t_n}: n< \Tlepstld\}$:} This case consists of two sub-cases depending on whether $n< \Tlepstld -1$ or not. 

For the first event, $n< \Tlepstld -1$, the argument in this case is  the same as the first sub-case of (a) as in the proof of Lemma \ref{lem:pruned sub martingale 1} but with $H_n= \Hbar_n, n>0$. Particularly, we can show that
\begin{align*}\numberthis\label{eq:first subcase}
\EE[(L_{n+1}- L_n)\11\{n< \Tlepstld -1\}|y^{t_n}] = \EE[(\frac{\Hbar_{n+1}-\Hbar_n}{I} + \frac{k_{1,n+1}}{I})\11\{n< \Tlepstld -1\} | y^n ].
\end{align*}

Next, we consider the other sub-case with $(L_{n+1}-L_n)\11\set{n = \Tlepstld -1}$. 
Following the same argument as Case (b) in the proof of Lemma \ref{lem:pruned sub martingale 1}, we have:
\begin{align*}
L_n \11\set{n = \Tlepstld -1} &= \Big(\frac{\Hbar_n-\epsilon}{I} + \sum_{r=1}^{n} \frac{k_{1,r}}{I}\Big) \11\set{n = \Tlepstld -1}, \\
L_{n+1}&\11\set{n = \Tlepstld -1}= (Z_{\Tuepstld} + S_{\Tuepstld})\11\set{n = \Tlepstld -1}\\ &=\Big((\frac{\Hbar_\Tuepstld-\epsilon}{I})\11\{H_\Tuepstld\geq \epsilon\}+\big(\frac{\log \Htld_\Tuepstld - \log \epsilon}{D}+f(\log \frac{H_\Tuepstld}{\epsilon}) \big) \11\{\Htld_\Tuepstld<\epsilon\}\\
&\hspace{30pt} + \sum_{r=1}^{\Tlepstld} \frac{k_{1,r}}{I} + \sum_{r=\Tleps+1}^{\Tuepstld} \frac{k_4}{I} \11\{\Htld_{r-1} \geq \addvb{\sqrt{\epsilon}}\} + \addvb{\sqrt{\epsilon}} \frac{N}{I}\Big) \11\set{n = \Tlepstld -1}. 
 \end{align*} 
Next, using the inequality \eqref{eq:fy 1} with $y = \log \frac{\Htld_\Tuepstld}{\epsilon}$ we obtain  \begin{align*}
\big(\frac{\log \Htld_\Tuepstld - \log \epsilon}{D}+f(\log \frac{\Htld_\Tuepstld}{\epsilon}) \big) \11\{\Htld_\Tuepstld<\epsilon\} &\geq \big(\frac{\Htld_\Tuepstld-\epsilon}{I} \big)\11\{\Htld_\Tuepstld<\epsilon\}  \\
&\geq \big(\frac{\Hbar_\Tuepstld-\epsilon}{I} \big)\11\{\Htld_\Tuepstld<\epsilon\}.
   \end{align*} 
   where the last inequality is due to \eqref{eq:Hbar less Htld} implying that $\Htld_\Tuepstld \geq \Hbar_\Tuepstld$. With the above inequality, we have
   \begin{align*}
\EE[(L_{n+1}- L_n)\11\{n= \Tlepstld -1\}|y^{n}] &\geq \EE\bigg[ \Big(\frac{\Hbar_\Tuepstld - \Hbar_n }{I} +  \frac{k_{1,\Tlepstld}}{I} + \sum_{r=\Tlepstld+1}^{\Tuepstld} \frac{k_4}{I} \11\{\Htld_{r-1} \geq \sqrt{\epsilon}\}\\
&\qquad +\sqrt{\epsilon}\frac{N}{I} \Big) \11\{n= \Tlepstld -1\}\Big| y^n \bigg].
\end{align*}
Combining the two sub-cases and the same argument as in the proof of Lemma \ref{lem:pruned sub martingale 1}, we can show that $\EE[L_{n+1}- L_n|y^{t_n}] \geq 0$.
   
   \noindent\textbf{Case (b). $\{y^{t_n}: n\geq  \Tlepstld\}$:} The argument is the same as in Case (b) in the proof of Lemma \ref{lem:pruned sub martingale 1} but with $H_n = \Htld_n$. Hence, in this case, $\EE[L_{n+1}- L_n|y^{t_n}] \geq 0$.
   
Lastly, combining these cases we get the desired result.

\section{Proof of Lemma \ref{lem: log drift MAC FB} }\label{appx: proof of lem log drift}
\begin{IEEEproof}
We start with the case $i=1$. The proof for the other two cases follows from the same argument.  Define the following quantities
\begin{align*}
\mu_{1}(w)&=\prob{W_1=w| Y^{r-1}=y^{r-1}}\\
\mu_{1}(w, y_{r})&=\prob{W_1=w| Y^{r-1}=y^{r-1},Y_{r}=y_{r}}\\
Q^1_{w}(y_{r})&=\prob{Y_{r}=y_{r}|W_1=w, Y^{r-1}=y^{r-1}},
\end{align*}
where $w\in [1:M_1], y_{t+1}\in \mathcal{Y}$. 
 
Let $\Wstar \in [1: M_1]$ be the most likely message condition on $y^{r-1}$. That is $\wstar = \argmax_{w\in [1: M_1]}\mu_{1}(w)$.
First, we show that having $\Htld^1_{r-1} < \epsilon, \epsilon\in [0,1)$ implies that $\mu_1(\wstar)= 1-\eta_1(\epsilon)$ with $\eta_1$ being a function satisfying $\lim_{\epsilon \rightarrow 0} \eta_1(\epsilon) =0.$ The argument is as follows:
 
Using the grouping axiom we have
\begin{align}\label{eq:Htld eq 1}
\Htld^1_{r-1}=H(W_1|y^{r-1})=h_b( \mu_1(w^*_{1,r}))+(1-\mu_1(w^*_{1,r}))H(\hat{W}_1), 
\end{align}
where $\hat{W}_1$ is a random variable with probability distribution $P(\hat{W}_1=w)=\frac{\mu_1(w)}{1-\mu_1(w^*_{1,r})}, ~w\in [1:M_1], w\neq w^*_{1,r}$. Hence, having $\Htld^1_{r-1}\leq \epsilon$ implies that $h_b(\mu_1(\wstar))\leq \epsilon$.
Taking the inverse image of $h_b$ implies that either $\mu_1(\wstar)\geq 1-h_b^{-1}(\epsilon)$ or $\mu_1(\wstar)\leq h_b^{-1}(\epsilon)$, where $h_b^{-1}:[0,1]\rightarrow [0,\frac{1}{2}]$ is the lower-half inverse function of $h_b$.  We show that the second  case is not feasible. For this purpose, we show that the inequality $\mu_1(\wstar)\leq h_b^{-1}(\epsilon)$ implies that $\Htld^1_{r-1}\geq 1$ which is a contradiction  with the original assumption $\Htld^1_{r-1} \leq \epsilon < 1$. This statement is proved in the following proposition. With this argument, we conclude that $\Htld^1_{r-1}\leq \epsilon$ implies that $\mu_1(\wstar)\geq 1-\eta_1(\epsilon)$, where $\eta_1(\cdot) = h_b^{-1}(\cdot)$.

\begin{proposition}
Let $W$ be a random variable taking values from a finite set $\mathcal{W}$. Suppose that $P_W(w)\leq \frac{1}{2}$ for all $w\in \mathcal{W}$. Then $H(W)\geq 1$. 
\end{proposition}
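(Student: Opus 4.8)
The plan is to invoke the definition of Shannon entropy directly and bound each term of the defining sum from below using the hypothesis; no convexity, grouping, or limiting argument is needed. Writing $H(W) = \sum_{w \in \mathcal{W}} P_W(w)\bigl(-\log_2 P_W(w)\bigr)$ with the usual convention $0\log 0 = 0$ (so that only the support of $W$ contributes), I would observe that the assumption $P_W(w) \le \tfrac12$ yields, for every $w$ in the support, the pointwise inequality $-\log_2 P_W(w) \ge -\log_2 \tfrac12 = 1$.

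Substituting this termwise bound into the entropy sum and factoring out the common lower bound $1$ gives $H(W) \ge \sum_{w} P_W(w)\cdot 1 = \sum_w P_W(w) = 1$, where the last equality is just normalization of the probability mass function. This completes the argument, which is essentially one line: the ``surprisal'' $-\log P_W(w)$ is at least one bit at every atom precisely because no atom carries more than half the mass.

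The only point that requires any care is the normalization of the logarithm: the inequality $H(W)\ge 1$ as stated presupposes the base-$2$ logarithm (entropy measured in bits), matching the convention used throughout for $h_b$ and the message entropies; with the natural logarithm the conclusion would instead read $H(W)\ge \ln 2$. There is thus no genuine obstacle here. For the intended use inside the proof of Lemma~\ref{lem: log drift MAC FB}, this proposition is exactly what rules out the branch $\mu_1(w^*_{1,r})\le h_b^{-1}(\epsilon)$: in that branch every posterior mass $\mu_1(w)$ would be at most $\tfrac12$ (for small enough $\epsilon$), forcing $\Htld^1_{r-1}\ge 1$, which contradicts the standing assumption $\Htld^1_{r-1}\le \epsilon < 1$.
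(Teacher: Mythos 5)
Your proof is correct, and it takes a genuinely different route from the paper. You bound the entropy termwise: since no atom carries mass above $\tfrac12$, the surprisal $-\log_2 P_W(w)$ is at least $1$ bit on every atom of the support, and summing against the probability masses gives $H(W)\geq \sum_w P_W(w)=1$. The paper instead argues by induction on $|\mathcal{W}|$ with an entropy-decreasing mass-redistribution step: it repeatedly shifts probability from the least likely atom to the most likely one (subject to the cap $\tfrac12$), invoking the grouping axiom to argue each such shift can only reduce $H(W)$, until one atom vanishes and the induction hypothesis applies. Your argument is strictly more elementary and shorter, and it sidesteps the one genuinely delicate point in the paper's proof, namely justifying that the redistribution step is entropy-nonincreasing (which really rests on Schur-concavity/concavity of $x\mapsto -x\log x$, and is stated rather informally there). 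The paper's extremal argument does implicitly exhibit the minimizing configuration (two atoms of mass $\tfrac12$ each), but your termwise bound recovers the same equality condition for free: equality forces every atom in the support to have mass exactly $\tfrac12$. Your closing remarks are also on target: the base-$2$ convention is what makes the bound read "$\geq 1$", and in the application inside Lemma \ref{lem: log drift MAC FB} the hypothesis is indeed met because $\wstar$ is the modal message, so $\mu_1(\wstar)\leq h_b^{-1}(\epsilon)\leq\tfrac12$ caps all the other posterior masses as well.
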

\begin{proof}
The proof follows from an induction on $|\mathcal{W}|$. For $|\mathcal{W}|=2$ the condition in the statement implies that $W$ has uniform distribution and hence $H(W)=1$ trivially. Suppose the statement holds for $|\mathcal{W}|=n-1$. Then, for $|\mathcal{W}|=n$.  Sort elements of $\mathcal{W}$ in an descending order according to $P_W(w)$, from the most likely (denoted by $w_1$) to the least likely ($w_n$). If $P_W(w_n)=0$, then the statement holds trivially from the induction's hypothesis.  Suppose $P_W(w_n)>0$. In this case,  by the grouping axiom, we can reduce  $H(W)$ by increasing $P_W(w_1)$ and decreasing $P_W(w_n)$ so that $P_W(w_1)+P_W(w_n)$ remains constant. In that case, either $P_W(w_n)$ becomes zero or $P_W(w_1)$ reaches the limit $\frac{1}{2}$. The first case happens if $P_W(w_1)+P_W(w_n)\leq \frac{1}{2}$. For that, the statement $H(W)\geq 1$ follows from the induction's hypothesis, as there are only $(n-1)$ elements with non-zero probability.  It remains to consider the second case in which $P_W(w_1)=\frac{1}{2}$ and $P_W(w_n)>0$. Again, we can further reduce the entropy by increasing $P_W(w_2)$ and decreasing $P_W(w_n)$ while $P_W(w_2)+P_W(w_n)$ remains constant. Observe that $P_W(w_2)+P_W(w_n)\leq \frac{1}{2}$ as $\sum_{i=2}^n P_W(w_i) = 1-P_W(w_1)=\frac{1}{2}$ and $p_W(w_i)>0$.  Hence, after this redistribution process $P_W(w_n)$ becomes zero. Then, the statement $H(W)\geq 1$ follows from the induction's hypothesis, as there are only $(n-1)$ elements with non-zero probability.
\end{proof}


Next, we analyze the log-drift. Fix $\delta\in (0,1/2)$, and let $\mathcal{Y}_\delta :=\{y_r\in \mathcal{Y}: P(y_r|y^{r-1})>\delta\}.$ Then,

\begin{align*}
\EE[\log \Htld^1_{r-1}-\log \Htld^1_{r}|\mathcal{F}_{r-1}] &\leq \sum_{y_r\in \mathcal{Y}_\delta} P(y_r|y^{r-1}) \log \frac{\Htld^1_{r-1}}{\Htld^1_{r}(y_r)}+\delta  \sum_{y_r\notin \mathcal{Y}_\delta} \Big|\log \frac{\Htld^1_{r-1}}{\Htld^1_{r}(y_r)}\Big| \\
&\leq \sum_{y_r\in \mathcal{Y}_\delta} P(y_r|y^{r-1})\log \frac{\Htld^1_{r-1}}{\Htld^1_{r}(y_r)}+\delta  \eta |\mathcal{Y}|,
\end{align*}
where $\eta$ is given in Lemma \ref{rem: H_t+1 - H_tis bounded}.
We proceed  by applying Lemma 7 in \cite{Burnashev} stated below.
For any non-negative sequence of numbers $p_\ell,\mu_i$ and $\beta_{i,l}, \ell \in [1:L], i\in [1:N]$ the following inequality holds
\begin{align*}
\sum_{\ell=1}^L p_\ell \log\Big( \frac{\sum_{i=1}^N \mu_i}{\sum_{i=1}^N \beta_{i,\ell}}  \Big) \leq \max_{i} \sum_{\ell=1}^L p_{\ell} \log \frac{\mu_i}{\beta_{i,\ell}}.
\end{align*}

 By the definitions of $\Htld^1_{r-1}$ and  $\Htld^1_{r}$, we have that 
\begin{align*}
\EE[\log \Htld^1_{r-1}-\log \Htld^1_{r}|\mathcal{F}_{r-1}] &\leq \sum_{y_r\in \mathcal{Y}_\delta} P(y_r|y^{r-1}) \log\Big(\frac{-\sum_{w} \mu_1(w)\log \mu_1(w) }{-\sum_{w}\mu_1(w,y_r)\log \mu_1(w,y_r)}\Big) +\delta \eta |\mathcal{Y}|\\\numberthis\label{eq:log drift gamma}
& \leq  \max_{w} \Gamma(w) +\delta \eta |\mathcal{Y}|,
\end{align*}
where 
\begin{align*}
\Gamma(w) = \sum_{y_r\in \mathcal{Y}_\delta} P(y_r|y^{r-1}) \log \frac{-\mu_1(w)\log \mu_1(w)}{-\mu_1(w,y_r)\log \mu_1(w,y_r)}.
\end{align*}

Note that 
\begin{align}\label{eq:mu wstar and y}
\mu_{1}(w, y_{r})&=\frac{\mu_1(w) Q_w(y_r)}{P(y_r|y^{r-1})}.
\end{align}
Therefore, for a fixed $w\neq \wstar$ we have that
\begin{align}\nonumber
-\Gamma(w) &= \sum_{y_r\in \mathcal{Y}_\delta} P(y_r|y^{r-1}) \log\bigg[ \frac{Q_w(y_r)}{ P(y_r|y^{r-1})}\Big(1+\frac{\log\frac{P(y_r|y^{r-1})}{Q_w(y_r)}}{-\log \mu_1(w)}\Big)\bigg]\\\label{eq:gamma w 1}
&=\sum_{y_r\in \mathcal{Y}_\delta} P(y_r|y^{r-1}) \bigg( \log \frac{Q_w(y_r)}{ P(y_r|y^{r-1})}+ \log\Big(1+\frac{\log\frac{P(y_r|y^{r-1})}{Q_w(y_r)}}{-\log \mu_1(w)}\Big)\bigg)
\end{align}
We bound the first term above:
\begin{align*}
&\sum_{y_r\in\mathcal{Y}} P(y_r|y^{r-1}) \log \frac{Q_w(y_r)}{ P(y_r|y^{r-1})}-\sum_{y_r\notin \mathcal{Y}_\delta} P(y_r|y^{r-1}) \log \frac{Q_w(y_r)}{ P(y_r|y^{r-1})}\\
&= D_{KL}\Big( P(\cdot|y^{r-1})~\|~ Q_w \Big) +\sum_{y_r\notin \mathcal{Y}_\delta} P(y_r|y^{r-1}) \log \frac{ P(y_r|y^{r-1})}{Q_w(y_r)}\\
&\geq D_{KL}\Big( P(\cdot|y^{r-1})~\|~ Q_w \Big) +\sum_{y_r\notin \mathcal{Y}_\delta} P(y_r|y^{r-1}) \log  P(y_r|y^{r-1})\\
&\geq D_{KL}\Big( P(\cdot|y^{r-1})~\|~ Q_w \Big) - |\mathcal{Y}| h_b(\delta),
\end{align*}
where for the last inequality we used the facts that  $p\log p \geq -h_b(p),$ and $h_b(p)\leq h_b(\delta)$ for $p\leq\delta$, and $|\mathcal{Y}_\delta|\leq |\mathcal{Y}|$.

Next, we bound the second term in \eqref{eq:gamma w 1}. The term in the logarithm is bounded as
\begin{align*}
\frac{\log\frac{P(y_r|y^{r-1})}{Q_w(y_r)}}{-\log \mu_1(w)} \geq \frac{\log P(y_r|y^{r-1}) }{-\log \mu_1(w)} > \frac{\log \delta }{-\log \mu_1(w)},
\end{align*}
where we used the fact  that $P(y_r|y^{r-1})>\delta$ as $y_r\in \mathcal{Y}_\delta$.  Therefore,  the second term in \eqref{eq:gamma w 1} is bounded as
\begin{align*}
\sum_{y_r\in \mathcal{Y}_\delta} P(y_r|y^{r-1}) \log (1+ \frac{\log \delta }{-\log \mu_1(w)})\geq \delta |\mathcal{Y}_\delta| \log (1+ \frac{\log \delta }{-\log \mu_1(w)}).
\end{align*}
We select $\delta = \sqrt{h_b^{-1}(\epsilon)}$. Since,  $\mu_1(w)\leq h_b^{-1}(\epsilon)$ for all $w\neq \wstar$, then the RHS is bounded from below as   $-|\mathcal{Y}_\delta|\sqrt{h_b^{-1}(\epsilon)}\geq -|\mathcal{Y}|\sqrt{h_b^{-1}(\epsilon)}$.

Combining the bounds for each term in \eqref{eq:gamma w 1} gives, for all $w\neq \wstar$,
\begin{align}\label{eq:gamma w}
\Gamma(w) \leq D_{KL}\Big( P(\cdot|y^{r-1})~\|~ Q_w \Big) +|\mathcal{Y}|\Big( h_b\big(\sqrt{h_b^{-1}(\epsilon)}\big) +  \sqrt{h_b^{-1}(\epsilon)}\Big). 
\end{align}
Clearly the residual terms converge to zero as $\epsilon\rightarrow 0$.

Next, we consider the case $w=\wstar$. We use the Taylor's theorem for the function $f(x) = -x\log x$ around $x=1$. Hence, $f(x) = {(\log e)}\big(  (1-x) - \zeta (1-x)^2\big)$ for some $\zeta$ between $x$ and $1$. Hence, with $x= \mu_1(\Wstar)$, we have that 
\begin{align*}
-\mu_1(\Wstar) \log \mu_1(\Wstar) 
\leq (\log e) (1-\mu_1(\wstar)). 
\end{align*}
Next, from the inequality $\log x \leq (\log e)(x-1), \forall x>0$, we have that 
\begin{align*}
-\mu_1(\Wstar,y_r) \log \mu_1(\Wstar,y_r) \geq (\log e)\mu_1(\Wstar,y_r) (1- \mu_1(\Wstar,y_r)).
\end{align*}
As a result of these inequalities, we have that 
\begin{align*}
\Gamma(\wstar) &= \sum_{y_r} P(y_r|y^{r-1}) \log \frac{-\mu_1(\wstar)\log \mu_1(\wstar)}{-\mu_1(\wstar,y_r)\log \mu_1(\wstar,y_r)}\\
&\leq \sum_{y_r} P(y_r|y^{r-1}) \log \frac{(1-\mu_1(\wstar))}{\mu_1(\Wstar,y_r) (1- \mu_1(\Wstar,y_r))}\\\numberthis \label{eq:wstar eq1}
&=  \sum_{y_r} P(y_r|y^{r-1}) \log \frac{1-\mu_1(\wstar)}{1- \mu_1(\Wstar,y_r)}  - \sum_{y_r} P(y_r|y^{r-1}) \log \mu_1(\Wstar,y_r). 
\end{align*}

We proceed with simplifying the first summation above. Using \eqref{eq:mu wstar and y}, we have that 
\begin{align}\label{eq:1-mu1 approximate}
(1-\mu_{1}(\Wstar, y_{r}))&=(1-\mu_1(\Wstar)) \frac{\sum_{w\notin \Wstar} \frac{\mu_1(w)}{1-\mu_1(\Wstar)}   Q^1_{w}(y_{r})}{P(y_r|y^{r-1})}.
\end{align}
Thus, the first summation on the right-hand side of \eqref{eq:wstar eq1} is simplified as 
\begin{align*}
\sum_{y_r} P(y_r|y^{r-1}) \log \frac{P(y_r|y^{r-1})}{\sum_{w\notin \Wstar} \frac{\mu_1(w)}{1-\mu_1(\Wstar)}   Q^1_{w}(y_{r})} = D_{KL}\Big( P(\cdot|y^{r-1})~\|~ Q^1_{\sim \wstar} \Big),
\end{align*}
where $Q^1_{\sim \wstar}(y_r) \deq \sum_{w\notin \Wstar} \frac{\mu_1(w)}{1-\mu_1(\Wstar)}   Q^1_{w}(y_{r})$ for all $y_r\in \mathcal{Y}$.
Next, we bound the second summation in \eqref{eq:wstar eq1}. Using \eqref{eq:mu wstar and y}, we have that 

\begin{align*}
- \sum_{y_r} P(y_r|y^{r-1}) \log \mu_1(\Wstar,y_r)  &= -\log \mu_1(\wstar) - \sum_{y_r} P(y_r|y^{r-1}) \log \frac{Q^1_{\Wstar}(y_{r})}{P(y_r|y^{r-1}) }\\
 &= -\log \mu_1(\wstar) + D_{KL}\Big( P(\cdot|y^{r-1})~\|~ Q^1_{\wstar} \Big)\\\numberthis\label{eq:wstar eq2}
 &\leq 2 (\log e) \eta_1(\epsilon)+ D_{KL}\Big( P(\cdot|y^{r-1})~\|~ Q^1_{\wstar} \Big),
\end{align*}
where the last inequality holds from the fact that $\log x \geq (\log e) (1- \frac{1}{x})$ and that $\mu_1(\wstar)\geq 1-\eta_1(\epsilon)$, implying $$-\log \mu_1(\wstar)\leq -\log (1-\eta_1(\epsilon)) \leq (\log e) \frac{\eta_1(\epsilon)}{1-\eta_1(\epsilon)}\leq 2(\log e)\eta_1(\epsilon),$$ which holds as $\eta_1(\epsilon)\leq \frac{1}{2}$. Note that $P(\cdot|y^{r-1}) = \mu_1(\wstar)Q^1_{\wstar} + (1-\mu_1(\wstar))Q^1_{\sim \wstar}$. Therefore, from the convexity of the relative entropy, the right-hand side of \eqref{eq:wstar eq2} is bounded above by 
\begin{align*}
&~ 2(\log e)\eta_1(\epsilon) + \mu_1(\wstar) D_{KL}\Big( Q^1_{\wstar} ~\|~ Q^1_{\wstar} \Big)+(1-\mu_1(\wstar)) D_{KL}\Big( Q^1_{\sim \wstar} ~\|~ Q^1_{\wstar} \Big)\\
&\leq 2(\log e)\eta_1(\epsilon)+ \eta_1(\epsilon) D_{KL}\Big( Q^1_{\sim \wstar} ~\|~ Q^1_{\wstar} \Big)\\
& \leq (2\log e+d^{\max}) \eta_1(\epsilon),
\end{align*}
where $d^{\max}$ is the maximum of the above relative entropy for all $r\geq 1$. As a result of the above argument, we have that 
\begin{align*}
\Gamma(\wstar) &\leq D_{KL}\Big( P(\cdot|y^{r-1})~\|~ Q^1_{\sim \wstar} \Big) + (2\log e +d^{\max}) \eta_1(\epsilon)\\
&\stackrel{(a)}{\leq} \sum_{w\notin \Wstar} \frac{\mu_1(w)}{1-\mu_1(\Wstar)}  D_{KL}\Big( P(\cdot|y^{r-1})~\|~ Q^1_{w} \Big) + (2\log e +d^{\max}) \eta_1(\epsilon)\\\numberthis\label{eq:gamma wstar}
&\stackrel{(b)}{\leq} \max_{w\neq \wstar}D_{KL}\Big( P(\cdot|y^{r-1})~\|~ Q^1_{w} \Big) + (2\log e +d^{\max}) \eta_1(\epsilon),
\end{align*}
where (a) is due to the convexity of the relative entropy on the second argument and the definition of $Q_{\sim \wstar}$. Inequality (b) follows as $\frac{\mu_1(w)}{1-\mu_1(\Wstar)}$ form a probability distribution on $w\neq \wstar$.

Combining \eqref{eq:gamma w} and \eqref{eq:gamma wstar} and \eqref{eq:log drift gamma} gives the following bound 
\begin{align}\label{eq:log drift KL 1}
\EE[\log \Htld^1_{r-1}-\log \Htld^1_{r}|\mathcal{F}_{r-1}] &\leq   \max_{w'_1} D_{KL}\Big( P(\cdot|y^{r-1})~\|~ Q^1_{w'_1} \Big) + \kappa(\epsilon),
\end{align}
where 
\begin{align}\label{eq:kappa}
   \kappa(\epsilon) \deq (2+d^{\max}) h_b^{-1}(\epsilon) + |\mathcal{Y}|\Big( h_b\big(\sqrt{h_b^{-1}(\epsilon)}\big) +  (1+\eta) \sqrt{h_b^{-1}(\epsilon)}\Big).
\end{align}
Observe that $\lim_{\epsilon \rightarrow 0}\kappa(\epsilon)=0$. Since $h_b(p)\leq 2\sqrt{p(1-p)}\leq 2\sqrt{p}$, then 
\begin{align*}
   \kappa(\epsilon) &\leq  (2\log e+d^{\max}) h_b^{-1}(\epsilon) + |\mathcal{Y}|(3+\eta)\sqrt{h_b^{-1}(\epsilon)}\\
   &\leq \Big(2\log e+d^{\max} + |\mathcal{Y}|(3+\eta)\Big) \sqrt{h_b^{-1}(\epsilon)}=O(\sqrt{h_b^{-1}(\epsilon)}).
\end{align*}
Note that the right-hand side of \eqref{eq:log drift KL 1} depends on the coding scheme. In what follows, we remove this dependency. 

In what follows, we bound each relative entropy from above. Note that for each $w'_1$
\begin{align*}
    Q^1_{w'_1}(y_r) &= \prob{Y_r = y_r \big|~Y^{r-1}= y^{r-1}, W_1= w'_1 }\\
    &=\sum_{x'_1\in \mathcal{X}_1} \sum_{x'_2\in\mathcal{X}_2} \prob{X_{1,r} = x'_1, X_{2,r} = x'_2, Y_r = y_r \big|~y^{r-1}, w'_1 }\\
      &= \sum_{x'_2\in\mathcal{X}_2} \prob{X_{2,r} = x'_2\big|~Y^{r-1}= y^{r-1}, w_1}Q\Big(y_r\big|x'_1, x'_2\Big).
\end{align*}
Hence, the convexity of relative entropy implies that 
\begin{align*}
    D_{KL}\Big( P(\cdot|y^{r-1})~\|~ Q^1_{w'_1} \Big) &\leq \sum_{x'_2\in\mathcal{X}_2} \prob{X_{2,r} = x'_2\big|~Y^{r-1}= y^{r-1}, w_1}\\
    &\qquad \times D_{KL}\Big( P(\cdot|y^{r-1})~\|~ Q(\cdot|x'_1, x'_2) \Big)\\\numberthis \label{eq:DKL bound 1}
    &\leq \max_{x'_1, x'_2} D_{KL}\Big( P(\cdot|y^{r-1})~\|~ Q(\cdot|x'_1, x'_2) \Big).
\end{align*}
Let $\Qbar{1}_r = P_{Y_r|X_{1,r}, Y^{r-1}}$, which is the effective channel from the first user's perspective at time $r$.  Note that 
\begin{align*}
    P(y_r|y^{r-1}) &= \sum_{x_1\in \mathcal{X}_1} \prob{X_{1,r} = x_1, Y_r = y_r \big|~y^{r-1}}\\
    &=\sum_{x_1\in \mathcal{X}_1} \prob{X_{1,r} = x_1\big|~y^{r-1}} \Qbar{1}_r(y_r| x_1, y^{r-1}).
\end{align*}
Denote $\nu_1(x_1)\deq \prob{X_{1,r} = x_1\big|~y^{r-1}}$. Hence the convexity of relative entropy gives
\begin{align*}
     D_{KL}\Big( P(\cdot|y^{r-1})~\|~ Q(\cdot|x'_1, x'_2) \Big) &\leq \sum_{x_1\in \mathcal{X}_1} \nu_1(x_1)  D_{KL}\Big( \Qbar{1}_r(\cdot | x_1, y^{r-1})~\|~ Q(\cdot|x'_1, x'_2) \Big).   
\end{align*}
Let $\Xstar  = e_1(\Wstar, y^{r-1}).$ 
Note that $\nu_1(\xstar)\geq \mu_1(\wstar)\geq 1-\eta_1(\epsilon)$. Hence, the above term is bounded from above as 
\begin{align} \label{eq:DKL bound 2}
     D_{KL}\Big( P(\cdot|y^{r-1})~\|~ Q(\cdot|x'_1, x'_2) \Big) &\leq D_{KL}\Big( \Qbar{1}_r(\cdot | \xstar, y^{r-1})~\|~ Q(\cdot|x'_1, x'_2) \Big) +\eta_1(\epsilon) d^{\max}.  
\end{align}
Lastly, combining \eqref{eq:log drift KL 1}, \eqref{eq:DKL bound 1}, and \eqref{eq:DKL bound 2} gives the following bound on the log-drift
\begin{align*}
    \EE[\log \Htld^1_{r-1}&-\log \Htld^1_{r}|\mathcal{F}_{r-1}] \leq   \max_{x'_1, x'_2} D_{KL}\Big( \Qbar{1}_r(\cdot | \xstar, y^{r-1})~\|~ Q(\cdot|x'_1, x'_2) \Big) +\eta_1(\epsilon) d^{\max} + \kappa(\epsilon)\\
   & \leq   \max_{x'_1}\sup_{P_{X'_2|Y^{r-1}}} D_{KL}\Big( \Qbar{1}_r(\cdot | \xstar, y^{r-1})~\|~ \Qbarp{1}_r(\cdot | x'_1, y^{r-1}) \Big) +\eta_1(\epsilon) d^{\max} + \kappa(\epsilon),
\end{align*}
where $\Qbarp{1}_r$ is the same as $\Qbar{1}_r$ but with $P_{X'_2|Y^{r-1}}$ at user 2.

\end{IEEEproof}

\section{Proof of Lemma \ref{lem:Err Exp conenction}}\label{proof:lem:Err Exp conenction}
\begin{IEEEproof}
Since $\{L^i_n\}_{n >0}$ is a sub-martingale w.r.t $\mathcal{F}_{t^i_n}, n>0$ then, $L^i_0\leq \EE[L^i_{T\vee \Tuepstldi}]$, where $T$ is the stopping time used in the VLC and $\Tuepstldi$ is as in \eqref{eq:Tueps} but with $H_t = \Htld^i_t$. 
Note that $L^i_0 = \frac{\log M_i}{I}$. Next, we analyze $\EE[L^i_{T\vee \Tuepstldi}]$. By definition $L^i_n = Z^i_{t^i_n}+S^i_{t^i_n}$. Since, $T\leq N$, then from \eqref{eq:tn} we have that $t^i_{T\vee \Tuepstldi} = (T\vee \Tuepstldi) \vee \Tuepstldi=T\vee\Tuepstldi$. Therefore,  we have that
\begin{align*}
\EE[L^i_{T\vee \Tuepstldi}] & = \EE[Z^i_{T\vee \Tuepstldi}+S^i_{T\vee \Tuepstldi}]\\
&\stackrel{(a)}{\leq} \EE\left[\frac{\Hbar^i_{T\vee \Tuepstldi}+\epsilon}{I^i}\right]+\EE\left[\frac{\log \Htld^i_{T\vee \Tuepstldi} - \log \epsilon}{D^i}+f(\log \frac{\Htld^i_{T\vee \Tuepstldi}}{\epsilon})\right] + \EE\big[S^i_{T\vee \Tuepstldi}\big]\\
&\stackrel{(b)}{\leq} \EE\left[\frac{\Hbar^i_{T\vee \Tuepstldi}+\epsilon}{I^i}\right]+\EE\left[\frac{\log \Htld^i_{T\vee \Tuepstldi} - \log \epsilon}{D^i}\right]+\frac{1}{\lambda D^i} + \EE\big[S^i_{T\vee \Tuepstldi}\big]\\
& \stackrel{(c)}{\leq}\frac{\EE\big[\Hbar^i_{T\vee \Tuepstldi}\big]+\epsilon}{I^i} +\frac{\log \EE\big[\Htld^i_{T\vee \Tuepstldi}\big] - \log \epsilon}{D^i}+\frac{1}{\lambda D^i} + \EE\big[S^i_{T\vee \Tuepstldi}\big]\\\numberthis \label{eq:error exp up 1 }
& \stackrel{(d)}{\leq}\frac{\EE\big[\Hbar^i_{T}\big]+\epsilon}{I^i} +\frac{\log \EE\big[\Htld^i_{T}\big] - \log \epsilon}{D^i}+\frac{1}{\lambda D^i} + \EE\big[S^i_{T\vee \Tuepstldi}\big],
\end{align*}
where $(a)$ follows by changing $-\epsilon$ to $+\epsilon$ for the linear part and from the following inequality for the logarithmic part
\begin{align*}
(\log x - \epsilon) \11\{x < \epsilon\} \leq (\log x - \epsilon) \11\{x < \epsilon\} + (\log x - \epsilon) \11\{x \geq  \epsilon\} = \log x - \epsilon,
\end{align*}
inequality (b) holds as $f_i(y)\leq \frac{1}{\lambda D^i}$, inequality (c) follows from Jensen's inequality and concavity of $\log(x)$. Lastly, inequality (d) holds because of the following argument
\begin{align*}
\Hbar^i_{T\vee \Tuepstldi} = H(W_i | Y^{T\vee \Tuepstldi}) \leq  H(W_i | Y^{T}),
\end{align*}
where the last inequality holds because conditioning reduces the entropy and that $Y^T$ is a function of $Y^{T\vee \Tuepstldi}$.  Next, we bound  $\EE\big[\Hbar^i_{T}\big]$ and $\EE\big[\Htld^i_{T}\big]$. For that we use Fano's inequality to get the following inequality 
\begin{align}\label{eq:Fanos}
\EE\big[\Hbar^i_{T}\big] \leq \EE\big[\Htld^i_{T}\big]  = \EE[H(W_i | Y^T)] \leq \EE[H(W_1, W_2 | Y^T)] \leq h_b(P_e) + P_e \log M_1 M_2,
\end{align}
where $P_e$ is the probability of error. Define $\alpha(P_e)=h_b(P_e) + P_e \log(M_1M_2)$.  Therefore, from \eqref{eq:error exp up 1 } and \eqref{eq:Fanos}, we obtain that
\begin{align*}
\frac{\log M_i}{I} &\leq \frac{\alpha(P_e)+\epsilon}{I^i} +\frac{\log \alpha(P_e) - \log \epsilon}{D^i}+\frac{1}{\lambda D^i} + \EE\big[S^i_{T\vee \Tuepstldi}\big].
\end{align*}
Rearranging the terms gives the following inequality
\begin{align*}
\frac{-\log \alpha(P_e)}{D^i} \leq \frac{\alpha(P_e)+\epsilon}{I^i} +\frac{- \log \epsilon}{D^i}+\frac{1}{\lambda D^i} + \EE\big[S^i_{T\vee \Tuepstldi}\big] - \frac{\log M_i}{I}.
\end{align*}
Therefore, multiplying by $D^i$ and dividing by $\EE[T]$ gives the following
\begin{align*}
\frac{-\log \alpha(P_e)}{\EE[T]} \leq D^i \Big( \frac{ \EE\big[S^i_{T\vee \Tuepstldi}\big]}{\EE[T]} - \frac{R_i}{I^i}\Big)+U_i(P_e, M_i, \epsilon)
\end{align*}
where we used the fact that $\frac{\log M_i}{\EE[T]} \geq R_i$, and 
\begin{align}\label{eq:Ui}
U_i(P_e, M_i, \epsilon) =  R_i\Big( D^i \frac{\alpha(P_e)+\epsilon}{I^i \log M_i} +\frac{- \log \epsilon}{\log M_i}+\frac{1}{\lambda  \log M_i}\Big).
\end{align}
For the left hand side, we can write that 
\begin{align*}
-\log \alpha(P_e) &= 
-\log P_e -\log \Big(-\log P_e -\frac{(1-P_e)}{P_e} \log (1-P_e) + \log M_1M_2   \Big)\\
& \geq -\log P_e -\log \Big(-\log P_e -\frac{1}{P_e} \log (1-P_e) + \log M_1M_2 \Big)\\
& \geq -\log P_e -\log \Big(-\log P_e +2 + \log M_1M_2 \Big),
\end{align*}
where the last inequality follows because  $\log(x)\geq 1-\frac{1}{x}$ for $x>0$ implying that  $\log (1-P_e) \geq 1-\frac{1}{1-P_e} = \frac{-P_e}{1-P_e}$; and hence, $-\frac{1}{P_e} \log (1-P_e) \leq \frac{1}{1-P_e}\leq 2$ as $P_e \leq \frac{1}{2}$. Therefore, by factoring $-\log P_e$ we have that 
\begin{align*}
-\log \alpha(P_e) \geq (-\log P_e) ( 1 -\Delta ),
\end{align*}
where 
\begin{align}\label{eq:Delta}
\Delta = \frac{\log \big(-\log P_e +2+ \log M_1M_2 \big)}{-\log P_e}.
\end{align}
Using the above inequality, we get the following bound on the error exponent
\begin{align*}
\frac{-\log P_e}{\EE[T]} \leq \frac{1}{1-\Delta} D^i \Big( \frac{ \EE\big[S^i_{T\vee \Tuepstldi}\big]}{\EE[T]} - \frac{R_i}{I^i}\Big) + \frac{1}{1-\Delta} U_i(P_e, M_i, \epsilon), \qquad i=1,2,3.
\end{align*}
\end{IEEEproof}
\section{Proof of Lemma \ref{lem:super martingale}}\label{app:super martingale}
\begin{proof}
Define $S\deq \inf\set{t>0: t\geq \tau, M_t >c}$. Note that $S$ is a stopping time. Since $\{M_t\}_{t>0}$ is non-negative, then for any fixed $n\in \NN$, we have that 
\begin{align*}
M_{n \wedge S} \geq c \11\Big\{\sup_{\tau \leq t \leq n} M_t >c\Big\}. 
\end{align*}
Therefore, taking the expectation of both sides and rearranging the terms gives the following inequality 
\begin{align}\label{eq:lem super martingale 1}
\PP\Big\{\sup_{\tau \leq t \leq n} M_t >c\Big\} \leq \frac{\EE[M_{n \wedge S}]}{c}.
\end{align}
Since $\{M_t\}_{t>0}$ is a super-martingale and that $\tau \leq S$, then $\EE[M_{n \wedge \tau}] \geq \EE[M_{n \wedge S}].$ Therefore, we can write 
\begin{align*}
\PP\Big\{\sup_{\tau \leq t \leq n} M_t >c\Big\} \leq \frac{\EE[M_{\tau}]}{c}.
 \end{align*} 
This is because if $n<\tau$ then the left-hand side is zero and the inequality holds trivially. When $n \geq \tau$, using the above argument, the right-hand side of \eqref{eq:lem super martingale 1} is less than $\frac{\EE[M_{n \wedge \tau}]}{c} = \frac{\EE[M_{\tau}]}{c}$.

Next, taking the limit $n\rightarrow \infty$ and from monotone convergence theorem we get that 
\begin{align*}
\PP\Big\{\sup_{\tau \leq t } M_t >c\Big\} &= \PP\Big\{\medcup_{n>0} \big\{\sup_{\tau \leq t \leq n} M_t >c\big\} \Big\}\\
&= \lim_{n\rightarrow \infty} \PP\Big\{\sup_{\tau \leq t \leq n} M_t >c\Big\}\leq \frac{\EE[M_{\tau}]}{c},
\end{align*}
where the second equality follows from the continuity of the probability measure. 
With that the proof is complete. 
\end{proof}
\section{Proof of Theorem \ref{thm: Error Exp lowerbound three phase}}\label{proof: Error Exp lowerbound three phase}
\begin{proof}
We first consider the variant of the coding scheme in which the first user finishes the data transmission sooner than the second user. 
At each block a re-transmission occurs  with probability $q$, an error occurs with probability $P_{eb}$ and a correct decoding process happens with probability $1-q-P_{eb}$. Let $\hat{\mathcal{H}}$ denote the decoders declared hypothesis. Note that $\mathcal{H}_1$ and $\mathcal{H}_0$ are the hypothesis that $\mathcal{H}_0: (\hat{W}_1, \hat{W}_2)=(W_1,W_2)$ and  $\mathcal{H}_1: (\hat{W}_1, \hat{W}_2)\neq (W_1,W_2)$, respectively.  The probability of a re-transmission at each block is 
\begin{align*}
q=P(\hat{\mathcal{H}} = \mathcal{H}_1).
\end{align*}
The probability of error at each block is
\begin{align*}
P_{eb}=P(\mathcal{H}_1)P(\hat{\mathcal{H}} = \mathcal{H}_0|\mathcal{H}_1),
\end{align*}
meaning that the decoder wrongfully declares $\mathcal{H}_0$ --- the no error hypothesis.
 Therefore, with this setting the total probability of error for the transmission of a message is 
\begin{align}\label{eq: P_e in terms of P_eb 2}
P_e=\sum_{k=0}^{\infty} q^k P_{eb}=\frac{P_{eb}}{1-q}.
\end{align}
 
The number of blocks required to complete the transmission of one message is a geometric random variable with probability of success $1-q$. Thus, the expected number of blocks for transmission of a message is $\frac{1}{1-q}$.


In what follows, we analyze $q$ and $P_{eb}$ for the proposed three-phase scheme. For shorthand,  denote $\Theta_{12}=(\Theta_1, \Theta_2), \hat{\Theta}_{12}=(\hat{\Theta}_1, \hat{\Theta}_2)$. Then
\begin{align}\label{eq:P_eb}
P_{eb}&=P\left(\hat{\Theta}_{12}=00, \Theta_{12}\neq 00\right)=\sum_{a\in \{01,10,11\}}P(\Theta_{12}=a)P(\hat{\Theta}_{12}=00|\Theta_{12}=a).
\end{align}
Note that, from a standard argument in channel coding, if the effective transmission rates are inside the capacity region, then $P(\Theta_{12}\neq 00)$ can be made sufficiently small. Observe that the effective rates of the three-phase scheme is $(\frac{R_1}{\gamma_1}, \frac{R_2}{\gamma_1+\gamma_2})$. We need to choose $\gamma_1, \gamma_2$ such that the rates are inside the feedback-capacity region of the channel. 
Let the rate-splitting of the second user be $R_2=R_{2,1}+R_{2,2}$, where $R_{21}$ and $R_{2,2}$ are the transmission rate during the first and second phases, respectively. During the first stage, we face a MAC channel with feedback. Hence, from Theorem \ref{thm:VLC Capacity} 
the effective rates during this phase must satisfy the following inequalities for some $L\in \NN$, and $P^L\in \PMAC^L$
\begin{subequations}\label{eq:rates in phase 1}
\begin{align}
\frac{R_1}{\gamma_1} &\leq \frac{1}{L} I(X_1^L \rightarrow Y^L \|~ X^L_2)+\varepsilon\\
\frac{R_{2,1}}{\gamma_1} &\leq \frac{1}{L} I(X^L_2 \rightarrow Y^L \|~ X^L_1)+\varepsilon\\
\frac{R_1+R_{2,1}}{\gamma_1} &\leq \frac{1}{L}I(X^L_1, X^L_2 \rightarrow Y^L)+\varepsilon.
\end{align}
\end{subequations}
For shorthand, let $\ILP{1}, \ILP{2}$ and $\ILP{3}$ denote, respectively, the normalized directed mutual information terms above, where $\Pdata$ denotes the multi-letter probability distribution for the first phase.

During the second phase, only the second user transmits the remaining of its messages. Hence, it faces a \ac{ptp} channel during the second phase. Therefore, from Lemma \ref{lem:rate err exp} and Remark \ref{rem:rate err exp}, the transmission rate of the second user during this stage must satisfy
\begin{align*}
\frac{R_{2,2}}{\gamma_2}\leq C_2 := I(X_2;Y|\xDHT{1}(0)),
\end{align*}
where $\xDHT{1}(0)$ and $\xDHT{1}(1)$ are the symbols that the first user sends for confirmation during the second phase. These symbols together with $P_{X_2}$ are chosen to maximize the rate-exponent region during the second phase. 
We set $R_{2,2} = \gamma_2 C_2$. Hence, $R_{2,1} = R_2 - \gamma_2C_2$. Replacing this quantity in \eqref{eq:rates in phase 1} implies that   
\begin{align*}
\frac{R_1}{\gamma_1} &\leq \ILP{1}+\varepsilon\\
\frac{R_{2}}{\gamma_1} &\leq \ILP{2}+\frac{\gamma_2}{\gamma_1}C_2+\varepsilon\\
\frac{R_1+R_{2}}{\gamma_1} &\leq \ILP{3}+\frac{\gamma_2}{\gamma_1}C_2+\varepsilon.
\end{align*}
Therefore, by rearranging the terms, the following condition must hold:
\begin{align}\label{eq:gamma1 and 2}
\gamma_1 \geq  \max\set{\frac{R_1}{\ILP{1}},\frac{R_2-\gamma_2 C_2}{\ILP{2}}, \frac{R_1+R_2-\gamma_2C_2}{\ILP{3}} }.
\end{align}
 As a result, with $\gamma_1$ and $\gamma_2$ satisfying the above inequality,  there exists a sequence $\zeta_n , n\geq 1$ with $\zeta_n \rightarrow 0$ such that after the error probability $\PP\set{\Theta_{12}=a}$ is  bounded by $\zeta_n$. Hence, we obtain the following upper bound
\begin{align} \label{eq: bound ofr P_e for confirmation stage}
P_{eb}&\leq  \sum_{a\in \{01,10,11\}} \zeta_n \prob{\hat{\Theta}_{12}=00|\Theta_{12}=a}.
\end{align}
In what follows, we bound each term inside \eqref{eq: bound ofr P_e for confirmation stage}. For  any $a_1, a_2 \in \set{0,1}$, let $\beta_{a_1,a_2}: = \prob{\hat{\Theta}_{12}=00|\Theta_{12}=(a_1, a_2)}$ which is the corresponding error term in \eqref{eq: bound ofr P_e for confirmation stage}. Then, we have that
\begin{align}\label{eq:betta a1a2}
\beta_{a_1a_2} &= \prob{\mathcal{A}|\Theta_{12}=a_1a_2}=\hspace{-15pt} \sum_{(y^{m_2},y^{m_3})\in \mathcal{A}} \bar{Q}(y^{m_2}| x_{12}^{m_2}(a_1)) Q^{m_3}(y^{m_3}| x^{m_3}_{13}(a_1), x^{m_3}_{23}(a_2)). 
\end{align}
\noindent\textbf{Decision Region:} Recall the definition of $\mathcal{A}$ in the main text. We proceed by simplifying the log-ratio terms in  $\mathcal{A}$. First, consider $a=01$. Then, the second log-ratio term in $\mathcal{A}$ 
is equivalent to the following
\begin{align*}
\frac{1}{m_3}&\log \frac{Q^{m_3}(y_{m_2+1}^{m}| \xHT{1}(0), \xHT{2}(0)) }{Q^{m_3}(y_{m_2+1}^{m} \xHT{1}(0), \xHT{2}(1))} = \frac{1}{m_3}\sum_{j=m_2+1}^m \log \frac{Q(y_j| x_{1,3,j}(0), x_{2,3, j}(0)) }{Q(y_j| x_{1,3, j}(0), x_{2,3,j}(1))}\\
& =  \sum_{u_1\in \mathcal{X}_1}\sum_{u_2,v_2\in \mathcal{X}_2} \mathsf{P}_m(u_1,u_2, *,v_2) \Big[ \sum_{s\in \mathcal{Y}} \hat{Q}_{\bfy}(s| u_1, u_2, v_2)   \log \frac{Q(s| u_1, u_2) }{Q(s| u_1, v_2)} \Big]\\
& =  \sum_{u_1\in \mathcal{X}_1}\sum_{u_2,v_2\in \mathcal{X}_2} \mathsf{P}_m(u_1,u_2, *,v_2) \Big[D\big( \hat{Q}_{\bfy}(\cdot | u_1, u_2, v_2)~\big\|~ Q(\cdot | u_1, v_2)\big)\\\numberthis \label{eq:decision reg1}
&\qquad  - D\big( \hat{Q}_{\bfy}(\cdot | u_1, u_2, v_2)~\big\|~ Q(\cdot | u_1, u_2)\big)\Big].
\end{align*}
Suppose ${\bfZ}= (Z_1(0), Z_2(0), Z_1(1), Z_2(1))$ are random variables with joint distribution $\mathsf{P}_n$. 
Then, the right-hand side of \eqref{eq:decision reg1} equals to 
\begin{align}\label{eq:decision KLD}
\frac{1}{m_3}\log \frac{Q^{m_3}(y_{m_2+1}^{m}| \xHT{1}(0), \xHT{2}(0)) }{Q^{m_3}(y_{m_2+1}^{m} \xHT{1}(0), \xHT{2}(1))} &= D\Big(\hat{Q}_{\bfy}(\cdot | {\bfZ})\big\| Q(\cdot | Z_1(0), Z_2(1)) \big| {\bfZ}\Big) \\
&\quad - D\Big(\hat{Q}_{\bfy}(\cdot | {\bfZ})\big\| Q(\cdot | Z_1(0), Z_2(0))\big| {\bfZ}\Big). \nonumber
\end{align}
We can derive similar expressions for other values of $a=10,11$.
Using a similar argument, the first log-ratio is simplified as
\begin{align*}
\log \frac{\bar{Q}^{m_2}(y^{m_2}|\xDHT{1}(0))}{\bar{Q}^{m_2}(y^{m_2}|\xDHT{1}(1))} &= m_2 \Big[ D(\hat{P}_{y^{m_2}}\| \bar{Q}(\xDHT{1}(1))) - D(\hat{P}_{y^{m_2}}\| \bar{Q}(\xDHT{1}(0)))\Big],
\end{align*}
where $\hat{P}_{y^{m_2}}$ is the empirical distribution of $y^{m_2}$ and $\bar{Q}$ is the averaged channel from the first user's perspective.
For any $a_1, a_2\in \set{0,1}$, define
\begin{align}\label{eq:Q lambda}
\mathcal{Q}_{\lambda}(a_1,a_2)&:= \Big\{\hat{Q}(y|\underline{\bfz}): D\Big(\hat{Q}(\cdot | \underline{\bfZ})~\big\|~ Q(\cdot | Z_1(a_1), Z_2(a_2)) \big| \underline{\bfZ}\Big)\\
&\qquad - D\Big(\hat{Q}(\cdot | \underline{\bfZ})~\big\|~ Q(\cdot | Z_1(0), Z_2(0))\big| \underline{\bfZ}\Big) \geq \frac{\lambda}{m_3}\Big\}. \nonumber
\end{align}

Also define
\begin{align*}
\mathcal{P}_{\lambda}(a_1):=\set{\hat{P}:  D(\hat{P}\| \bar{Q}(\xDHT{1}(a_1))) - D(\hat{P}\| \bar{Q}(\xDHT{1}(0))) \geq \frac{\lambda}{m}}.
\end{align*}
 Therefore, whether $y^m\in \mathcal{A}$ or not depends entirely on its corresponding empirical distributions ($\hat{P}_{y^{m_2}}, \hat{Q}_{\bfy}$). Such distributions must be such that the sum of the two log-ratio quantities in $\mathcal{A}$ are greater than $\lambda$. Then, given that $\frac{m_2}{m_2+m_3}=\frac{\gamma_2}{\gamma_2+\gamma_3}$, define:
\begin{align*}
\mathcal{D}_{\lambda}(a_1, a_2) :\set{(\hat{P}, \hat{Q}) : \hat{P}\in \mathcal{P}_{\alpha}(a_1), \hat{Q}\in \mathcal{Q}_{\beta}(a_1,a_2), \frac{\gamma_2}{\gamma_2+\gamma_3}\alpha + \frac{\gamma_3}{\gamma_2+\gamma_3} \beta \geq \lambda }.
\end{align*}

Then the decision region $\mathcal{A}$ is the set of all $\bfy$ such that $(\hat{P}_{\bfy},\hat{Q}_{\bfy})$ belongs to $\medcap_{\mathbf{a}\neq 00} \mathcal{D}_{\lambda}(\mathbf{a})$. For shorthand we drop the subscript $\lambda$ in $\mathcal{D}_{\lambda}$. Therefore, using a type-analysis argument, we can write
\begin{align*}
\beta_{a_1a_2} &=  
\hspace{-15pt}
\sum_{\substack{(\hat{P}, \hat{Q})\in \mathcal{D}(\bfa),\\ \bfa\neq 00}} \sum_{y^{m_2}\in T_{\delta}^{m_2}(\hat{P})}\hspace{-15pt}\bar{Q}(y^{m_2}| x_{12}^{m_2}(a_1)) \hspace{-10pt}\sum_{\underline{x}\in T[\mathsf{P}_{m_3}]}\hspace{-10pt}\mathsf{P}_{m_3}(\underline{x})\hspace{-20pt}\sum_{y^{m_3}\in T_{\delta}^{m_3}(\hat{Q}(\cdot|\underline{x}))}\hspace{-20pt}  Q^{m_3}(y^{m_3}| x^{m_3}_{13}(a_1), x^{m_3}_{23}(a_2))\\
&\leq \sup_{(\hat{P}, \hat{Q})\in \mathcal{D}(a_1, a_2)} \exp\Big\{-m_2D(\hat{P}\| \bar{Q}(x_{12}(a_1)))\\
&\hspace{90pt}-m_3\EE_{P_{m_3}} \big[D(\hat{Q}(\cdot | \underline{\bfX})~\|~ Q(\cdot | X_1(a_1), X_2(a_2)))\big]+\delta'\Big\}.
\end{align*}
Therefore, from Chernoff-Stein lemma, it is not difficult to see that
\begin{align*}
\beta_{a_1a_2} & \leq 2^{-n\gamma_2 D(\bar{Q}(x_{12}(0)) \| \bar{Q}(x_{12}(a_1)) }2^{-n\gamma_2 \bar{D}_{\mathsf{P}_{m_3}}(00||a_1, a_2)+\delta''},
\end{align*}
where we used the definition given in \eqref{eq:dpar ij}:
\begin{align*}
\bar{D}_{\mathsf{P}_{m_3}}(00||ij)=\EE_{\mathsf{P}_{m_3}}\Big[D_Q\big(X_1(0), X_2(0)|| X_1(i), X_2(j)\big)\Big], \quad \forall i,j \in \{0,1\} .
\end{align*}
Let $\DQbar{1}=D(\bar{Q}(x_{12}(0)) \| \bar{Q}(x_{12}(1))$. Then, the bound on $\beta_{a_1, a_2}$ simplifies to the following inequalities:
 \begin{align*}
 \beta_{01}&\leq 2^{-n\big(\gamma_3 \bar{D}_{\mathsf{P}_{m_3}}(00||01)+\delta''\big)},\\
 \beta_{10} &\leq 2^{-n \big(\gamma_2 \DQbar{1} + \gamma_3 \bar{D}_{\mathsf{P}_{m_3}}(00||10)+\delta''\big)},\\
 \beta_{11} &\leq 2^{-n \big(\gamma_2 \DQbar{1} + \gamma_3 \bar{D}_{\mathsf{P}_{m_3}}(00||11)+\delta''\big)}.
 \end{align*}

Therefore, combining such bounds and \eqref{eq: bound ofr P_e for confirmation stage}, we obtain the following 
\begin{equation}\label{eq: err exp lower bound gamma 3}
\frac{- \log P_e }{\EE[T]} \geq \min\set{\gamma_3 \DHT{01}, \gamma_2 \DQbar{1} \! +\! \gamma_3 \DHT{10}, \gamma_2\DQbar{1} \! + \! \gamma_3 \DHT{11}}.
\end{equation}
It remains to optimize the bound over the choices of all parameters: $\gamma_2, \gamma_3, \mathsf{P}_{m_3}, P_1$ and $\xDHT{1}(0), \xDHT{1}(1)$. Note that condition \eqref{eq:gamma1 and 2} must be satisfied. Given that $\gamma_1+\gamma_2+\gamma_3=1$, the condition \eqref{eq:gamma1 and 2} is equivalent to the following:
\begin{align*}
 \gamma_2+\gamma_3+ \max\set{\frac{R_1}{I_L^1(P^L)},\frac{R_2-\gamma_2 C_2}{I_L^2(P^L)}, \frac{R_1+R_2-\gamma_2C_2}{I_L^3(P^L)} }\leq 1.
\end{align*}
Clearly $\gamma_3^*$, the optimal $\gamma_3$, is a function of $\gamma_2$:
\begin{align*}
\gamma_3^* = 1-\gamma_2 - \max\set{\frac{R_1}{I_L^1(P^L)},\frac{R_2-\gamma_2 C_2}{I_L^2(P^L)}, \frac{R_1+R_2-\gamma_2C_2}{I_L^3(P^L)} }.
\end{align*}
Define
\begin{align*}
\Eo(r_1, r_2):=1- \min\set{\frac{r_1}{\ILP{1}},\frac{r_1}{\ILP{2}},\frac{r_1+r_2}{\ILP{3}}}.
\end{align*}
Then, we get the following lower bound:
\begin{align*}
\frac{- \log P_e }{\EE[T]} \geq \Elb^1(R_1,R_2),
\end{align*}
where
\begin{align*}
\Elb^1(R_1,R_2) =  \sup_{0\leq \gamma_2<1}&\sup_{\mathsf{P}_3} \sup_{\substack{P_1 \in \PMAC}} \min\Big\{
 \big(\Eo(R_1, R_2-\gamma_2C_2)-\gamma_2\big)\DHT{01},\\
&  \big(\Eo(R_1, R_2-\gamma_2C_2)-\gamma_2\big)\DHT{10}+\gamma_2\DQbar{1},\\
&  \big(\Eo(R_1, R_2-\gamma_2C_2)-\gamma_2\big)\DHT{10}+\gamma_2\DQbar{1}\Big\}
\end{align*}

Recall  that the proof was for the variant of the coding scheme where the first user finishes first. Next, we analyze the other variant with the second user finishing data transmission sooner. By symmetry, we obtain the following lower bound
\begin{align*}
\Elb^2(R_1,R_2) = \!\! \sup_{0\leq \gamma_2<1}&\!\!\sup_{\mathsf{P}_3} \!\!\sup_{\substack{P_1 \in \PMAC}} \! \! \!\!\! \min\Big\{
 \big(\Eo(R_1-\gamma_2C_1, R_2)-\gamma_2\big)\DHT{01}+\gamma_2\DQbar{2},\\
& \big(\Eo(R_1-\gamma_2C_1, R_2)-\gamma_2\big)\DHT{10},\\
& \big(\Eo(R_1-\gamma_2C_1, R_2)-\gamma_2\big)\DHT{10}+\gamma_2\DQbar{2}\Big\},
\end{align*}
where $C_1$ is the channel capacity seen at the first user, during the second phase of transmission.  Combining the two bound we obtain the desired result.

%
\end{proof}
\section{Proof of Lemma \ref{lem:lb_shape}}\label{proof:lem:lb_shape}
\begin{IEEEproof}
The argument is similar to the proof of Theorem \ref{thm: Error Exp lowerbound three phase}. Here, the scheme is only two-phase: data transmission and a full  confirmation stage; implying that $R_{2,2}=0, \gamma_2=0$. Note that the probability of error is given by \eqref{eq: P_e in terms of P_eb 2}, where $P_{eb}$ is as in \eqref{eq:P_eb}:
\begin{align*}
P_{eb}&=P\left(\hat{\Theta}_{12}=00, \Theta_{12}\neq 00\right)=\sum_{a\in \{01,10,11\}}P(\Theta_{12}=a)P(\hat{\Theta}_{12}=00|\Theta_{12}=a).
\end{align*}
As argued, $P(\Theta_{12}=a$ can be made sufficiently small as long as the effective rates during the data transmission  $(R_1, R_2)$ are inside the feedback-capacity region. To ensure that, instead of the conditions in \eqref{eq:rates in phase 1}, we use the alternative expression of the capacity region given in Theorem \ref{thm:capacity C lambda}. Particularly, we require that 
\begin{align}\label{eq:gamma 1 condition}
\frac{1}{\gamma_1}( \lambda_1 R_1 + \lambda_2 R_2+\lambda_3(R_1+R_2)) \leq \mathcal{C}_{\underline{\lambda}},
\end{align}
for any $\lambda_1, \lambda_2, \lambda_3\geq 0$. Given that the rest of the argument is the same as in the proof of Theorem \ref{thm: Error Exp lowerbound three phase}. Particularly, we get the following bound which is the same as \eqref{eq: err exp lower bound gamma 3} but with $\gamma_2=0$ as stated in the beginning of this proof.
\begin{align*}
\frac{- \log P_e }{\EE[T]} \geq \gamma_3 \min\set{\DHT{01},\DHT{10},  \DHT{11}},
\end{align*}
If we optimize over the choice of $P_3$ in the scheme, the right-hand side of the above expression becomes $\gamma_3 D_{lb}$. 
Given that $\gamma_1+\gamma_3=1$, we optimize over all $\gamma_2$ that satisfy \eqref{eq:gamma 1 condition}. With that, the right-hand side becomes  $D_{lb} \Big( 1- \frac{\sum_{i=1}^3\lambda_i R_i}{\mathcal{C}_{\underline{\lambda}}}\Big)$.
Sine this inequality holds for any $\lambda_1, \lambda_2, \lambda_3\geq 0$, then we get the bound 
\begin{align}\label{eq: err exp lb lambda}
\frac{- \log P_e }{\EE[T]} \geq \sup_{\substack{\lambda_1, \lambda_2, \lambda_3\geq 0\\ \lambda_1+\lambda_2+\lambda_3=1}} D_{lb} \Big( 1- \frac{\sum_{i=1}^3\lambda_i R_i}{\mathcal{C}_{\underline{\lambda}}}\Big).
\end{align}
Next, we present this expression in polar coordination. By $(\| {R}\|, \theta_{ {R}})$ denote the polar coordinates of $(R_1,R_2)$ in $\RR^2$. 
First, note that the right-hand side of \eqref{eq: err exp lb lambda} equals to
\begin{align*}
 D_l\left( 1- \frac{\sum_{i=1}^3  \lambda^*_i R_i}{\mathcal{C}_{\underline{\lambda}^*}}\right),
\end{align*}
where $\underline{\lambda}^* = (\lambda_1^*,\lambda_2^*, \lambda_3^*)$ is the optimum choice. We show that given an arbitrary $\alpha>0$ and a rate pair $(R_1,R_2)$ in the capacity region, $\underline{\lambda}^* $  is the same as the one for $(\alpha R_1, \alpha R_2)$. To see this,  lets replace $(R_1, R_2)$ with $(\alpha R_1, \alpha R_2)$ for some constant $\alpha>0$. Then, the right-hand side of \eqref{eq: err exp lb lambda} for the new rates equals to
\begin{align*}
D_l \left( 1- \alpha \max_{\substack{\lambda_1, \lambda_2, \lambda_3 \geq 0\\ \lambda_1+\lambda_2+\lambda_3=1}} \frac{\sum_{i=1}^3  \lambda_i R_i}{C_{\underline{\lambda}}}\right) = D_l\left( 1- \alpha \frac{\sum_{i=1}^3  \lambda^*_i R_i}{C_{\underline{\lambda}^*}}\right),
\end{align*}
where the equality follows as $\alpha$ is a constant moving before the maximization over lambda. This means that the objective function for the maximization is the same as the previous one. This implies that there is an identical $\underline{\lambda}^*$ which optimizes the  expression for $(R_1,R_2)$ and $(\alpha R_1, \alpha R_2)$. Now, consider the line passing $(R_1,R_2)$ and the origin. Let $(R'_1, R'_2)$ denote the point of intersection of this line with the boundary of the capacity region. Fig. \ref{fig: R' and R} shows how $(R'_1, R'_2)$ is determined.  Since, $R'_i=\alpha R_i,i=1,2$ for some $\alpha>0$, then the optimum $\underline{\lambda}$ for $(R'_1, R'_2)$ is the same as the one for $(R_1, R_2)$. Therefore, from this argument and the fact that $R_i=\frac{R'_i}{\alpha}, i=1,2$, we can rewrite \eqref{eq: err exp lb lambda} as 
\begin{align*}
\frac{- \log P_e }{\EE[T]} \geq& \min_{\substack{\lambda_1, \lambda_2, \lambda_3 \geq 0\\ \lambda_1+\lambda_2+\lambda_3=1}}D_{lb}\left(1-\frac{1}{\alpha} \frac{\sum_{i=1}^3  \lambda_i R'_i}{C_{\underline{\lambda}}}\right)
\stackrel{(a)}{=}D_{lb}\left(1- \frac{1}{\alpha}\right) \stackrel{(b)}{=}D_{lb}\left(1- \frac{\|{R}\|}{\|{R'}\|}\right),
\end{align*}
where $(a)$ follows, since $(R'_1, R'_2)$ is on the capacity boundary, and (b) holds as $\alpha=\frac{\| {R'}\|}{\| {R}\|}$. Note that $\| {R}'\|$ depends on $(R_1,R_2)$ only through $\theta_{ {R}}$; in particular, it equals to $\mathcal{C}(\theta_{ {R}})$ which is a function of $\theta_{ {R}}$. With this notation, we get our desired lower bound as
$$\frac{- \log P_e }{\EE[T]} \geq D_{lb}\left(1- \frac{\|\underline{R}\|}{C(\theta_R)}\right).$$
\end{IEEEproof}
\end{appendix}

\bibliographystyle{imsart-number}
\bibliography{main}


\end{document}